\begin{document}

\title[Quasineutral limit of the Euler-Poisson equation]
{Quasineutral limit of the Euler-Poisson equation for a cold, ion-acoustic plasma}
\author[X. Pu and B. Guo]
{Xueke Pu, and Boling Guo}

\address{Xueke Pu \newline
Department of Mathematics, Chongqing University, Chongqing 400044, P.R. China; Mathematical Sciences Research Institute in Chongqing.}
\email{ xuekepu@cqu.edu.cn}

\address{Boling Guo \newline
Institute of Applied Physics and Computational Mathematics, P.O. Box 8009, Beijing, China, 100088.}
\email{gbl@iapcm.ac.cn}

\thanks{This work is supported by NSFC (grant 11001285).}
\subjclass[2000]{35Q35; 35B25; 35C20} \keywords{Euler-Poisson equation; Quasineutral limit; Compressible Euler equation}

\begin{abstract}
In this paper, we consider the quasineutral limit of the Euler-Poisson equation for a clod, ion-acoustic plasma when the Debye length tends to zero. When the ion-acoustic plasma is cold, the Euler-Poisson equation is pressureless and hence fails to be Friedrich symmetrisable, which excludes the application of the classical energy estimates method. This brings new difficulties in proving uniform estimates independent of $\varepsilon$. The main novelty in this article is to introduce new $\varepsilon$-weighted norms of the unknowns and to combine energy estimates in different levels with weights depending on $\varepsilon$. Finally, that the quasineutral regimes are the incompressible Euler equations is proven for well prepared initial data.
\end{abstract}

\maketitle \numberwithin{equation}{section}
\newtheorem{proposition}{Proposition}[section]
\newtheorem{theorem}{Theorem}[section]
\newtheorem{lemma}[theorem]{Lemma}
\newtheorem{remark}[theorem]{Remark}
\newtheorem{hypothesis}[theorem]{Hypothesis}
\newtheorem{definition}{Definition}[section]
\newtheorem{corollary}{Corollary}[section]
\newtheorem{assumption}{Assumption}[section]

\section{Introduction}
In this paper, we consider the Euler-Poisson equation for a clod, ion-acoustic plasma
\begin{subequations}\label{EP}
\begin{numcases}{(EP)\ \ }
\partial_tn+div(n\bf{u})=0,\label{EP-n}\\
\partial_t\bf{u}+\bf{u}\cdot\nabla \bf{u}=-\nabla\phi,\label{EP-u}\\
\varepsilon\Delta\phi=e^{\phi}-n,\label{EP-p}
\end{numcases}
\end{subequations}
where $n$ is the density of the ions, ${\bf{u}}=(u_1,\cdots,u_d)$ is the velocity field and $\phi$ is the electric potential. Here $e^{\phi}$ is the rescaled electron density by the famous Boltzmann relation and $\varepsilon\ll1$ is a small parameter representing the squared scaled Debye length $\varepsilon={\epsilon_0k_BT_e}/{N_0e^2L^2}$, where $\epsilon_0$ is the vacuum permittivity, $L$ is the characteristic observation length and $T_e$ is the average temperature of the electrons. For typical plasma applications, the Debye length is very small compared to the characteristic length of physical interest and it is therefore necessary to consider the limiting system when $\varepsilon\to 0$. For more physical background of the Euler-Poisson equation or the ion-acoustic plasma, one may refer to \cite{WT66,KT86}.

Formally, when letting $\varepsilon\to0$, we obtain from the third equation in \eqref{EP} that $\phi=\ln n$, and hence the following compressible Euler system
\begin{equation}\label{EQ}(EQ)\ \
\begin{cases}
\partial_tn+div(n{\bf u})=0,\\
\partial_t{\bf u}+{\bf u}\cdot\nabla {\bf u}+\nabla\ln n=0.
\end{cases}
\end{equation}
This limit system \eqref{EQ} is an hyperbolic symmetrisable system, whose classical result for the existence and uniqueness of sufficiently smooth solutions in small time interval is available in \cite{Majda84}. The system \eqref{EQ} have to be supplemented by suitable initial conditions. We shall assume that the plasma is uniform and electrically neutral near infinity, i.e., $n\to n^{\pm}$ and ${\bf u}\to0$ as $x\to\pm\infty$. More precisely, let $\tilde n$ be a smooth strictly positive function, constant outside $x\in [-1,+1]$, going to $n^{\pm}$ as $x\to\infty$. We assume that the initial conditions $(n_0^0,{\bf u}_0^0)$ satisfy
\begin{equation}\label{IC}
(n^0_0-\tilde n)\in H^s({\Bbb R^d}),\ \ u_0^0\in{\bf H}^s({\Bbb R^d}),\ \ n_0^0\geq\sigma>0,
\end{equation}
for some $s>3/2$ and some constant $\sigma>0$.

\begin{theorem}\label{thm1}
Let $(n_0^{0},{\bf u}_0^{0})\in H^{s'}\times H^{s'}$ be initial data with $s'>\frac d2+1$ and satisfy \eqref{IC}. Then there exists $T>0$, maximal time of existence and a solution $(n^0,{\bf u}^0)$ of \eqref{EQ} on $0\leq t<T$ with initial data $(n_0^{0},{\bf u}_0^{0})$ such that ${\bf u}^0$ and $n^0-\tilde n$ are in $L^{\infty}([0,T'])$ for every $T'<T$. Furthermore,
$$(n^0,{\bf u}^0)\in \left(C([0,T];H^{s'})\times C([0,T];{\bf H}^{s'})\right)\bigcap \left(C^1([0,T];H^{{s'}-1})\times C^1([0,T];{\bf H}^{{s'}-1})\right)$$
and $T$ depends only on $\|(n_0^{0},{\bf u}_0^{0})\|_{H^{s'}\times {\bf H}^{s'}}$.
\end{theorem}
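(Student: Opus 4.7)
\smallskip

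\noindent\textbf{Proof proposal.} The plan is to reduce \eqref{EQ} to a genuinely symmetric hyperbolic system and apply the classical local existence theory for such systems (e.g.\ Kato--Majda, as in \cite{Majda84}). The key observation is that the pressure term $\nabla\ln n$ is a gradient and, together with the continuity equation divided by $n$, produces a system with a constant symmetrizer.

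First I would perform the change of unknown $\rho=\ln n$, which is licit because the initial datum satisfies $n_0^0\geq\sigma>0$ and $n_0^0-\tilde n\in H^{s'}$, so by Moser estimates $\rho_0^0-\ln\tilde n\in H^{s'}(\mathbb{R}^d)$. Dividing \eqref{EP-n} by $n$ and keeping \eqref{EP-u} unchanged yields
\begin{equation*}
\partial_t\rho+\mathbf{u}\cdot\nabla\rho+\mathrm{div}\,\mathbf{u}=0,\qquad \partial_t\mathbf{u}+\mathbf{u}\cdot\nabla\mathbf{u}+\nabla\rho=0,
\end{equation*}
which in matrix form $\partial_t U+\sum_{j=1}^{d}A_j(U)\partial_j U=0$ has $A_j=\bigl(\begin{smallmatrix} u_j & e_j^{T}\\ e_j & u_j I_d\end{smallmatrix}\bigr)$, manifestly symmetric with the identity as symmetrizer.

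Next I would subtract the background by setting $\bar\rho=\rho-\ln\tilde n$ and writing the equations for $(\bar\rho,\mathbf{u})$. Since $\tilde n$ is smooth, strictly positive, and constant outside $[-1,1]$, the inhomogeneous terms coming from $\nabla\ln\tilde n$ and $\mathbf{u}\cdot\nabla\ln\tilde n$ are smooth and compactly supported, hence harmless for $H^{s'}$ energy estimates. With $s'>d/2+1$, Sobolev embedding gives $H^{s'}\hookrightarrow W^{1,\infty}$, so the standard Friedrichs mollification/iteration scheme produces a unique solution $(\bar\rho,\mathbf{u})\in C([0,T];H^{s'})\cap C^1([0,T];H^{s'-1})$ for some $T=T(\|(\bar\rho_0,\mathbf{u}_0^0)\|_{H^{s'}})>0$; the time regularity follows by reading $\partial_tU$ off the equation. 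Undoing the change of variable and using again Moser estimates returns $(n^0,\mathbf{u}^0)$ with the stated regularity, and $n^0$ stays strictly positive on $[0,T]$ by continuity.

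The one point that requires a small amount of care, rather than being a genuine obstacle, is accommodating the non-constant, non-decaying background $\tilde n$: because the two limits $n^\pm$ may differ, one cannot work directly with $n-\text{const}\in H^{s'}$ but must work with $n-\tilde n$, and then verify that each commutator $[\partial^\alpha,\tilde n\text{-dependent coefficient}]$ produced in the energy estimate is controlled by the compact support of $\nabla\tilde n$. Everything else is routine symmetric hyperbolic theory, and the maximal time of existence $T$ depends only on the $H^{s'}$ norm of the initial data, as asserted.
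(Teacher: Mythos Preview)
Your proposal is correct and is precisely the standard symmetrization argument that the paper implicitly invokes: the paper does not give a proof of Theorem~\ref{thm1} at all, but merely observes that \eqref{EQ} is a symmetrisable hyperbolic system and refers to the classical local well-posedness theory in \cite{Majda84}. Your change of variable $\rho=\ln n$, which turns the system into a genuinely symmetric one with identity symmetrizer, together with the subtraction of the background $\ln\tilde n$ and the use of Moser estimates to pass back and forth, is exactly the argument behind that citation, spelled out in appropriate detail.
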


\begin{remark}
For the maximal existence time $T$, either $T=\infty$ in the case of global existence, or $T<\infty$ and the solution blows up when $t\to T$:
\begin{equation}\label{blowup}
\limsup_{t\to T}(\|{\bf u}^0\|_{L^{\infty}} +\|n^0\|_{L^{\infty}}+\|\frac{1}{n^0}\|_{L^{\infty}} +\int_0^t(\|\partial_x{\bf u}^0\|_{L^{\infty}} +\|\partial_xn^0\|_{L^{\infty}})d\tau)=\infty.
\end{equation}
Therefore, we will work on a time interval $[0,T']$ for $T'<T$ (but arbitrary close to $T$) in order to insure $0<\sigma'<n^0(t,x)<\sigma''$ for all $(t,x)$, for some constants $\sigma',\sigma''>0$. Here, $\sigma'$ may approach to $0$ as $T'$ goes to $T$.
\end{remark}

Let us define $\phi^0=\ln n^0$. The main result in this paper is the following

\begin{theorem}\label{thm}
Let $s'\in \Bbb{N}$ with $s'>[\frac d2]+2$ be sufficiently large. Let $(n_0^{0},{\bf u}_0^{0})\in H^{s'}\times {\bf H}^{s'}$ and $(n^0,{\bf u}^0)$ be the solution of the limit system \eqref{EQ} on $[0,T)$ with initial data $(n_0^{0},{\bf u}_0^{0})$, given in Theorem \ref{thm1}. Then there exists solutions $(n^{\varepsilon}(t),{\bf u}^{\varepsilon}(t))$ of \eqref{EP} with the same initial data on $[0,T^{\varepsilon})$ with $\liminf_{\varepsilon\to0}T^{\varepsilon}\geq T$. Moreover, for every $T'<T$ and for every $\varepsilon$ small enough, $\varepsilon^{-1}(n^{\varepsilon}-n^0)$ and $\varepsilon^{-1}({\bf u}^{\varepsilon}-{\bf u}^0)$ are bounded in $L^{\infty}([0,T'];H^{s})$ and $L^{\infty}([0,T'];{\bf H}^{s})$, respectively, for some $s<s'$.
\end{theorem}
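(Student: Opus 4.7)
The plan is first to rewrite \eqref{EP} in the $(\phi^\varepsilon,{\bf u}^\varepsilon)$ variables by exploiting the Poisson relation $n^\varepsilon = e^{\phi^\varepsilon}-\varepsilon\Delta\phi^\varepsilon$. Substituting into \eqref{EP-n} and dividing by $e^{\phi^\varepsilon}$ produces
\[\partial_t\phi^\varepsilon + {\bf u}^\varepsilon\cdot\nabla\phi^\varepsilon + \nabla\cdot{\bf u}^\varepsilon = \varepsilon e^{-\phi^\varepsilon}\bigl[\partial_t\Delta\phi^\varepsilon + \nabla\cdot(\Delta\phi^\varepsilon{\bf u}^\varepsilon)\bigr],\]
which, together with the unchanged \eqref{EP-u}, has principal part a Friedrichs symmetric hyperbolic system whose formal $\varepsilon\to 0$ limit is exactly \eqref{EQ} expressed in $(\phi^0,{\bf u}^0)$ with $\phi^0=\ln n^0$, perturbed by a dispersive term that is $\varepsilon$-small but loses two spatial (and one time) derivatives. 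Standard quasilinear wave-type theory provides local existence of $(\phi^\varepsilon,{\bf u}^\varepsilon)$ on some $[0,T^\varepsilon)$ for each $\varepsilon>0$.

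Next I would introduce the rescaled remainders $(\Phi,{\bf U}):=\bigl((\phi^\varepsilon-\phi^0)/\varepsilon,\,({\bf u}^\varepsilon-{\bf u}^0)/\varepsilon\bigr)$. Subtracting the limit equations from the reformulated system and dividing by $\varepsilon$ yields
\[\partial_t\Phi+{\bf u}^0\cdot\nabla\Phi+{\bf U}\cdot\nabla\phi^0+\nabla\cdot{\bf U} = F_0 + \varepsilon e^{-\phi^\varepsilon}\bigl[\partial_t\Delta\Phi+{\bf u}^\varepsilon\cdot\nabla\Delta\Phi+\Delta\Phi\,\nabla\cdot{\bf u}^\varepsilon\bigr]-\varepsilon{\bf U}\cdot\nabla\Phi,\]
\[\partial_t{\bf U}+{\bf u}^0\cdot\nabla{\bf U}+{\bf U}\cdot\nabla{\bf u}^0+\nabla\Phi = -\varepsilon{\bf U}\cdot\nabla{\bf U},\]
with $F_0:=e^{-\phi^\varepsilon}[\partial_t\Delta\phi^0+\nabla\cdot(\Delta\phi^0{\bf u}^\varepsilon)]$ an $O(1)$ source bounded in terms of the smooth limit solution supplied by Theorem \ref{thm1}. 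Note that ${\bf U}(0)=0$, while $\Phi(0)$ is the bounded solution of an initial-layer elliptic problem $(n_0^0-\varepsilon\Delta)\Phi(0)=\Delta\phi^0(0)+O(\varepsilon)$.

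The crucial device is the $\varepsilon$-weighted energy
\[E_s(t):=\sum_{|\alpha|\leq s}\int\bigl(|\partial^\alpha\Phi|^2+|\partial^\alpha{\bf U}|^2+\varepsilon\,e^{-\phi^\varepsilon}|\nabla\partial^\alpha\Phi|^2\bigr)\,dx,\]
for an integer $s<s'$ large enough that $H^s\hookrightarrow L^\infty$ and the Moser calculus applies. Applying $\partial^\alpha$ to the remainder system and testing against $(\partial^\alpha\Phi,\partial^\alpha{\bf U})$, the cross terms from $\nabla\cdot{\bf U}$ and $\nabla\Phi$ cancel by Friedrichs symmetry; the transport part produces the standard $CE_s$ contributions; and the dispersive piece $\varepsilon e^{-\phi^\varepsilon}\partial_t\Delta\Phi$, after one integration by parts in $x$, manufactures exactly $\tfrac12\frac{d}{dt}\bigl(\varepsilon\int e^{-\phi^\varepsilon}|\nabla\partial^\alpha\Phi|^2\bigr)$---the very term built into $E_s$. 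The remaining spatial dispersive contribution $\varepsilon{\bf u}^\varepsilon\cdot\nabla\Delta\Phi$ is converted, via two careful integrations by parts, into $\varepsilon\|\nabla\partial^\alpha\Phi\|_{L^2}^2$-type quantities absorbable into $E_s$, with commutators $[\partial^\alpha,\cdot]$ handled by Moser product estimates.

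The main obstacle is exactly this closure against the dispersive correction, whose two extra spatial derivatives prevent the classical symmetric-hyperbolic scheme from absorbing it---the pressureless nature of the limit in the $(n,{\bf u})$ picture forbids a standard pressure-type symmetriser, which is the essential difficulty flagged in the abstract. The $\varepsilon$-weighted top-order term in $E_s$ resolves this by trading one factor of $\varepsilon$ for one extra derivative, at the price of only $\sqrt\varepsilon\,\|\Phi\|_{H^{s+1}}\leq C$ control. The resulting differential inequality $\frac{d}{dt}E_s\leq C(1+E_s)+C\sqrt\varepsilon\,P(E_s)$ (with $C$ depending only on $\|(\phi^0,{\bf u}^0)\|_{H^{s'}}$), combined with $E_s(0)\leq C$ and a bootstrap on $\|{\bf u}^\varepsilon-{\bf u}^0\|_{L^\infty}+\|\phi^\varepsilon-\phi^0\|_{L^\infty}$, delivers a uniform-in-$\varepsilon$ bound $E_s(t)\leq C(T')$ on any $[0,T']$ with $T'<T$. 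This precludes the blow-up criterion \eqref{blowup}, so $\liminf_{\varepsilon\to 0}T^\varepsilon\geq T$. The announced $H^s$ bound on $\varepsilon^{-1}(n^\varepsilon-n^0)$ then follows from the expansion $\varepsilon^{-1}(n^\varepsilon-n^0)=n^0(e^{\varepsilon\Phi}-1)/\varepsilon-\Delta\phi^0-\varepsilon\Delta\Phi$ and the fact that the trailing $\varepsilon\Delta\Phi$ is controlled in a one-derivative-lower norm by the $\sqrt\varepsilon$-weighted piece of $E_s$.
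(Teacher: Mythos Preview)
Your reformulation in the $(\phi,{\bf u})$ variables is a genuinely different route from the paper, and the core insight---that an $\varepsilon$-weighted extra-derivative energy is needed to absorb the dispersive correction---is exactly right. The paper instead keeps all three unknowns $(n^1,{\bf u}^1,\phi^1)$, treats the Poisson relation \eqref{rem-p} as an elliptic constraint (Lemma~\ref{lem2}), and runs \emph{two} levels of energy estimates: an unweighted one at order $s$ (Proposition~\ref{prop1}) and a separate $\varepsilon$-weighted one at order $s+1$ (Proposition~\ref{prop2}), which together propagate the triple norm
\[
|||({\bf u}^1,\phi^1)|||^2_{\varepsilon,s}
  = \|{\bf u}^1\|_{H^s}^2 + \varepsilon\|\nabla{\bf u}^1\|_{H^s}^2
  + \|\phi^1\|_{H^s}^2 + \varepsilon\|\nabla\phi^1\|_{H^s}^2 + \varepsilon^2\|\Delta\phi^1\|_{H^s}^2.
\]

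There is, however, a real gap in your proposal: the energy $E_s$ you write down is too weak to close. You include only $\varepsilon\|\nabla\partial^\alpha\Phi\|^2$, omitting both the $\varepsilon^2\|\Delta\Phi\|_{H^s}^2$ piece and the $\varepsilon\|\nabla{\bf U}\|_{H^s}^2$ piece that the paper's argument shows are essential. Concretely, the commutator $\varepsilon[\partial^\alpha,e^{-\phi^\varepsilon}]\partial_t\Delta\Phi$ tested against $\partial^\alpha\Phi$ forces you to control $\varepsilon\|\partial_t\Phi\|_{H^{s+1}}$. The natural elliptic inversion $(1-\varepsilon e^{-\phi^\varepsilon}\Delta)\partial_t\Phi=G$ gives only $\varepsilon^2\|\partial_t\Phi\|_{H^{s+1}}^2\le C\|G\|_{H^{s-1}}^2$, and $G$ contains $\varepsilon e^{-\phi^\varepsilon}{\bf u}^\varepsilon\cdot\nabla\Delta\Phi$, whose $H^{s-1}$ norm is $\sim\varepsilon\|\Phi\|_{H^{s+2}}$---not controlled by $E_s$. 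To close, you would need to add the missing $\varepsilon^2$-weighted second-gradient term for $\Phi$ (and, to propagate it, an $\varepsilon$-weighted extra derivative of ${\bf U}$), which in effect reproduces the paper's two-tier scheme. Your asserted differential inequality $\frac{d}{dt}E_s\le C(1+E_s)+C\sqrt\varepsilon\,P(E_s)$ is therefore not justified with the energy as written; the ``two careful integrations by parts'' and Moser commutator step you invoke do not neutralise the highest-order residues from $\partial_t\Delta\Phi$.
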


Without essential difficulties, we can show that the same result holds on the torus $\Bbb T=\Bbb R/\Bbb Z$ or equivalently on $[0,1]$ with periodic boundary conditions following the method in the present paper.

Before proving this theorem, we make several remarks on the background of the Euler-Poisson equation and the development of its quasineutral limit. The more general isothermal Euler-Poisson equation for ion-acoustic plasma has the following form
\begin{subequations}\label{EPP}
\begin{numcases}{}
\partial_tn+div(n\bf{u})=0,\label{EPP-n}\\
\partial_t{\bf u}+{\bf u}\cdot\nabla {\bf u}+\frac{T_i}{n}\nabla n=-\nabla\phi,\label{EPP-u}\\
\varepsilon\Delta\phi=e^{\phi}-n,\label{EPP-p}
\end{numcases}
\end{subequations}
where $T_i>0$ is the ion temperature. When $T_i=0$ (comparing with the electron temperature), this equation reduces to \eqref{EP} for the cold, ion-acoustic plasma. For \eqref{EPP} (with fixed $T_i>0$), Cordier and Grenier \cite{CG00} showed the quasineutral limit as $\varepsilon\to0$ by using the pseudodifferential energy estimates method of \cite{Gre97}. It is shown that, under suitable conditions, the solution of \eqref{EPP} converges to the following Euler equation as $\varepsilon\to0$
\begin{equation*}
\begin{cases}
\partial_tn+div(n{\bf u})=0,\\
\partial_t{\bf u}+{\bf u}\cdot\nabla {\bf u}+(T_i+1)\nabla\ln n=0.
\end{cases}
\end{equation*}
However, as far as we know, there is no quasineutral limit result for \eqref{EP} so far. The main difference between \eqref{EP} and \eqref{EPP} is that \eqref{EPP} has the pressure term $T_i\nabla\ln n$, which is crucial in proving the quasineutral result of the Euler-Poisson equation \eqref{EPP}. With this term, the hyperbolic part of \eqref{EPP} is Friedrich symmetrisable and the general framework of pseudodifferential operator energy estimates methods of Grenier \cite{Gre97} can be applied. One may refer to \cite{CG00} for more details of application of this method in treating the quasineutral limit of \eqref{EPP}. But without the pressure term, as is the case in the present paper, since the hyperbolic part (the equations \eqref{EP-n} and \eqref{EP-u}) of \eqref{EP} is not symmetrisable, the pseudodifferential energy method cannot apply and no quasineutral limit can be drawn without introducing new techniques.

For the Euler-Poisson equation \eqref{EPP}, Guo and Pausader \cite{GP11} constructed global smooth irrotational solutions with small amplitude for this equation with fixed $\varepsilon>0$ and $T_i>0$. Very recently, Guo and Pu \cite{GP13} derived the KdV equation from \eqref{EPP} for the full range of $T_i\geq0$, and Pu \cite{Pu13} derived the Kadomtsev-Petviashvili II equation and the Zakharov-Kuznetsov equation via the Gardner-Morikawa type transformations. Guo \emph{et al} \cite{GIP13} made a breakthrough for the Euler-maxwell two-fluid system in 3D and proved that irrotational, smooth and localized perturbations of a constant background with small amplitude lead to global smooth solutions. We also would like to remark that Loeper \cite{Loeper05} proved quasineutral limit results recently for the electron Euler-Poisson equation without pressure term and the Euler-Monge-Amp\`{e}re equation, whose method is different from ours and cannot be applied to our situation. For numerical studies for the pressureless Euler-Poisson equation \eqref{EP}, the reader may refer to a recent paper of Degond \emph{et al} \cite{DLSV10}, which analyzes various schemes for the Euler-Poisson-Boltzmann equation. For more results on the quasi-neutral limit results of the Euler-Poisson equation and related models, one may refer to various recent papers and the references therein, see \cite{Bre00, GHR11, Gre97a, JJLL10, SS01, Wang04, Wang08, WJ06, SLT01} to list only a few.

The next section is devoted to the proof of Theorem \ref{thm}. For this purpose, we write the solution of \eqref{EP} as $n^{\varepsilon}=n^0+\varepsilon n^1$ and ${\bf u}^{\varepsilon}={\bf u}^0+\varepsilon {\bf u}^1$ and consider the remainder system $(R_{\varepsilon})$ of $n^1$ and ${\bf u}^1$. The main idea is then to show that $(n^1,{\bf u}^1)$ is uniformly bounded in $H^s\times {\bf H}^s$ when $\varepsilon\to 0$. To overcome the difficulty of non-symmetrisability of \eqref{EP}, we introduce some triple norm $|||\cdot|||_{\varepsilon,s}$
\begin{equation}\label{trinorm}
\begin{split}
|||n^1|||^2_{\varepsilon,s}=&\|n^1\|^2_{H^s},\\
|||{\bf u}^1|||^2_{\varepsilon,s}=&\|{\bf u}^1\|^2_{H^s}+\varepsilon\|\nabla{\bf u}^1\|^2_{H^s},\\
|||\phi^1|||^2_{\varepsilon,s}=&\|\phi^1\|^2_{H^s}+\varepsilon\|\phi^1\|^2_{H^s} +\varepsilon^2\|\Delta\phi^1\|^2_{H^s},
\end{split}
\end{equation}
and then show that $|||(n^1,{\bf u}^1,\phi^1)|||_{\varepsilon,s}$ is uniformly bounded on some time interval independent of $\varepsilon$. The main novelty of the proof is then to combine the $s$-order energy estimates with the $(s+1)$-order energy estimates with weights 1 and $\varepsilon$. By such a combination, we obtain some Gronwall type inequality for $|||(n^1,{\bf u}^1,\phi^1)|||_{\varepsilon,s}$, which enables us to obtain uniform estimates independent of $\varepsilon$. This method could be useful in treating the quasineutral limit for the pressureless electron Euler-Poisson equations.

We introduce several notations. We let $L^p$ denote the usual Lebesgue space of $p$-th integrable functions normed by $\|\cdot\|_{L^p}$. When $p=2$, we usually use $\|\cdot\|$ instead of $\|\cdot\|_{L^2}$. The Sobolev space $H^s$, $s\in\Bbb Z$, $s\geq0$ is defined as $H^s(\Bbb R^d)=\{f(x):\sum_{|\alpha|\leq s}\|\partial^{\alpha}f\|^2<\infty\}$, where $\alpha=(\alpha_1,\cdots,\alpha_d)$ is a multi-index, $|\alpha|=\sum\alpha_i$ and $\partial^{\alpha}=\frac{\partial^{|\alpha|}}{\partial^{\alpha_1}_{x_1} \cdots\partial^{\alpha_d}_{x_d}}$. $H^s$ is a Banach space with norm $\|f\|_{H^s}=(\sum_{|\alpha|\leq s}\|\partial^{\alpha}f\|^2)^{1/2}.$ For definiteness, we will restrict ourselves to the physical space dimensions $d\leq 3$ in this paper.

\section{Proof of Theorem \ref{thm}}
\setcounter{equation}{0}
The purpose of this section is to prove Theorem \ref{thm}. Let $(n^{\varepsilon},{\bf u}^{\varepsilon},\phi^{\varepsilon})$ satisfy the Euler-Poisson equation \eqref{EP}, and $(n^0,{\bf u}^0,\phi^0)$ be a sufficiently smooth solution of the Euler equation \eqref{EQ}. We let
\begin{equation}\label{Expan}
n^{\varepsilon}=n^0+\varepsilon n^1,\ \ {\bf u}^{\varepsilon}={\bf u}^0+\varepsilon {\bf u}^1,\ \ \phi^{\varepsilon}=\phi^0+\varepsilon\phi^1.
\end{equation}
Here $\phi^{\varepsilon}$ and $n^{\varepsilon}$ satisfy the Poisson equation \eqref{EP-p} and indeed $\phi^{\varepsilon}$ can be solved via $\phi^{\varepsilon}=\phi^{\varepsilon}[n^{\varepsilon}]$ and ${\phi^0}=\ln n^0$. Then $(n^1,{\bf u}^1,\phi^1)$ satisfy the remainder system ($R_{\varepsilon}$):
\begin{subequations}\label{rem}
\begin{numcases}{(R_{\varepsilon})\ \ }
\partial_tn^1+\nabla\cdot(n^0{\bf u}^1+{\bf u}^0n^1)+\varepsilon\nabla\cdot(n^1{\bf u}^1)=0\label{rem-n}\\
\partial_t{\bf u}^1+{\bf u}^0\cdot\nabla {\bf u}^1+{\bf u}^1\cdot\nabla {\bf u}^0+\varepsilon {\bf u}^1\cdot\nabla {\bf u}^1=-\nabla\phi^1\label{rem-u}\\
-\varepsilon\Delta\phi^1=\Delta\phi^0+n^1-n^0\phi^1+\sqrt\varepsilon R^1,\label{rem-p}
\end{numcases}
\end{subequations}
where
\begin{equation}\label{R1}
R^1=\varepsilon^{-3/2}(n^0+\varepsilon n^0\phi^1-e^{\phi^0+\varepsilon\phi^1}).
\end{equation}

To prove Theorem \ref{thm}, we need only to derive some uniform bound for the remainder equation \eqref{rem}. To slightly simplify the presentation, we assume that \eqref{rem} has smooth solutions in a small time $T_{\varepsilon}$ dependent on $\varepsilon$. Let $\tilde C$ be a constant to be determined later, much larger than the bound of $\|(n^{1}_0,\textbf{u}^{1}_0,\phi^{1}_0)\|_{s}$, such that on $[0,T_{\varepsilon}]$
\begin{equation}\label{assumption}
\begin{split}
\sup_{[0,T_{\varepsilon}]}\|(n^{1},\textbf{u}^{1},\phi^{1})\|_{H^s}\leq \tilde C.
\end{split}
\end{equation}
We will prove that $T_{\varepsilon}>T$ as $\varepsilon\to 0$ for some $T>0$. Recalling the expressions for $n$ and $\textbf{u}$ in \eqref{Expan}, we immediately know that there exists some $\varepsilon_1=\varepsilon_1(\tilde C)>0$ such that on $[0,T_{\varepsilon}]$,
\begin{equation}\label{assumption1}
\begin{split}
\sigma'/2<n^{\varepsilon}<2\sigma'',\ \ \ |{\bf u}^{\varepsilon}|\leq 1/2,
\end{split}
\end{equation}
for all $0<\varepsilon<\varepsilon_1$.

\subsection{Estimates for $R_1$}
We first bound $R^1$ in terms of $\phi^1$. More precisely, we have the following
\begin{lemma}\label{lem1}
Let $(n^0,{\bf u}^0,\phi^0)$ be a sufficiently smooth solution of \eqref{EQ} by Theorem \ref{thm1}. Then for the remainder term \eqref{R1}, we have on $[0,T_{\varepsilon}]$
\begin{equation}\label{equ19}
\begin{split}
\|R^1\|_{H^{k}}\leq & C(\sqrt{\varepsilon}\tilde C)\|\phi^1\|_{H^{k}},\ \ \ and\\
\|\partial_tR^1\|_{H^k}\leq & C(\sqrt{\varepsilon}\tilde C)(\|\phi^1\|_{H^k}+\|\partial_t\phi^1\|_{H^{k}}),\ \ \forall k\geq0.
\end{split}
\end{equation}
In particular, there exists some $\varepsilon_1>0$ and $C_1=C(1)$ such that
\begin{equation}\label{equ20}
\begin{split}
\|R^1\|_{H^{k}}\leq &C_1\|\phi^1\|_{H^{k}},\ \ \ and\\
\|\partial_tR^1\|_{H^k}\leq & C_1(\|\phi^1\|_{H^k}+\|\partial_t\phi^1\|_{H^{k}}),\ \ \forall k\geq0,
\end{split}
\end{equation}
\end{lemma}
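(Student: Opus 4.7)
The key observation is that $\phi^0 = \ln n^0$ forces $e^{\phi^0} = n^0$, so the definition \eqref{R1} simplifies to
\begin{equation*}
R^1 = \varepsilon^{-3/2}\,n^0\bigl(1 + \varepsilon\phi^1 - e^{\varepsilon\phi^1}\bigr).
\end{equation*}
The plan is to exploit the fact that the bracket is the second-order Taylor remainder of the exponential at $0$, which gains exactly two powers of $\varepsilon$ and thereby converts the apparently singular prefactor $\varepsilon^{-3/2}$ into a factor $\sqrt{\varepsilon}$.

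First I would write the remainder in integral form,
\begin{equation*}
1 + \varepsilon\phi^1 - e^{\varepsilon\phi^1} = -(\varepsilon\phi^1)^2 G(\varepsilon\phi^1),\qquad G(y) := \int_0^1 (1-\theta)\,e^{\theta y}\,d\theta,
\end{equation*}
so that
\begin{equation*}
R^1 = -\sqrt{\varepsilon}\,n^0\,(\phi^1)^2\,G(\varepsilon\phi^1).
\end{equation*}
This is the core identity; everything else is Moser-type multiplication and composition in $H^k$.

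Next I would estimate $\|R^1\|_{H^k}$ by the standard tame product inequality, using that $n^0$ is a smooth solution of \eqref{EQ} (so its $H^k$ and $L^\infty$ norms are controlled by constants depending on the fixed time interval) and that, by Sobolev embedding for $s>d/2+1$, $\|\phi^1\|_{L^\infty}\lesssim \|\phi^1\|_{H^s}\le \tilde C$. Writing schematically
\begin{equation*}
\|f g h\|_{H^k} \lesssim \|f\|_{H^k}\|g\|_{L^\infty}\|h\|_{L^\infty} + \|f\|_{L^\infty}\|g\|_{H^k}\|h\|_{L^\infty} + \|f\|_{L^\infty}\|g\|_{L^\infty}\|h\|_{H^k},
\end{equation*}
I can apply this to $f = n^0$, $g = \phi^1$, and $h = \phi^1\, G(\varepsilon\phi^1)$. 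The composition $G(\varepsilon\phi^1)$ is handled by the standard composition estimate in $H^k$ for a smooth function evaluated at a bounded $H^k$ argument, giving $\|G(\varepsilon\phi^1)\|_{H^k} \le C(\varepsilon\|\phi^1\|_{L^\infty}) \le C(\sqrt{\varepsilon}\tilde C)$ after absorbing one power of $\phi^1$ as $L^\infty$. The net result is $\|R^1\|_{H^k}\le \sqrt{\varepsilon}\cdot C(\sqrt{\varepsilon}\tilde C)\|\phi^1\|_{H^k}$, which is actually better than the claimed bound; absorbing the leading $\sqrt{\varepsilon}$ into the constant yields the first inequality in \eqref{equ19}. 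The second inequality in \eqref{equ20} follows by choosing $\varepsilon_1$ so that $\sqrt{\varepsilon}\tilde C \le 1$.

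For the time-derivative bound, I would differentiate the factored expression and obtain three terms, each containing one of $\partial_t n^0$, $\partial_t\phi^1$, or $\varepsilon\,\partial_t\phi^1$ (from differentiating inside $G$). The first term is controlled by $\|\phi^1\|_{H^k}$ alone using that $\partial_t n^0$ is smooth and bounded, while the last two contribute $\|\partial_t\phi^1\|_{H^k}$ through the same Moser/composition argument as above. The only real subtlety is ensuring that every application of the composition estimate depends only on $\varepsilon\|\phi^1\|_{L^\infty}\le \sqrt{\varepsilon}\tilde C$ (not on $\tilde C$ alone), which is precisely what provides the smallness needed to reduce \eqref{equ19} to \eqref{equ20} for sufficiently small $\varepsilon_1$. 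I do not anticipate a serious obstacle; the only care point is bookkeeping of the $\varepsilon$-powers so that the $\varepsilon^{-3/2}$ is fully consumed by the quadratic Taylor remainder before the Moser estimates are applied.
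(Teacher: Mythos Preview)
Your proposal is correct and follows essentially the same route as the paper: the paper also writes $R^1=\varepsilon^{1/2}e^{\phi^0}\int_0^1 e^{\theta\varepsilon\phi^1}(1-\theta)\,d\theta\,(\phi^1)^2$ (your $-\sqrt{\varepsilon}\,n^0(\phi^1)^2 G(\varepsilon\phi^1)$, since $e^{\phi^0}=n^0$), then bounds this in $L^2$ and in $H^k$ via the same product/composition estimates together with the a priori bound $\|\phi^1\|_{L^\infty}\lesssim\tilde C$, and finally takes $\varepsilon_1=(1/\tilde C)^2$ so that $\sqrt{\varepsilon}\tilde C\le1$; the $\partial_t$ estimate is obtained by the same differentiation you describe. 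One small bookkeeping remark: the leading $\sqrt{\varepsilon}$ you isolate is not an extra gain---it pairs with the absorbed factor $\|\phi^1\|_{L^\infty}\lesssim\tilde C$ to produce precisely the combination $\sqrt{\varepsilon}\tilde C$ appearing in $C(\sqrt{\varepsilon}\tilde C)$, so the stated inequality is sharp in that respect.
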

for all $0<\varepsilon<\varepsilon_1$ and $t\in[0,T_{\varepsilon}]$.
\begin{proof}
From the Taylor expansion in the integral form, we have
\begin{equation*}
\begin{split}
R^1
=&\varepsilon^{1/2}e^{\phi^0}\int_0^1e^{\theta\varepsilon\phi^1} (1-\theta)d\theta(\phi^1)^2.
\end{split}
\end{equation*}
By taking $L^2$ norm, we have
\begin{equation*}
\|R^1\|\leq \sqrt{\varepsilon}\|e^{\phi^0}\|_{L^{\infty}}e^{\varepsilon \|\phi^1\|_{L^{\infty}}}\|\phi^1\|_{L^2}\|\phi^1\|_{L^{\infty}}.
\end{equation*}
From the continuity assumption \eqref{assumption}, we have $\|\phi^1\|_{L^{\infty}}\leq C\tilde C$ on $[0,T_{\varepsilon}]$ and hence
\begin{equation*}
\|R^1\|_{L^2}\leq C(\sqrt{\varepsilon}\tilde C)\|\phi^1\|_{L^2},\ \ \ \forall t\in [0,T_{\varepsilon}].
\end{equation*}


By applying $\partial^{\alpha}$ with $|\alpha|=k$, $k\geq 1$ integers, similar estimates yield
\begin{equation*}
\|R^1\|_{H^k}\leq C(\sqrt{\varepsilon}\tilde C)\|\phi^1\|_{H^k}.
\end{equation*}
Taking $\partial_t$ to $R^1$ and then taking the $H^k$ norm, we obtain
\begin{equation*}
\|\partial_tR^1\|_{H^k}\leq C(\sqrt{\varepsilon}\tilde C)(\|\phi^1\|_{H^k}+\|\partial_t\phi^1\|_{H^{k}}).
\end{equation*}
Finally, choosing $\varepsilon_1=(1/\tilde C)^2$ yields \eqref{equ20}.
\end{proof}

\subsection{Elliptical estimates}
The following lemmas provide useful estimates between $n^1$, ${\bf u}^1$ and $\phi^1$. These will be used widely in the uniform estimates in the next subsection.
\begin{lemma}\label{lem2}
Let $(n^1,{\bf u}^1,\phi^1)$ be a smooth solution for the remainder system ($R_{\varepsilon}$), and $\alpha$ be a multiindex. There exist $\varepsilon_1$ and $C$ such that for any $0<\varepsilon<\varepsilon_1$ and any multiindices $\alpha$ with $|\alpha|=k\geq0$, there hold
\begin{equation*}
\begin{split}
&\|\partial_x^{\alpha}n^1\|^2\leq C+C\|\phi^1\|_{H^k}^2+C\varepsilon^2\|\Delta\phi^1\|_{H^k}^2,\ \ \ and\\
&\|\partial_x^{\alpha}\phi^1\|^2 +\varepsilon\|\partial_x^{\alpha}\nabla\phi^1\| +\varepsilon^2\|\Delta\partial_x^{\alpha}\phi^1\|^2\leq C+C\|\partial_x^{\alpha}n^1\|^2,
\end{split}
\end{equation*}
on the interval $[0,T_{\varepsilon}]$.
\end{lemma}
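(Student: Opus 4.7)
The natural starting point is to rewrite the third equation of $(R_\varepsilon)$ as the elliptic identity
\begin{equation*}
n^{0}\phi^{1}-\varepsilon\Delta\phi^{1}=n^{1}+\Delta\phi^{0}-\sqrt{\varepsilon}\,R^{1}\,\equiv\,F.
\end{equation*}
Both sides are smooth on $[0,T_\varepsilon]$. From Theorem \ref{thm1} the background $(n^{0},\phi^{0})$ is uniformly bounded in every $H^{m}$ on $[0,T']$ ($T'<T$), and from the lower bound $n^{0}\geq\sigma'/2$ the operator $-\varepsilon\Delta+n^{0}$ is uniformly coercive. Lemma \ref{lem1} moreover furnishes $\|R^{1}\|_{H^{k}}\leq C_{1}\|\phi^{1}\|_{H^{k}}$, so the prefactor $\sqrt{\varepsilon}$ on $R^{1}$ will always let us absorb that term on the right.

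For the first inequality I would simply solve the displayed identity for $n^{1}$ and apply $\partial^{\alpha}$ with $|\alpha|=k$. Taking $L^{2}$ norms and using the triangle inequality,
\begin{equation*}
\|\partial^{\alpha}n^{1}\|\,\leq\,\|\partial^{\alpha}(n^{0}\phi^{1})\|+\varepsilon\|\partial^{\alpha}\Delta\phi^{1}\|+\|\partial^{\alpha}\Delta\phi^{0}\|+\sqrt{\varepsilon}\,\|\partial^{\alpha}R^{1}\|.
\end{equation*}
The third term is controlled by a constant depending only on the background; the first by the Moser product estimate against $\|n^{0}\|_{H^{k}\cap L^{\infty}}\,\|\phi^{1}\|_{H^{k}}$; and the fourth by Lemma \ref{lem1}. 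Squaring and summing yields the stated bound $\|\partial^{\alpha}n^{1}\|^{2}\leq C+C\|\phi^{1}\|_{H^{k}}^{2}+C\varepsilon^{2}\|\Delta\phi^{1}\|_{H^{k}}^{2}$.

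For the second inequality I would proceed by induction on $k$ using the standard energy method for the elliptic operator $-\varepsilon\Delta+n^{0}$. Apply $\partial^{\alpha}$ to the equation $-\varepsilon\Delta\phi^{1}+n^{0}\phi^{1}=F$, multiply by $\partial^{\alpha}\phi^{1}$, and integrate. Integration by parts turns the Laplacian into $\varepsilon\|\nabla\partial^{\alpha}\phi^{1}\|^{2}$; separating the commutator $[\partial^{\alpha},n^{0}]\phi^{1}$ produces $\int n^{0}(\partial^{\alpha}\phi^{1})^{2}\geq(\sigma'/2)\|\partial^{\alpha}\phi^{1}\|^{2}$ plus remainder terms containing only derivatives of $\phi^{1}$ of order $\leq k-1$, which are handled by the inductive hypothesis. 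The right-hand side splits into contributions from $n^{1}$, $\Delta\phi^{0}$ and $\sqrt{\varepsilon}R^{1}$; each is paired with $\partial^{\alpha}\phi^{1}$ via Cauchy--Schwarz and the weight $\sqrt{\varepsilon}$ combined with Lemma \ref{lem1} makes the $R^{1}$ contribution absorbable on the left once $\varepsilon$ is small. Young's inequality then gives
\begin{equation*}
\|\partial^{\alpha}\phi^{1}\|^{2}+\varepsilon\|\nabla\partial^{\alpha}\phi^{1}\|^{2}\leq C+C\|\partial^{\alpha}n^{1}\|^{2}.
\end{equation*}
Finally, to recover the $\varepsilon^{2}\|\Delta\partial^{\alpha}\phi^{1}\|^{2}$ piece I would read it off algebraically from the equation itself, $\varepsilon\Delta\phi^{1}=n^{0}\phi^{1}-n^{1}-\Delta\phi^{0}+\sqrt{\varepsilon}R^{1}$: apply $\partial^{\alpha}$, take $L^{2}$, square, and substitute the $\|\partial^{\alpha}\phi^{1}\|$ bound just established together with Lemma \ref{lem1}.

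The chief technical obstacle is the bookkeeping of the commutators $[\partial^{\alpha},n^{0}]\phi^{1}$ at the top order: they must be closed either by induction on $k$ or by absorbing into $\|\phi^{1}\|_{H^{k-1}}$ via Moser, without losing the factor $\varepsilon$ on the second derivative term. The rest of the argument is standard once the coercivity $n^{0}\geq\sigma'/2$ and the smallness of $\sqrt{\varepsilon}R^{1}$ are exploited.
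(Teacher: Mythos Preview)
Your proposal is correct and follows essentially the same energy-method argument as the paper: test the elliptic equation against $\partial^{\alpha}\phi^{1}$, use the coercivity $n^{0}\geq\sigma'$ together with Lemma \ref{lem1} to absorb the $\sqrt{\varepsilon}\,R^{1}$ contribution, and solve the equation for $n^{1}$ to obtain the reverse bound. The one minor deviation is in how the top piece $\varepsilon^{2}\|\Delta\partial^{\alpha}\phi^{1}\|^{2}$ is obtained: the paper gets it by a second testing, pairing the equation with $\varepsilon\Delta\partial^{\alpha}\phi^{1}$ (which simultaneously reproduces the $\varepsilon\|\nabla\partial^{\alpha}\phi^{1}\|^{2}$ term), whereas you read it off algebraically from the equation and feed in the $\|\partial^{\alpha}\phi^{1}\|$ bound already established. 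Both routes are standard and close equally well; your algebraic shortcut is slightly more direct, while the paper's second testing keeps the argument uniform in style. One small bookkeeping point: the commutator $[\partial^{\alpha},n^{0}]\phi^{1}$ feeds back lower-order $\phi^{1}$ norms, so after applying the inductive hypothesis the right-hand side naturally becomes $C+C\|n^{1}\|_{H^{k}}^{2}$ rather than $C+C\|\partial^{\alpha}n^{1}\|^{2}$ alone; this is consistent with how the estimate is actually used later in the paper.
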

\begin{proof}
Taking the $L^2$ inner product of \eqref{rem-p} with $\phi^1$ and then integrating by parts yield
\begin{equation*}
\varepsilon\|\nabla\phi^1\|^2+\int n^0|\phi^1|^2=\int \phi^1\Delta\phi^0+\int n^1\phi^1+\sqrt\varepsilon\int\phi^1R^1.
\end{equation*}
Hereafter, $\int=\int_{\Bbb R^d}\cdots dx$. As $n^0>\sigma'>0$ for $t\leq T'<T$, we obtain by Young's inequality
\begin{equation*}
\varepsilon\|\nabla\phi^1\|^2+\sigma'\|\phi^1\|^2\leq \frac{\sigma'}{4}\|\phi^1\|^2 +\frac{4C}{\sigma'}(\|\Delta\phi^0\|^2+\|n^1\|^2+\varepsilon\|R^1\|^2).
\end{equation*}
From Lemma \ref{lem1}, there exists $\varepsilon_1>0$ such that $\|R^1\|\leq C_1\|\phi^1\|$. Then by choosing a new smaller $\varepsilon_1$ such that $\varepsilon_1\leq \frac{\sigma'^2}{16CC_1}$, we then have for any $0<\varepsilon<\varepsilon_1$ that
\begin{equation}\label{equ1}
\varepsilon\|\nabla\phi^1\|^2+\sigma'\|\phi^1\|^2\leq \frac{C}{\sigma'}(1+\|n^1\|^2).
\end{equation}
Similarly, by taking the $L^2$ inner product of \eqref{rem-p} with $\varepsilon\Delta\phi^1$ and integrating by parts, we obtain that
\begin{equation*}
\begin{split}
\varepsilon^2\|\Delta\phi^1\|^2&+\varepsilon\int n^0|\nabla\phi^1|^2 =-\varepsilon\int\Delta\phi^0\Delta\phi^1+\varepsilon\int n^1\Delta\phi^1\\
&-\varepsilon\int \nabla n^0\phi^1\nabla\phi^1+\varepsilon^{3/2}\int R^1\phi^1\\
\leq & \frac{\varepsilon^2}{2}\|\Delta\phi^1\|^2+ 4(\|\Delta\phi^0\|^2+\|n^1\|^2+\varepsilon\|R^1\|^2 +\varepsilon\|\phi^1\|^2+\varepsilon\|\nabla\phi^1\|^2),
\end{split}
\end{equation*}
which yields for any $0<\varepsilon<\varepsilon_1$ for some $\varepsilon_1>0$ that
\begin{equation}\label{equ2}
\varepsilon^2\|\Delta\phi^1\|^2+\varepsilon\sigma'\|\nabla\phi^1\|^2\leq C(1+\|n^1\|^2),
\end{equation}
as $n^0>\sigma'>0$ for $t\leq T'<T$, where the constant $C$ depends on $\sigma'$. By combining \eqref{equ1} and \eqref{equ2} together, we easily obtain that for any $0<\varepsilon<\varepsilon_1$:
\begin{equation*}
\varepsilon^2\|\Delta\phi^1\|^2+\varepsilon\sigma'\|\nabla\phi^1\|^2 +\|\phi^1\|^2\leq C(1+\|n^1\|^2),
\end{equation*}
for some constant $C$ depending on $\sigma'$. On the other hand, by taking the $L^2$ norm of \eqref{rem-p}, we obtain that for any $0<\varepsilon<\varepsilon_1$:
\begin{equation*}
\begin{split}
\|n^1\|^2\leq & \varepsilon^2\|\Delta\phi^1\|^2 +\|n^0\|_{L^{\infty}}^2\|\phi^1\|^2+\|\Delta\phi^0\|^2+\varepsilon\|R^1\|^2\\
\leq & C(1+\|\phi^1\|^2+\varepsilon^2\|\Delta\phi^1\|^2),
\end{split}
\end{equation*}
thanks again to Lemma \ref{lem1}, where $C$ depends on $\sigma''$ and $C_1$. Therefore, we finishes the proof when $k=0$. Higher order estimates can be handled similarly, and we omit further details.
\end{proof}

\begin{lemma}\label{lem3}
Let $(n^1,{\bf u}^1,\phi^1)$ be a smooth solution for the remainder system ($R_{\varepsilon}$), and $\alpha$ be an integer. There exist $\varepsilon_1$ and $C$ such that for any $0<\varepsilon<\varepsilon_1$ and any multiindices $\alpha$ with $|\alpha|=k$, there holds
\begin{equation}\label{equ3}
\begin{split}
\|\partial^{\alpha}\partial_tn^1\|^2\leq C(1+\|{\bf u}^1\|_{{\bf H}^{k+1}}^2 +\|\phi^1\|^2_{H^{k+1}}+\varepsilon^2\|\Delta\phi^1\|^2_{H^{k+1}}),
\end{split}
\end{equation}
on the time interval $[0,T_{\varepsilon}]$.
\end{lemma}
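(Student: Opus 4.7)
The plan is to read the time derivative off of the continuity equation \eqref{rem-n}, namely
\[
\partial_t n^1 = -\nabla\cdot(n^0 {\bf u}^1) - \nabla\cdot({\bf u}^0 n^1) - \varepsilon\,\nabla\cdot(n^1 {\bf u}^1),
\]
apply $\partial^{\alpha}$ with $|\alpha|=k$, and estimate the $L^2$ norm of each resulting term. The first two are linear in the remainder unknowns and the coefficients $n^0,{\bf u}^0$ are controlled by Theorem \ref{thm1}; in particular $n^0-\tilde n$ and ${\bf u}^0$ are bounded in $H^{s'}$ uniformly on $[0,T']$, with $s' >[d/2]+2$, so Moser-type product estimates give
\[
\|\partial^{\alpha}\nabla\cdot(n^0{\bf u}^1)\| + \|\partial^{\alpha}\nabla\cdot({\bf u}^0 n^1)\|
\le C\bigl(\|{\bf u}^1\|_{{\bf H}^{k+1}} + \|n^1\|_{H^{k+1}} + 1\bigr),
\]
where the constant absorbs the (uniform) norms of $n^0,{\bf u}^0$ and $\tilde n$.

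For the quadratic term $\varepsilon\,\partial^{\alpha}\nabla\cdot(n^1{\bf u}^1)$ I would use the bootstrap assumption \eqref{assumption} together with the Moser inequality
\[
\|\partial^{\alpha}\nabla\cdot(n^1 {\bf u}^1)\| \le C\bigl(\|n^1\|_{L^\infty}\|{\bf u}^1\|_{H^{k+1}} + \|{\bf u}^1\|_{L^\infty}\|n^1\|_{H^{k+1}}\bigr),
\]
and since $s>d/2$ yields $\|n^1\|_{L^\infty}+\|{\bf u}^1\|_{L^\infty}\le C\tilde C$ on $[0,T_{\varepsilon}]$, this contribution is bounded by $\varepsilon C(\tilde C)(\|{\bf u}^1\|_{{\bf H}^{k+1}}+\|n^1\|_{H^{k+1}})$, which for $\varepsilon<\varepsilon_1$ small is dominated by the same right-hand side as above.

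The final step is to convert the $\|n^1\|_{H^{k+1}}^2$ contribution into the claimed right-hand side. Here I simply invoke Lemma \ref{lem2} at order $k+1$, which gives
\[
\|n^1\|_{H^{k+1}}^2 \le C\bigl(1 + \|\phi^1\|_{H^{k+1}}^2 + \varepsilon^2\|\Delta\phi^1\|_{H^{k+1}}^2\bigr).
\]
Squaring the pointwise-in-time bound on $\|\partial^{\alpha}\partial_t n^1\|$ and combining gives exactly \eqref{equ3}. There is no real obstacle here: the lemma is essentially an immediate corollary of \eqref{rem-n} combined with Lemma \ref{lem2}. The only point requiring minor care is the quadratic nonlinearity $\varepsilon\nabla\cdot(n^1{\bf u}^1)$, which is handled by the $\varepsilon$ prefactor together with the $L^\infty$ control coming from the bootstrap \eqref{assumption} and the Sobolev embedding $H^s\hookrightarrow L^\infty$.
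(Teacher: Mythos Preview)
Your proposal is correct and follows essentially the same approach as the paper: differentiate the continuity equation \eqref{rem-n}, estimate the linear terms via the known regularity of $(n^0,{\bf u}^0)$ and product/commutator estimates, control the quadratic term using the $\varepsilon$ prefactor together with the bootstrap bound \eqref{assumption} and Sobolev embedding, and finally invoke Lemma \ref{lem2} to replace $\|n^1\|_{H^{k+1}}^2$ by the $\phi^1$-norms. The paper's proof differs only cosmetically (it squares first and writes the nonlinear contribution as $\varepsilon^2(\|n^1\|_{L^\infty}^2+\|{\bf u}^1\|_{L^\infty}^2)$), but the content is the same.
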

\begin{proof}
We take the $L^2$ norm of \eqref{rem-n} to obtain
\begin{equation*}
\begin{split}
\|\partial_tn^1\|^2\leq & C({\bf u}^1\|_{{\bf H}^1}^2+\|{n}^1\|_{H^1}^2) +\varepsilon^2(\|n^1\|_{L^{\infty}}^2\|\nabla{\bf u}^1\|^2 +\|{\bf u}^1\|_{L^{\infty}}^2\|\nabla{n}^1\|^2)\\
\leq & C(1+\varepsilon^2(\|n^1\|_{L^{\infty}}^2+\|{\bf u}^1\|_{L^{\infty}}^2))(\|{\bf u}^1\|_{{\bf H}^1}^2+\|{n}^1\|_{H^1}^2),
\end{split}
\end{equation*}
for some constant $C$ depending on $(n^0,{\bf u}^0)$. By the continuity assumption \eqref{assumption} and Lemma \ref{lem2}, we have
\begin{equation*}
\begin{split}
\|\partial_tn^1\|^2\leq & C(1+\varepsilon^2\tilde C)(1+\|{\bf u}^1\|_{{\bf H}^1}^2+\|\phi^1\|^2_{H^1}+\varepsilon^2\|\Delta\phi^1\|^2_{H^1}),\\
\leq & C(1+\|{\bf u}^1\|_{{\bf H}^1}^2+\|\phi^1\|^2_{H^1}+\varepsilon^2\|\Delta\phi^1\|^2_{H^1})
\end{split}
\end{equation*}
for any $0<\varepsilon<\varepsilon_1$, for some $\varepsilon_1>0$.

Higher order inequalities are proved similarly. Taking $\partial^{\alpha}$ with $|\alpha|=k\geq1$ to the equation \eqref{rem-n}, and then taking the $L^2$ norm to obtain
\begin{equation*}
\begin{split}
\|\partial^{\alpha}\partial_tn^1\|^2\leq & C({\bf u}^1\|_{{\bf H}^{k+1}}^2+\|{n}^1\|_{H^{k+1}}^2) +\varepsilon^2(\|n^1\|_{L^{\infty}}^2\|\partial^{\alpha}\nabla{\bf u}^1\|^2 +\|{\bf u}^1\|_{L^{\infty}}^2\|\partial^{\alpha}\nabla{n}^1\|^2)\\
\leq & C(1+\varepsilon^2(\|n^1\|_{L^{\infty}}^2+\|{\bf u}^1\|_{L^{\infty}}^2))(\|{\bf u}^1\|_{{\bf H}^{k+1}}^2+\|{n}^1\|_{H^{k+1}}^2)\\
\leq & C(1+\|{\bf u}^1\|_{{\bf H}^{k+1}}^2+\|\phi^1\|^2_{H^{k+1}}+\varepsilon^2\|\Delta\phi^1\|^2_{H^{k+1}}),
\end{split}
\end{equation*}
for any $0<\varepsilon<\varepsilon_1$, for some $\varepsilon_1>0$, where we have used the multiplicative estimates in Lemma \ref{Le-inequ}.
\end{proof}

\begin{lemma}\label{lem4}
Let $(n^1,{\bf u}^1,\phi^1)$ be a smooth solution for the remainder system ($R_{\varepsilon}$), and $\alpha$ be an integer. There exist $\varepsilon_1$ and $C$ such that for any $0<\varepsilon<\varepsilon_1$ and any multiindices $\alpha$ with $|\alpha|=k$,
\begin{equation*}
\begin{split}
\|\partial^{\alpha}\partial_t\phi^1\|^2 +\varepsilon\|\partial^{\alpha}\nabla\partial_t\phi^1\|^2 +\varepsilon^2\|\partial^{\alpha}\Delta\partial_t\phi^1\|^2 \leq C(1+\|\partial^{\alpha}\partial_tn^1\|^2 +\|\phi^1\|_{H^k}^2),
\end{split}
\end{equation*}
on the time interval $[0,T_{\varepsilon}]$.
\end{lemma}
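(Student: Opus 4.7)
The plan is to mimic the proof of Lemma \ref{lem2}, but applied to the time-differentiated Poisson equation rather than to \eqref{rem-p} itself. Differentiating \eqref{rem-p} in $t$ gives
\begin{equation*}
-\varepsilon\Delta\partial_t\phi^1=\Delta\partial_t\phi^0+\partial_tn^1-\partial_tn^0\,\phi^1-n^0\partial_t\phi^1+\sqrt\varepsilon\,\partial_tR^1,
\end{equation*}
which has exactly the same structure as \eqref{rem-p}, with $n^1$ replaced by $\partial_tn^1-\partial_tn^0\,\phi^1$ and $R^1$ replaced by $\partial_t R^1$. The crucial point is that the ``bad'' coefficient $-n^0$ in front of $\partial_t\phi^1$ is the same, so the lower bound $n^0>\sigma'>0$ still gives us coercivity.

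First I would apply $\partial^\alpha$ with $|\alpha|=k$ to this equation and take the $L^2$ inner product with $\partial^\alpha\partial_t\phi^1$. Integration by parts on the LHS yields $\varepsilon\|\nabla\partial^\alpha\partial_t\phi^1\|^2$, while the term $-\int \partial^\alpha(n^0\partial_t\phi^1)\cdot\partial^\alpha\partial_t\phi^1$ produces the good coercive piece $-\int n^0|\partial^\alpha\partial_t\phi^1|^2$ plus commutator terms $[\partial^\alpha,n^0]\partial_t\phi^1$ that are controlled by $\|\partial_t\phi^1\|_{H^{k-1}}$ via Moser-type estimates (using that $n^0\in H^{s'}$ from Theorem \ref{thm1}). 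The remaining RHS terms are handled by Young's inequality: $\partial^\alpha\Delta\partial_t\phi^0$ contributes a bounded constant, $\partial^\alpha\partial_tn^1$ contributes the main $\|\partial^\alpha\partial_tn^1\|^2$ term, the product $\partial^\alpha(\partial_tn^0\,\phi^1)$ is bounded by $C\|\phi^1\|_{H^k}^2$ using again that $\partial_tn^0$ is smooth, and the $\sqrt\varepsilon\,\partial_tR^1$ term is estimated by Lemma \ref{lem1} as
\begin{equation*}
\sqrt\varepsilon\|\partial^\alpha\partial_tR^1\|\le C_1\sqrt\varepsilon\bigl(\|\phi^1\|_{H^k}+\|\partial_t\phi^1\|_{H^k}\bigr),
\end{equation*}
whose $\|\partial_t\phi^1\|_{H^k}^2$ contribution is absorbed into $\sigma'\|\partial^\alpha\partial_t\phi^1\|^2$ by taking $\varepsilon_1$ small enough, exactly as in the proof of \eqref{equ1}. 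This gives the first two pieces of the claimed estimate, namely a bound on $\|\partial^\alpha\partial_t\phi^1\|^2+\varepsilon\|\nabla\partial^\alpha\partial_t\phi^1\|^2$.

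Next, to recover the $\varepsilon^2\|\Delta\partial^\alpha\partial_t\phi^1\|^2$ piece, I would repeat the procedure but now test with $\varepsilon\Delta\partial^\alpha\partial_t\phi^1$, paralleling the derivation of \eqref{equ2}. This produces $\varepsilon^2\|\Delta\partial^\alpha\partial_t\phi^1\|^2$ on the LHS and a positive $\varepsilon\int n^0|\nabla\partial^\alpha\partial_t\phi^1|^2$, at the cost of having all RHS terms multiplied by $\varepsilon$; after Young's inequality the $\varepsilon^2\|\Delta\partial^\alpha\partial_t\phi^1\|^2$ piece is half-absorbed, and the $\varepsilon\sqrt\varepsilon\,\partial_tR^1$ contribution is again controlled via Lemma \ref{lem1} and absorption. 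Combining the two inequalities then yields the full bound.

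The main obstacle I anticipate is the appearance of $\|\partial_t\phi^1\|_{H^k}$ on the right through $\partial_tR^1$ (Lemma \ref{lem1}); this is precisely what forces the smallness assumption $\varepsilon_1\le(1/\tilde C)^2$ and the careful absorption into the coercive term $\sigma'\|\partial^\alpha\partial_t\phi^1\|^2$. The commutator estimates $[\partial^\alpha,n^0]\cdot$ and product estimates $\partial^\alpha(\partial_tn^0\,\phi^1)$ are routine via the Moser-type Lemma \ref{Le-inequ} referenced in Lemma \ref{lem3}, since $n^0$ and $\partial_tn^0$ are smooth thanks to Theorem \ref{thm1}. All other ingredients are completely parallel to Lemma \ref{lem2}, so I would present the $k=0$ case in detail and indicate that higher $k$ follow by the same argument.
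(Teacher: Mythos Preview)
Your proposal is correct and follows essentially the same route as the paper: differentiate \eqref{rem-p} in $t$, test first with $\partial_t\phi^1$ (resp.\ $\partial^\alpha\partial_t\phi^1$) to get coercivity from $n^0>\sigma'$, then with $\varepsilon\Delta\partial_t\phi^1$, absorb the $\sqrt\varepsilon\,\partial_tR^1$ contribution via Lemma~\ref{lem1} by choosing $\varepsilon_1$ small, and add. The paper presents the $k=0$ case in detail and declares the higher $k$ analogous, exactly as you plan; your explicit mention of the commutator $[\partial^\alpha,n^0]\partial_t\phi^1$ is the only extra detail, and it is handled just as you say.
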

\begin{proof}
Taking $\partial_t$ of \eqref{rem-p}, we obtain
\begin{equation*}
-\varepsilon\Delta\partial_t\phi^1=\Delta\partial_t\phi^0+\partial_tn^1 -\partial_t(n^0\phi^1)+\sqrt\varepsilon\partial_tR^1.
\end{equation*}
Taking $L^2$ inner product with $\partial_t\phi^1$ and then integrating by parts, we obtain
\begin{equation*}
\begin{split}
\varepsilon\|\nabla\partial_t\phi^1\|^2+\int n^0|\partial_t\phi^1|^2=&\int\partial_t\phi^1(\Delta\partial_t\phi^0 -\partial_tn^0\phi^1+\partial_tn^1+\sqrt\varepsilon\partial_tR^1).
\end{split}
\end{equation*}
As $n^0>\sigma'>0$ for $t\leq T'<T$, we have by H\"older inequality
\begin{equation*}
\begin{split}
\varepsilon\|\nabla\partial_t\phi^1\|^2&+\sigma'\|\partial_t\phi^1\|^2 \leq \frac{\sigma'}{4}\|\partial_t\phi^1\|^2\\
&+\frac{4}{\sigma'}\big(\|\Delta\partial_t\phi^0\|^2 +\|\partial_tn^0\|_{L^{\infty}}^2\|\phi^1\|^2+\|\partial_tn^1\|^2 +\varepsilon\|\partial_tR^1\|^2\big)\\
\leq & \frac{\sigma'}{4}\|\partial_t\phi^1\|^2 +\frac{4}{\sigma'} (C+C\|\phi^1\|^2+\|\partial_tn^1\|^2 +\varepsilon\|\partial_tR^1\|^2).
\end{split}
\end{equation*}
By choosing a small $\varepsilon_1>0$ such that $16C_1\varepsilon_1\leq \sigma'^2$, we then have for any $0<\varepsilon<\varepsilon_1$ that
\begin{equation*}
\varepsilon\|\nabla\partial_t\phi^1\|^2+\sigma'\|\partial_t\phi^1\|^2\leq \frac{C}{\sigma'}(1+\|\phi^1\|^2+\|\partial_tn^1\|^2),
\end{equation*}
thanks to Lemma \ref{lem1}. Similarly, by taking inner product with $\varepsilon\Delta\partial_t\phi^1$, we obtain
\begin{equation*}
\varepsilon^2\|\Delta\partial_t\phi^1\|^2+\sigma'\varepsilon\|\partial_t\phi^1\|^2\leq \frac{C}{\sigma'}(1+\|\phi^1\|^2+\|\partial_tn^1\|^2).
\end{equation*}
Adding them together, we obtain that for and any $0<\varepsilon<\varepsilon_1$ for some $\varepsilon_1>0$,
\begin{equation*}
\begin{split}
\|\partial_t\phi^1\|^2 +\varepsilon\|\nabla\partial_t\phi\|^2 +\varepsilon^2\|\Delta\partial_t\phi^1\|^2\leq C(1+\|\partial_tn^1\|^2_{H^k} +\|\phi^1\|_{H^k}^2),
\end{split}
\end{equation*}
for some constant $C$ depending on $\sigma'$. Higher order estimates can be treated similarly and we obtain for any $\alpha$ with $|\alpha|=k$ that
\begin{equation*}
\begin{split}
\|\partial_t\partial^{\alpha}\phi^1\|^2 +\varepsilon\|\partial_t\partial^{\alpha}\nabla\phi\|^2 +\varepsilon^2\|\partial_t\partial^{\alpha}\Delta\phi^1\|^2\leq C(1+\|\partial_tn^1\|^2_{H^k} +\|\phi^1\|_{H^k}^2),
\end{split}
\end{equation*}
for some constant $C$ depending on $\sigma'$.
\end{proof}

By recalling Lemma \ref{lem3}, we have the following
\begin{corollary}\label{cor2}
Let $(n^1,{\bf u}^1,\phi^1)$ be a smooth solution for the remainder system ($R_{\varepsilon}$), and $\alpha$ be an integer. There exist $\varepsilon_1$ and $C$ such that for any $0<\varepsilon<\varepsilon_1$ and any multiindices $\alpha$ with $|\alpha|=k$,
\begin{equation*}
\begin{split}
\|\partial^{\alpha}\partial_t\phi^1\|^2 &+\varepsilon\|\partial^{\alpha}\nabla\partial_t\phi^1\|^2 +\varepsilon^2\|\partial^{\alpha}\Delta\partial_t\phi^1\|^2\\
&\leq C(1+\|{\bf u}^1\|_{{\bf H}^{k+1}}^2 +\|\phi^1\|^2_{H^{k+1}}+\varepsilon^2\|\Delta\phi^1\|^2_{H^{k+1}}),
\end{split}
\end{equation*}
on the time interval $[0,T_{\varepsilon}]$.
\end{corollary}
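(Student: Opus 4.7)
The plan is to derive Corollary \ref{cor2} as a direct combination of Lemma \ref{lem4} and Lemma \ref{lem3}, with no new analysis required beyond tracking constants and absorbing a lower-order term. The left-hand side of the corollary is precisely the left-hand side of Lemma \ref{lem4}, while the right-hand side of the corollary matches (up to a lower-order $\phi^1$ term) the right-hand side of Lemma \ref{lem3}. So I would chain the two estimates.

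First I would invoke Lemma \ref{lem4} to replace the left-hand side by
\[
\|\partial^{\alpha}\partial_t\phi^1\|^2 +\varepsilon\|\partial^{\alpha}\nabla\partial_t\phi^1\|^2 +\varepsilon^2\|\partial^{\alpha}\Delta\partial_t\phi^1\|^2\leq C(1+\|\partial^{\alpha}\partial_tn^1\|^2+\|\phi^1\|_{H^k}^2),
\]
valid on $[0,T_{\varepsilon}]$ for $0<\varepsilon<\varepsilon_1$. Then I would apply Lemma \ref{lem3} to the $\|\partial^{\alpha}\partial_t n^1\|^2$ term, which gives
\[
\|\partial^{\alpha}\partial_tn^1\|^2\leq C\bigl(1+\|{\bf u}^1\|_{{\bf H}^{k+1}}^2+\|\phi^1\|_{H^{k+1}}^2+\varepsilon^2\|\Delta\phi^1\|_{H^{k+1}}^2\bigr)
\]
on the same interval (possibly after shrinking $\varepsilon_1$ so that both lemmas apply simultaneously, which is harmless since each lemma only demands $\varepsilon$ smaller than some explicit constant).

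Finally I would absorb the residual term $\|\phi^1\|_{H^k}^2$ using the trivial monotonicity $\|\phi^1\|_{H^k}\leq \|\phi^1\|_{H^{k+1}}$, then choose the new constant $C$ to be the sum/product of the constants from the two lemmas. There is no genuine obstacle here: both input lemmas already handle the elliptic and Taylor-type estimates (the latter through Lemma \ref{lem1}), and the bound on $\partial_t n^1$ from the continuity equation \eqref{rem-n} together with the multiplicative Sobolev inequalities has already been used. The only thing to mention is that the intersection $\varepsilon_1$ used here should be the minimum of the $\varepsilon_1$'s produced by Lemmas \ref{lem1}, \ref{lem3}, and \ref{lem4}, so that all prior estimates are simultaneously in force on $[0,T_{\varepsilon}]$.
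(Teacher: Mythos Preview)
Your proposal is correct and matches the paper's approach: the corollary is stated immediately after Lemma~\ref{lem4} with the remark ``By recalling Lemma~\ref{lem3}, we have the following,'' and no separate proof is given. Your chaining of Lemma~\ref{lem4} into Lemma~\ref{lem3}, together with the absorption $\|\phi^1\|_{H^k}\leq\|\phi^1\|_{H^{k+1}}$ and the choice of a common $\varepsilon_1$, is exactly what is intended.
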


As a direct consequence of Lemma \ref{lem1}, Lemma \ref{lem4} and \ref{lem3}, we also have
\begin{corollary}\label{cor1}
Let $(n^1,{\bf u}^1,\phi^1)$ be a smooth solution for the remainder system ($R_{\varepsilon}$), and $\alpha$ be an integer. There exist $\varepsilon_1$ and $C$ such that
\begin{equation*}
\begin{split}
\|\partial_tR^1\|_{H^k}\leq  C(1+|||({\bf u}^1,\phi^1)|||_{\varepsilon,k+1}),
\end{split}
\end{equation*}
for any $0<\varepsilon<\varepsilon_1$ and any multiindices $\alpha$ with $|\alpha|=k$.
\end{corollary}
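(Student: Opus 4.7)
The plan is to chain the three cited results together; there is no genuinely new calculation to perform.

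Starting from Lemma \ref{lem1}, in the form \eqref{equ20}, one has for $0<\varepsilon<\varepsilon_1$ and any $|\alpha|=k$ that
\begin{equation*}
\|\partial_tR^1\|_{H^k}\leq C_1\bigl(\|\phi^1\|_{H^k}+\|\partial_t\phi^1\|_{H^k}\bigr).
\end{equation*}
So everything reduces to controlling $\|\phi^1\|_{H^k}$ and $\|\partial_t\phi^1\|_{H^k}$ by the triple norm $|||({\bf u}^1,\phi^1)|||_{\varepsilon,k+1}$.

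For the first term I would simply observe that by the definition \eqref{trinorm},
\begin{equation*}
\|\phi^1\|_{H^k}\leq \|\phi^1\|_{H^{k+1}}\leq |||\phi^1|||_{\varepsilon,k+1}.
\end{equation*}
For the second, I would invoke Corollary \ref{cor2} (which is the composition of Lemma \ref{lem4} and Lemma \ref{lem3}):
\begin{equation*}
\|\partial_t\phi^1\|_{H^k}^2 \leq C\bigl(1+\|{\bf u}^1\|_{{\bf H}^{k+1}}^2 +\|\phi^1\|^2_{H^{k+1}}+\varepsilon^2\|\Delta\phi^1\|^2_{H^{k+1}}\bigr).
\end{equation*}
The definition \eqref{trinorm} of the triple norm gives $\|{\bf u}^1\|_{{\bf H}^{k+1}}^2\leq|||{\bf u}^1|||^2_{\varepsilon,k+1}$ and $\|\phi^1\|^2_{H^{k+1}}+\varepsilon^2\|\Delta\phi^1\|^2_{H^{k+1}}\leq|||\phi^1|||^2_{\varepsilon,k+1}$, so the right side is bounded by $C(1+|||({\bf u}^1,\phi^1)|||^2_{\varepsilon,k+1})$.

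Substituting both bounds back into the Lemma \ref{lem1} inequality and taking the square root yields
\begin{equation*}
\|\partial_tR^1\|_{H^k}\leq C\bigl(1+|||({\bf u}^1,\phi^1)|||_{\varepsilon,k+1}\bigr),
\end{equation*}
which is the statement of the corollary. There is no real obstacle here: the only thing worth checking carefully is the bookkeeping between the standard Sobolev norms and the $\varepsilon$-weighted triple norm, i.e., that every term on the right-hand side of Corollary \ref{cor2} is genuinely absorbed by the $(k+1)$-level triple norms and that one only needs to go up by one derivative in $\phi^1$ and ${\bf u}^1$ (which is why the corollary is stated at level $k+1$ rather than $k$).
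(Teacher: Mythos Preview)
Your proposal is correct and matches the paper's approach exactly: the paper states the corollary as ``a direct consequence of Lemma \ref{lem1}, Lemma \ref{lem4} and \ref{lem3}'' with no further argument, and you have simply written out that chain (using Corollary \ref{cor2}, which packages Lemmas \ref{lem4} and \ref{lem3} together). The bookkeeping you flag---that every term on the right of Corollary \ref{cor2} is dominated by the $(k+1)$-level triple norm---is indeed the only point to verify, and it follows immediately from the definition \eqref{trinorm}.
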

\subsection{Estimates of the $s$ order}
In this subsection, we give several estimates at the $s$ order. However, the $H^s$-norm of the solutions depends on the $H^{s+1}$-norm and hence cannot be closed until the next subsection. The main result in this subsection is Proposition \ref{prop1}. In the following, $\gamma\geq0$ will always denote a multiindex with $|\gamma|=s$.
\begin{lemma}\label{lem5}
Let $\gamma\geq0$ be a multiindex with $|\gamma|=s$, $(n^1,{\bf u}^1,\phi^1)$ be a smooth solution for the system \eqref{rem}. There exists $\varepsilon_1>0$ and $C>0$ such that
\begin{equation}\label{equ101}
\begin{split}
\{\frac12\frac{d}{dt}\|\partial^{\gamma}{\bf u}^1\|^2_{L^2} &+\frac12\frac{d}{dt}\int\frac{n^0}{n^0+\varepsilon n^1}|\partial^{\gamma}\phi^1|^2 +\frac{\varepsilon}{2}\frac{d}{dt}\int\frac{1}{n^0+\varepsilon n^1} |\partial^{\gamma}\nabla\phi^1|^2\} \\
\leq & C(1+\varepsilon^3|||({\bf u}^1,\phi^1)|||^3_{\varepsilon,5})(1+|||({\bf u}^1,\phi^1)|||^2_{\varepsilon,s}+|||({\bf u}^1,\phi^1)|||^2_{\varepsilon,3}),
\end{split}
\end{equation}
for any $0<\varepsilon<\varepsilon_1$.
\end{lemma}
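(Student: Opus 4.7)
The plan is to derive the $s$-th order energy identity for the momentum equation \eqref{rem-u} and convert the cross-coupling term $-\int\partial^\gamma\nabla\phi^1\cdot\partial^\gamma{\bf u}^1$ into exact $d/dt$-quantities by substituting $\nabla\cdot{\bf u}^1$ from the continuity equation \eqref{rem-n} and then eliminating $n^1$ through the Poisson equation \eqref{rem-p}. Concretely, first I would apply $\partial^\gamma$ to \eqref{rem-u} and take the $L^2$ pairing with $\partial^\gamma{\bf u}^1$. Writing the transport part as $({\bf u}^0+\varepsilon{\bf u}^1)\cdot\nabla\partial^\gamma{\bf u}^1={\bf u}^\varepsilon\cdot\nabla\partial^\gamma{\bf u}^1$, the standard integration by parts gives a $-\tfrac12\int(\nabla\cdot{\bf u}^\varepsilon)|\partial^\gamma{\bf u}^1|^2$ term, while Moser/Kato--Ponce commutator estimates (via the referenced Lemma~\ref{Le-inequ}) control the remaining commutators with ${\bf u}^\varepsilon\cdot\nabla$ and the ${\bf u}^1\cdot\nabla{\bf u}^0$ term; these are eventually dominated by $C(1+|||({\bf u}^1,\phi^1)|||^2_{\varepsilon,s})$ plus the cubic contribution $\varepsilon^3|||(\cdot)|||^3_{\varepsilon,5}$ coming from the nonlinearity $\varepsilon{\bf u}^1\cdot\nabla{\bf u}^1$ via Sobolev embedding.

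The key algebraic identity is obtained by regrouping \eqref{rem-n} as
\begin{equation*}
n^\varepsilon\,\nabla\cdot{\bf u}^1 = -(\partial_t+{\bf u}^\varepsilon\cdot\nabla)n^1 - {\bf u}^1\cdot\nabla n^0 - n^1\nabla\cdot{\bf u}^0,
\end{equation*}
and then replacing $n^1$ by \eqref{rem-p}, namely $n^1 = n^0\phi^1-\varepsilon\Delta\phi^1-\Delta\phi^0-\sqrt\varepsilon R^1$. Integrating by parts in the cross term gives $-\int\partial^\gamma\nabla\phi^1\cdot\partial^\gamma{\bf u}^1=\int\partial^\gamma\phi^1\,\partial^\gamma\nabla\cdot{\bf u}^1$, and I would next multiply the identity for $n^\varepsilon\nabla\cdot{\bf u}^1$ by $\partial^\gamma\phi^1/n^\varepsilon$ after applying $\partial^\gamma$, separating the principal part from commutators with $1/n^\varepsilon$ (the latter bounded by Moser and Lemma~\ref{lem2}). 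The principal contributions are exactly the two time-derivative structures
\begin{equation*}
-\!\int\frac{n^0}{n^\varepsilon}\partial^\gamma\phi^1\,\partial_t\partial^\gamma\phi^1\,dx = -\tfrac12\tfrac{d}{dt}\!\int\!\tfrac{n^0}{n^\varepsilon}|\partial^\gamma\phi^1|^2 + \tfrac12\!\int\!\partial_t\!\bigl(\tfrac{n^0}{n^\varepsilon}\bigr)|\partial^\gamma\phi^1|^2,
\end{equation*}
and, after one spatial integration by parts,
\begin{equation*}
\int\frac{\varepsilon}{n^\varepsilon}\partial^\gamma\phi^1\,\partial_t\Delta\partial^\gamma\phi^1\,dx = -\tfrac\varepsilon2\tfrac{d}{dt}\!\int\!\tfrac{1}{n^\varepsilon}|\nabla\partial^\gamma\phi^1|^2 + (\text{lower order}).
\end{equation*}
Moving these three $\tfrac{d}{dt}$-terms (together with $\tfrac12\tfrac{d}{dt}\|\partial^\gamma{\bf u}^1\|^2$) to the left-hand side produces exactly the expression appearing in \eqref{equ101}.

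It then remains to estimate all residual quantities: the transport pieces $\int\frac{n^0}{n^\varepsilon}\partial^\gamma\phi^1\,{\bf u}^\varepsilon\cdot\nabla\partial^\gamma\phi^1$ and the analogous $\varepsilon$-weighted Laplacian variant (integration by parts turns them into $\nabla\cdot(\frac{n^0}{n^\varepsilon}{\bf u}^\varepsilon)|\partial^\gamma\phi^1|^2$-type quantities, controlled by $(n^0,{\bf u}^0)$ and the a priori bound \eqref{assumption1}); the commutators $[\partial^\gamma,n^0]\phi^1$, $[\partial^\gamma,{\bf u}^\varepsilon\cdot\nabla]$, and $[\partial^\gamma,1/n^\varepsilon]$; and the lower-order sources $\Delta\phi^0$ and $\sqrt\varepsilon R^1$ (dealt with by Lemma~\ref{lem1} and the regularity of $\phi^0$). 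Factor $\partial_t(\tfrac{n^0}{n^\varepsilon})$ brings in $\partial_tn^1$, and I would invoke Lemma~\ref{lem3} and Corollary~\ref{cor2} to turn $\partial_tn^1$ and $\partial_t\phi^1$ back into the triple norm $|||({\bf u}^1,\phi^1)|||_{\varepsilon,s+1}$; crucially, Lemma~\ref{lem2} is what converts $\|n^1\|_{H^k}$ bounds into $\|\phi^1\|_{H^k}^2+\varepsilon^2\|\Delta\phi^1\|_{H^k}^2$, which is the shape of the triple norm.

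The main obstacle I anticipate is the careful bookkeeping of commutators produced by applying $\partial^\gamma$ to the nonlinear products $\varepsilon\nabla\cdot(n^1{\bf u}^1)$, $\varepsilon{\bf u}^1\cdot\nabla{\bf u}^1$, and to the factor $1/n^\varepsilon$ in the substituted expression. Because the top-order terms in $\partial^\gamma\nabla\cdot{\bf u}^1$ involve $(s+1)$-th derivatives of $\phi^1$ through $\varepsilon\Delta\phi^1$ and of $\phi^1$ through $n^0\phi^1$, every commutator must be split so that the highest-derivative factor is paired with an $L^\infty$-controlled low-order factor (using $d\leq3$, Sobolev embedding at level $3$ on the low factor, and level $5$ when the high factor is raised by the $\varepsilon$-multiplied nonlinearity). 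This split is precisely what forces the Gagliardo--Nirenberg-type prefactor $(1+\varepsilon^3|||({\bf u}^1,\phi^1)|||^3_{\varepsilon,5})$; verifying that the $\varepsilon$-powers balance so as to keep this prefactor bounded under the bootstrap \eqref{assumption} is the delicate point of the argument.
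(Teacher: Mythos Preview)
Your overall architecture is the paper's: pair $\partial^\gamma$\eqref{rem-u} with $\partial^\gamma{\bf u}^1$, integrate by parts in the cross term to get $\int\partial^\gamma\phi^1\,\partial^\gamma\nabla\cdot{\bf u}^1$, substitute $(n^0+\varepsilon n^1)\nabla\cdot{\bf u}^1$ from \eqref{rem-n}, and then replace $\partial_t\partial^\gamma n^1$ and $\nabla\partial^\gamma n^1$ via \eqref{rem-p} to produce the two weighted $\tfrac{d}{dt}$-integrals. The commutator bookkeeping you describe is also what the paper does (its Lemmas~\ref{lem6} and \ref{lem7} are exactly the detailed treatment of your ``$I_1$'' and ``$I_2$'' pieces).

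Two points, however, need correction. First, the cubic prefactor $\varepsilon^3|||(\cdot)|||^3_{\varepsilon,5}$ does \emph{not} originate from $\varepsilon{\bf u}^1\cdot\nabla{\bf u}^1$; that term only yields $C\varepsilon\|{\bf u}^1\|_{H^3}\|{\bf u}^1\|^2_{H^s}$. The genuine cubic factor arises when, after writing $\varepsilon\int\tfrac{1}{n^\varepsilon}\partial^\gamma\phi^1\,\partial_t\partial^\gamma\Delta\phi^1$, you integrate by parts and the gradient falls on $1/n^\varepsilon$: to close, you must move one more spatial derivative (a $\partial^{\gamma_1}$ with $|\gamma_1|=1$) off the time-derivative factor, which forces up to three derivatives onto $1/n^\varepsilon$ in $L^\infty$, hence $\varepsilon^3\|n^1\|^3_{H^5}$. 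A similar mechanism occurs in the transport piece $\varepsilon\int\tfrac{{\bf u}^\varepsilon}{n^\varepsilon}\partial^\gamma\phi^1\cdot\partial^\gamma\nabla\Delta\phi^1$, where $\nabla^2({\bf u}^\varepsilon/n^\varepsilon)$ in $L^\infty$ gives $\varepsilon^3\|(n^1,{\bf u}^1)\|^3_{H^4}$.

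Second, and more seriously, your sentence ``turn $\partial_tn^1$ and $\partial_t\phi^1$ back into the triple norm $|||({\bf u}^1,\phi^1)|||_{\varepsilon,s+1}$'' would not close: the right-hand side of \eqref{equ101} is at level $s$, not $s{+}1$. The paper's device is precisely the integration by parts mentioned above: before invoking Corollary~\ref{cor2}, shift one $\partial^{\gamma_1}$ off $\partial_t\partial^\gamma\phi^1$ so that only $\|\partial_t\phi^1\|_{H^{s-1}}$ appears, which Corollary~\ref{cor2} (with $k=s-1$) bounds by $1+|||({\bf u}^1,\phi^1)|||_{\varepsilon,s}$. Without this order-lowering step the estimate does not match the stated lemma.
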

\begin{proof}
Let $\gamma$ be a multiindex with $|\gamma|=s\geq0$. Taking $\partial^{\gamma}$ to \eqref{rem-u}, we obtain
\begin{equation*}
\partial_t\partial^{\gamma}{\bf u}^{1}+\partial^{\gamma}({\bf u}^0\cdot\nabla {\bf u}^1)+\partial^{\gamma}({\bf u}^1\cdot\nabla {\bf u}^0)+\varepsilon \partial^{\gamma}({\bf u}^1\cdot\nabla {\bf u}^1)=-\partial^{\gamma}\nabla\phi^1.
\end{equation*}
Taking $L^2$ inner product with $\partial^{\gamma}{\bf u}^1$, we obtain
\begin{equation}\label{equ4}
\begin{split}
\int\partial_t\partial^{\gamma}{\bf u}^1\partial^{\gamma}{\bf u}^1 =&-\int\partial^{\gamma}\nabla\phi^1\partial^{\gamma}{\bf u}^1 -\varepsilon\int\partial^{\gamma}({\bf u}^1\cdot\nabla {\bf u}^1)\partial^{\gamma}{\bf u}^1\\
&-\int\partial^{\gamma}({\bf u}^0\cdot\nabla {\bf u}^1)\partial^{\gamma}{\bf u}^1 -\int\partial^{\gamma}({\bf u}^1\cdot\nabla {\bf u}^0)\partial^{\gamma}{\bf u}^1\\
=&:I+II+III+IV.
\end{split}
\end{equation}

\begin{itemize}
  \item \emph{Estimate of the fourth term $IV$.}
\end{itemize}
The term $IV$ can be bounded by
\begin{equation}\label{equ30}
\begin{split}
IV\leq & C\|\partial^{\gamma}({\bf u}^1\cdot\nabla{\bf u}^0)\|_{L^2}\|\partial^{\gamma}{\bf u}^1\|_{L^2}\\
\leq & C(\|{\bf u}^1\|_{H^s}\|\nabla{\bf u}^0\|_{L^{\infty}}+\|\nabla{\bf u}^0\|_{H^s}\|{\bf u}^1\|_{L^{\infty}})\|{\bf u}^1\|_{H^s}\\
\leq & C(\|{\bf u}^1\|^2_{H^s}+\|{\bf u}^1\|^2_{H^2}),
\end{split}
\end{equation}
where we have used the commutator estimates \eqref{commutator}, the Sobolev embedding $H^2\hookrightarrow L^{\infty}$ when $d\leq 3$ and the fact that $(n^0,{\bf u}^0)$ is a known smooth solution of the Euler equation \eqref{EQ} by Theorem \ref{thm1}.

\begin{itemize}
  \item \emph{Estimate of the third term $III$.}
\end{itemize}
By integration by parts, the third term $III$ can be rewritten as
\begin{equation}\label{equ21}
\begin{split}
III=&-\int{\bf u}^0\cdot\nabla\partial^{\gamma}{\bf u}^1\partial^{\gamma}{\bf u}^1-\int[\partial^{\gamma},{\bf u}^0]\cdot\nabla{\bf u}^1\partial^{\gamma}{\bf u}^1\\
=& \frac12\int\nabla\cdot{\bf u}^0|\partial^{\gamma}{\bf u}^1|^2 -\int[\partial^{\gamma},{\bf u}^0]\cdot\nabla{\bf u}^1\partial^{\gamma}{\bf u}^1.
\end{split}
\end{equation}
By using commutator estimates \eqref{commutator}, we obtain
\begin{equation*}
\begin{split}
\left|\int[\partial^{\gamma},{\bf u}^0]\cdot\nabla{\bf u}^1\partial^{\gamma}{\bf u}^1\right|\leq & C\|[\partial^{\gamma},{\bf u}^0]\cdot\nabla{\bf u}^1\|_{L^2}\|\partial^{\gamma}{\bf u}^1\|_{L^2}\\
\leq & C(\|\nabla{\bf u}^0\|_{L^{\infty}}\|\nabla{\bf u}^1\|_{H^{s-1}}+\|{\bf u}^0\|_{H^s}\|\nabla{\bf u}^1\|_{L^{\infty}})\|\partial^{\gamma}{\bf u}^1\|_{L^2}\\
\leq & C(\|{\bf u}^1\|^2_{H^{3}}+\|{\bf u}^1\|^2_{H^{s}}),
\end{split}
\end{equation*}
where we have used the H\"older inequality and the Sobolev embedding $H^2\hookrightarrow L^{\infty}$. This yields the estimate
\begin{equation}\label{equ31}
\begin{split}
III\leq C(\|{\bf u}^1\|^2_{H^{3}}+\|{\bf u}^1\|^2_{H^{s}}),
\end{split}
\end{equation}
since the first term on the RHS of \eqref{equ21} is bounded by $C\|{\bf u}^1\|^2_{H^{s}}$.

\begin{itemize}
  \item \emph{Estimate of the second term $II$.}
\end{itemize}
Similar to the estimate of $III$, we have by integration by parts that
\begin{equation}\label{equ26}
\begin{split}
II= & \frac\varepsilon2\int\nabla\cdot{\bf u}^1\partial^{\gamma}{\bf u}^1\partial^{\gamma}{\bf u}^1 -\varepsilon\int[\partial^{\gamma},{\bf u}^1]\cdot\nabla{\bf u}^1\partial^{\gamma}{\bf u}^1\\
\leq & C\varepsilon\|\nabla\cdot{\bf u}^1\|_{L^{\infty}}\|\partial^{\gamma}{\bf u}^1\|^2_{L^2}\\
&+C\varepsilon(\|\nabla{\bf u}^1\|_{L^{\infty}}\|\nabla{\bf u}^1\|_{H^{s-1}}+\|\nabla{\bf u}^1\|_{L^{\infty}}\|{\bf u}^1\|_{H^s})\|\partial^{\gamma}{\bf u}^1\|_{L^2}\\
\leq & C\varepsilon\|{\bf u}^1\|_{H^3}\|{\bf u}^1\|^2_{H^{s}}.
\end{split}
\end{equation}

\begin{itemize}
  \item \emph{Estimate of the first term $I$.}
\end{itemize}
By integration by parts, the term $I$ in \eqref{equ4} is rewritten as
\begin{equation*}
\begin{split}
I=&\int\partial^{\gamma}\phi^1\partial^{\gamma}\nabla\cdot{\bf u}^1.
\end{split}
\end{equation*}
To handle this term, we note that from the remainder equation \eqref{rem-n},
\begin{equation}\label{equ13}
\begin{split}
(n^0+&\varepsilon n^1)\partial^{\gamma}\nabla\cdot{\bf u}^1 +[\partial^{\gamma},n^0+\varepsilon n^1]\nabla\cdot{\bf u}^1+\partial_t\partial^{\gamma}n^1\\
&+\partial^{\gamma}(({\bf u}^0+\varepsilon{\bf u}^1)\cdot\nabla n^1+{\bf u}^1\cdot\nabla n^0) +\partial^{\gamma}(n^1\nabla\cdot {\bf u}^0)=0.
\end{split}
\end{equation}
Inserting this into $I$, we obtain
\begin{equation}\label{equ6}
\begin{split}
I=& -\int\frac{\partial^{\gamma}\phi^1}{n^0+\varepsilon n^1}\partial_t\partial^{\gamma}n^1 -\int\frac{\partial^{\gamma}\phi^1}{n^0+\varepsilon n^1}\partial^{\gamma}(({\bf u}^0+\varepsilon{\bf u}^1)\cdot\nabla n^1)\\
&-\int\frac{\partial^{\gamma}\phi^1}{n^0+\varepsilon n^1}[\partial^{\gamma},n^0+\varepsilon n^1]\nabla\cdot{\bf u}^1 -\int\frac{\partial^{\gamma}\phi^1}{n^0+\varepsilon n^1}\partial^{\gamma}({\bf u}^1\cdot\nabla n^0)\\
&-\int\frac{\partial^{\gamma}\phi^1}{n^0+\varepsilon n^1}\partial^{\gamma}({n}^1\cdot\nabla {\bf u}^0) =:\sum_{i=1}^5I_{i}.
\end{split}
\end{equation}

In the following, we estimate $I_{3}\sim I_{5}$ while leaving the estimates of $I_{1}$ and $I_{2}$ to the next lemmas.

For $I_{3}$, we have
\begin{equation*}
\begin{split}
I_{3}\leq & C\|\partial^{\gamma}\phi^1\|_{L^2}(\|\nabla(n^0+\varepsilon n^1)\|_{L^{\infty}}\|\nabla\cdot{\bf u}^1\|_{H^{s-1}}+\|n^0+\varepsilon n^1\|_{H^{s}}\|\nabla\cdot{\bf u}^1\|_{L^{\infty}})\\
\leq & C\|\phi^1\|_{H^s}(\|{\bf u}^1\|_{H^s}+\|{\bf u}^1\|_{H^3}) +C\varepsilon(\|n^1\|_{H^3}+\|{\bf u}^1\|_{H^3})\|\phi^1\|_{H^s}(\|n^1\|_{H^s}+\|{\bf u}^1\|_{H^s})\\
\leq & C(1+\varepsilon\|(n^1,{\bf u}^1)\|_{H^3})\|(n^1,{\bf u}^1,\phi^1)\|_{H^s}^2+C\|{\bf u}^1\|_{H^3}^2,
\end{split}
\end{equation*}
where we have used the H\"older inequality, commutator estimates \eqref{commutator} and the fact that $n^0$ and $n^0+\varepsilon n^1$ are bounded from above and below by positive numbers when $\varepsilon<\varepsilon_1$ is small enough in \eqref{assumption1}.

For $I_{4}$, directly applying the H\"older inequality and Lemma \ref{Le-inequ} yields
\begin{equation*}
\begin{split}
I_{24}\leq & C\|\phi^1\|_{H^s}(\|{\bf u}^1\|_{L^\infty}\|\nabla n^0\|_{H^s}+\|{\bf u}^1\|_{H^s}\|\nabla n^0\|_{L^\infty})\\
\leq & C\|\phi^1\|^2_{H^s}+C\|{\bf u}^1\|^2_{H^s}+C\|{\bf u}^1\|^2_{H^2}.
\end{split}
\end{equation*}

Similarly, $I_{5}$ can be bounded by
\begin{equation*}
\begin{split}
I_{25}\leq C\|\phi^1\|^2_{H^s}+C\|n^1\|^2_{H^s}+C\|n^1\|^2_{H^2}.
\end{split}
\end{equation*}

Summarizing, we have that
\begin{equation}\label{equ29}
\begin{split}
I\leq & I_{1}+I_{2}+C\|(n^1,{\bf u}^1)\|^2_{H^3} +C(1+\varepsilon^2\|(n^1,{\bf u}^1)\|^2_{H^3})\|(n^1,{\bf u}^1,\phi^1)\|^2_{H^s}.
\end{split}
\end{equation}

To end the proof of Lemma \ref{lem5}, we need to get suitable estimates for $I_{1}$ and $I_{2}$. However, this is not straightforward and to make it easier to read, we leave the proof to the next two lemmas.
\end{proof}

\begin{lemma}\label{lem6}
The term of $I_{2}$ in \eqref{equ6} is bounded by
\begin{equation}\label{equ5}
\begin{split}
I_{21}\leq C(1+\varepsilon^3|||({\bf u}^1,\phi^1)|||^3_{\varepsilon,4}) (1+|||({\bf u}^1,\phi^1)|||^2_{\varepsilon,s}+|||({\bf u}^1,\phi^1)|||^2_{\varepsilon,3}),
\end{split}
\end{equation}
for some constant $C>0$ and for all $0<\varepsilon<\varepsilon_1$.
\end{lemma}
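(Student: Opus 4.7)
The plan is to bound $I_2$ by quantities that fit the weighted triple norm $|||\cdot|||_{\varepsilon,s}$. The underlying obstruction is that, after applying Leibniz, the naive estimate of $\partial^\gamma[({\bf u}^0+\varepsilon {\bf u}^1)\cdot\nabla n^1]$ requires $\|n^1\|_{H^{s+1}}$, an $(s{+}1)$-order quantity not available from the triple norm at level $s$. I would remove this surplus derivative through two successive integrations by parts, the second of which invokes the Poisson equation \eqref{rem-p} to trade a derivative of $n^1$ for an $\varepsilon$-weighted $\Delta\phi^1$ plus manageable lower-order pieces.

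First I split $\partial^\gamma[v\cdot\nabla n^1] = v\cdot\nabla\partial^\gamma n^1 + [\partial^\gamma,v]\cdot\nabla n^1$ with $v := {\bf u}^0 + \varepsilon {\bf u}^1$. The commutator piece I handle directly by the standard commutator estimate together with Sobolev embedding $H^2\hookrightarrow L^\infty$ (valid for $d\leq 3$), converting every occurrence of $\|n^1\|_{H^k}$ to $C(1+|||\phi^1|||_{\varepsilon,k})$ via Lemma \ref{lem2}. For the main piece I integrate by parts once:
\[
-\int \frac{\partial^\gamma\phi^1 \cdot v}{n^0+\varepsilon n^1}\cdot\nabla\partial^\gamma n^1 = \int \partial^\gamma n^1 \, \nabla\cdot\left[\frac{v\,\partial^\gamma\phi^1}{n^0+\varepsilon n^1}\right].
\]
Expanding the divergence, the terms in which $\nabla$ falls on the coefficient $v/(n^0+\varepsilon n^1)$ give only products $\partial^\gamma n^1\cdot\partial^\gamma\phi^1$ against $L^\infty$-bounded factors, safely bounded by $C(1+\varepsilon\|(n^1,{\bf u}^1)\|_{H^3})\|n^1\|_{H^s}\|\phi^1\|_{H^s}$; the only surviving delicate contribution is $A := \int \partial^\gamma n^1 \cdot v \cdot \nabla\partial^\gamma\phi^1/(n^0+\varepsilon n^1)$, which still carries an $(s{+}1)$-order derivative of $\phi^1$.

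The decisive step is to substitute into $A$ the Poisson identity $\partial^\gamma n^1 = -\varepsilon\Delta\partial^\gamma\phi^1 - \Delta\partial^\gamma\phi^0 + \partial^\gamma(n^0\phi^1) - \sqrt{\varepsilon}\,\partial^\gamma R^1$ from \eqref{rem-p}, splitting $A$ into four subterms. The critical one, $-\varepsilon\int \Delta\partial^\gamma\phi^1\cdot v\cdot\nabla\partial^\gamma\phi^1/(n^0+\varepsilon n^1)$, is treated by the identity
\[
\int \Delta f \cdot (a\cdot\nabla f)\,dx = -\int (\partial_i a_j)(\partial_i f)(\partial_j f)\,dx + \tfrac12 \int (\nabla\cdot a)|\nabla f|^2\,dx
\]
with $f = \partial^\gamma\phi^1$ and $a = v/(n^0+\varepsilon n^1)$, yielding a bound of order $C\varepsilon(1+\varepsilon\|(n^1,{\bf u}^1)\|_{H^3})^2\|\nabla\phi^1\|_{H^s}^2$; the interpolation $\varepsilon\|\nabla\phi^1\|_{H^s}^2 \leq \varepsilon\|\phi^1\|_{H^s}\|\Delta\phi^1\|_{H^s} \leq \tfrac12\|\phi^1\|_{H^s}^2 + \tfrac12 \varepsilon^2\|\Delta\phi^1\|_{H^s}^2 \leq |||\phi^1|||_{\varepsilon,s}^2$ then closes it. The three remaining subterms I handle by one further integration by parts: the $-\Delta\partial^\gamma\phi^0$ piece uses the smoothness of $\phi^0$; the $\partial^\gamma(n^0\phi^1)$ piece is split by Leibniz into $n^0\partial^\gamma\phi^1$ (where $\partial^\gamma\phi^1\cdot\nabla\partial^\gamma\phi^1 = \tfrac12\nabla|\partial^\gamma\phi^1|^2$ plus one integration by parts reduces matters to $\|\phi^1\|_{H^s}^2$) and a commutator $[\partial^\gamma,n^0]\phi^1$ that admits at most $s$ derivatives of $\phi^1$ after a second integration by parts; the $\sqrt{\varepsilon}\partial^\gamma R^1$ piece uses Lemma \ref{lem1} together with the interpolation $\sqrt{\varepsilon}\|\phi^1\|_{H^{s+1}}\leq C|||\phi^1|||_{\varepsilon,s}$.

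The hard part is precisely the appearance of $\nabla\partial^\gamma\phi^1$ after the first integration by parts: $|||\phi^1|||_{\varepsilon,s}$ controls $\|\phi^1\|_{H^s}$ and $\varepsilon\|\Delta\phi^1\|_{H^s}$ but not $\|\phi^1\|_{H^{s+1}}$ unweighted. The resolution, and the very reason for introducing the weighted triple norm, is that each such $(s{+}1)$-order derivative of $\phi^1$ always comes with an $\varepsilon^{1/2}$ or $\varepsilon$ prefactor (from $-\varepsilon\Delta\phi^1$ or $\sqrt{\varepsilon}R^1$ in the Poisson equation, or from self-pairing through $\varepsilon\|\nabla\phi^1\|_{H^s}^2 \leq |||\phi^1|||_{\varepsilon,s}^2$). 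These $\varepsilon$ weights match those built into $|||\phi^1|||_{\varepsilon,s}$ exactly, which is why the final bound ends up in the multiplicative form $(1+\varepsilon^3|||(\cdot)|||_{\varepsilon,4}^3)(1+|||(\cdot)|||_{\varepsilon,s}^2+|||(\cdot)|||_{\varepsilon,3}^2)$, with polynomial growth only through the low-order norms.
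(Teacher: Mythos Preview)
Your proposal is correct and follows essentially the same strategy as the paper: split off the commutator $[\partial^\gamma,v]\cdot\nabla n^1$ and handle it directly, then for the principal part invoke the Poisson relation \eqref{rem-p} to convert $n^1$ into $\phi^1$ quantities carrying the right $\varepsilon$-weights, and close by exploiting symmetric structures of the form $\int f\,(a\cdot\nabla f)$.

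The one genuine difference is the order of operations. The paper substitutes \eqref{equ15} for $\nabla\partial^\gamma n^1$ \emph{before} any integration by parts, so the critical contribution is $\varepsilon\int \tfrac{\partial^\gamma\phi^1\,v}{n^\varepsilon}\cdot\partial^\gamma\nabla\Delta\phi^1$, which pairs an $s$-order factor with an $(s{+}3)$-order factor; balancing this forces two integrations by parts and lands two derivatives on the coefficient $v/n^\varepsilon$, whence $\|\nabla^2(v/n^\varepsilon)\|_{L^\infty}$ and the appearance of $|||\cdot|||_{\varepsilon,4}$ and the cubic prefactor $\varepsilon^3|||\cdot|||^3_{\varepsilon,4}$ in the stated bound. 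You instead integrate by parts once first, then substitute Poisson for $\partial^\gamma n^1$; your critical term $-\varepsilon\int\Delta\partial^\gamma\phi^1\,(a\cdot\nabla\partial^\gamma\phi^1)$ pairs $(s{+}2)$ with $(s{+}1)$ and is resolved by the identity $\int\Delta f\,(a\cdot\nabla f)=-\int(\partial_i a_j)\partial_i f\,\partial_j f+\tfrac12\int(\nabla\cdot a)|\nabla f|^2$, which needs only $\|\nabla a\|_{L^\infty}$. Thus your route requires only $|||\cdot|||_{\varepsilon,3}$ in the low-order prefactor and yields a slightly sharper estimate that still sits comfortably inside \eqref{equ5}. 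Either ordering is valid; yours is marginally cleaner here.
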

\begin{proof}
First, we observe that $I_{2}$ in \eqref{equ6} can be decomposed into
\begin{equation*}
\begin{split}
I_{2}=-\int\frac{\partial^{\gamma}\phi^1({\bf u}^0+\varepsilon{\bf u}^1)}{n^0+\varepsilon n^1}\cdot\nabla\partial^{\gamma} n^1 -\int\frac{\partial^{\gamma}\phi^1}{n^0+\varepsilon n^1}[\partial^{\gamma},{\bf u}^0+\varepsilon{\bf u}^1]\cdot\nabla n^1=:I_{21}+I_{22}.
\end{split}
\end{equation*}
By commutator estimate \eqref{commutator},
\begin{equation}\label{equ27}
\begin{split}
I_{22}\leq & C\|\partial^{\gamma}\phi^1\|(\|\nabla{\bf u}^0+\varepsilon\nabla{\bf u}^1\|_{L^{\infty}}\|\nabla n^1\|_{H^{s-1}} +\|{\bf u}^0+\varepsilon{\bf u}^1\|_{H^s}\|\nabla n^1\|_{L^{\infty}})\\
\leq & C\|\phi^1\|_{H^s}(\|n^1\|_{H^s}+\|n^1\|_{H^3}) +C\varepsilon(\|(n^1,{\bf u}^1)\|_{H^3})\|\phi^1\|_{H^s}(\|(n^1,{\bf u}^1)\|_{H^s})\\
\leq & C\|n^1\|^2_{H^3} +C(1+\varepsilon\|(n^1,{\bf u}^1)\|_{H^3})\|(n^1,{\bf u}^1,\phi^1)\|^2_{H^s}.
\end{split}
\end{equation}
To treat $I_{21}$, we first note that from the remainder equation \eqref{rem-p},
\begin{equation}\label{equ15}
\begin{split}
\partial^{\gamma}\nabla n^1=\nabla\partial^{\gamma}(n^0\phi^1) -\varepsilon\partial^{\gamma}\nabla\Delta\phi^1 -\partial^{\gamma}\nabla\Delta\phi^0 -\sqrt\varepsilon\partial^{\gamma}\nabla R^1.
\end{split}
\end{equation}
Hence $I_{21}$ is accordingly divided into
\begin{equation}\label{equ50}
\begin{split}
I_{21}=&-\int\frac{\partial^{\gamma}\phi^1({\bf u}^0+\varepsilon{\bf u}^1)}{n^0+\varepsilon n^1}\cdot\nabla\partial^{\gamma}(n^0\phi^1) +\varepsilon\int\frac{\partial^{\gamma}\phi^1({\bf u}^0+\varepsilon{\bf u}^1)}{n^0+\varepsilon n^1}\cdot\partial^{\gamma}\nabla\Delta\phi^1\\
&+\int\frac{\partial^{\gamma}\phi^1({\bf u}^0+\varepsilon{\bf u}^1)}{n^0+\varepsilon n^1}\cdot\partial^{\gamma}\nabla\Delta\phi^0 +\sqrt\varepsilon\int\frac{\partial^{\gamma}\phi^1({\bf u}^0+\varepsilon{\bf u}^1)}{n^0+\varepsilon n^1}\cdot\partial^{\gamma}\nabla R^1\\
=&:\sum_{i=1}^4{I_{21i}}.
\end{split}
\end{equation}
For the first term $I_{211}$, we have
\begin{equation*}
\begin{split}
I_{211}=&-\int\frac{\partial^{\gamma}\phi^1({\bf u}^0+\varepsilon{\bf u}^1)}{n^0+\varepsilon n^1}\cdot n^0\nabla\partial^{\gamma}\phi^1 -\int\frac{\partial^{\gamma}\phi^1({\bf u}^0+\varepsilon{\bf u}^1)}{n^0+\varepsilon n^1}\cdot[\partial^{\gamma},n^0]\nabla\phi^1\\
=&\frac12\int\nabla\cdot\left(\frac{n^0({\bf u}^0+\varepsilon{\bf u}^1)}{n^0+\varepsilon n^1}\right)|\partial^{\gamma}\phi^1|^2 -\int\frac{\partial^{\gamma}\phi^1({\bf u}^0+\varepsilon{\bf u}^1)}{n^0+\varepsilon n^1}\cdot[\partial^{\gamma},n^0]\nabla\phi^1.
\end{split}
\end{equation*}
By direct computation and Sobolev embedding, we have
\begin{equation*}
\begin{split}
\|\nabla\cdot\left(\frac{n^0({\bf u}^0+\varepsilon{\bf u}^1)}{n^0+\varepsilon n^1}\right)\|_{L^{\infty}}\leq C+C\varepsilon^2(\|n^1\|^2_{H^{3}}+\|{\bf u}^1\|^2_{H^{3}}),
\end{split}
\end{equation*}
which yields
\begin{equation}\label{equ51}
\begin{split}
I_{211}\leq & C\|\phi^1\|^2_{H^s}+C\varepsilon^2(\|n^1\|^2_{H^3}+\|{\bf u}^1\|^2_{H^3})\|\phi^1\|^2_{H^s} +C\|\partial^{\gamma}\phi^1\|^2_{L^2}\cdot\\
&\cdot(1+\varepsilon\|{\bf u}^1\|_{L^{\infty}})(\|\nabla n^0\|_{L^{\infty}}\|\nabla\phi^1\|_{H^{s-1}} +\|n^0\|_{H^s}\|\nabla\phi^1\|_{L^{\infty}})\\
\leq & C(1+\varepsilon^2(\|n^1\|^2_{H^3}+\|{\bf u}^1\|^2_{H^3}))\|\phi^1\|^2_{H^s}+C\|\phi^1\|^2_{H^3},
\end{split}
\end{equation}
thanks to the commutator estimates \eqref{commutator}. For $I_{212}$, by integration by parts, we obtain
\begin{equation*}
\begin{split}
I_{212}=&\varepsilon\int\frac{({\bf u}^0+\varepsilon{\bf u}^1)\partial^{\gamma}\phi^1}{n^0+\varepsilon n^1}\cdot\partial^{\gamma}\nabla\Delta\phi^1\\
=&-\varepsilon\int\frac{({\bf u}^0+\varepsilon{\bf u}^1)}{n^0+\varepsilon n^1}\cdot\partial^{\gamma}\nabla\phi^1 \cdot\partial^{\gamma}\nabla\nabla\phi^1 -\varepsilon\int\nabla(\frac{({\bf u}^0+\varepsilon{\bf u}^1)}{n^0+\varepsilon n^1})\partial^{\gamma}\phi^1 \cdot\partial^{\gamma}\nabla\nabla\phi^1\\
=& \frac{3\varepsilon}{2}\int\nabla\cdot(\frac{({\bf u}^0+\varepsilon{\bf u}^1)}{n^0+\varepsilon n^1})|\partial^{\gamma}\nabla\phi^1|^2 +\varepsilon\int\nabla^2(\frac{({\bf u}^0+\varepsilon{\bf u}^1)}{n^0+\varepsilon n^1})\partial^{\gamma}\phi^1\partial^{\gamma}\nabla\phi^1.
\end{split}
\end{equation*}
By direct computation and Sobolev embedding $H^2\hookrightarrow L^{\infty}$, we have
\begin{equation*}
\begin{split}
\|\nabla\cdot(\frac{({\bf u}^0+\varepsilon{\bf u}^1)}{n^0+\varepsilon n^1})\|_{L^{\infty}}\leq &C+C\varepsilon^2(\|n^1\|^2_{H^{3}}+\|{\bf u}^1\|^2_{H^{3}})\\
\|\nabla^2(\frac{({\bf u}^0+\varepsilon{\bf u}^1)}{n^0+\varepsilon n^1})\|_{L^{\infty}}\leq &C+C\varepsilon^3(\|n^1\|^3_{H^{4}}+\|{\bf u}^1\|^3_{H^{4}}),
\end{split}
\end{equation*}
which yield
\begin{equation}\label{equ52}
\begin{split}
I_{212}\leq &C\left(1+\varepsilon^3(\|(n^1,{\bf u}^1)\|^3_{H^4})\right) (\|\phi^1\|^2_{H^s}+\varepsilon\|\nabla\phi^1\|^2_{H^s}).
\end{split}
\end{equation}
For $I_{213}$, by H\"older inequality, we obtain
\begin{equation}\label{equ53}
\begin{split}
I_{213}\leq &C\left(1+\|{\bf u}^1\|^2_{L^2}+\|\phi^1\|^2_{H^s}\right).
\end{split}
\end{equation}
For $I_{214}$, by integration by parts,
\begin{equation*}
\begin{split}
I_{214}=-\sqrt\varepsilon\int\nabla\cdot(\frac{({\bf u}^0+\varepsilon{\bf u}^1)}{n^0+\varepsilon n^1})\partial^{\gamma}\phi^1\partial^{\gamma}R^1 -\sqrt\varepsilon\int\frac{({\bf u}^0+\varepsilon{\bf u}^1)}{n^0+\varepsilon n^1}\cdot\nabla\partial^{\gamma}\phi^1\partial^{\gamma}R^1.
\end{split}
\end{equation*}
By using Lemma \ref{lem1}, we obtain
\begin{equation}\label{equ54}
\begin{split}
I_{214}\leq & C(1+\varepsilon^2\|(n^1,{\bf u}^1)\|^2_{H^3}) \|\phi^1\|^2_{H^s}\\
&+C(1+\varepsilon\|{\bf u}^1\|_{L^{\infty}}) (\|\phi^1\|^2_{H^s}+\varepsilon\|\nabla\phi^1\|^2_{H^s})\\
\leq &C(1+\varepsilon\|({\bf u}^1,n^1)\|_{H^3})(\|\phi^1\|^2_{H^s}+\varepsilon\|\nabla\phi^1\|^2_{H^s}).
\end{split}
\end{equation}
By \eqref{equ50} and putting \eqref{equ51}-\eqref{equ54} together, we obtain
\begin{equation}\label{equ28}
\begin{split}
I_{21}\leq & C+C\|(n^1,{\bf u}^1,\phi^1)\|^2_{H^3}\\
&+C(1+\varepsilon^3\|(n^1,{\bf u}^1)\|^3_{H^4})(\|(n^1,{\bf u}^1,\phi^1)\|^2_{H^s}+\varepsilon\|\nabla\phi^1\|^2_{H^s})\\
\leq & C(1+\varepsilon^3|||({\bf u}^1,\phi^1)|||^3_{\varepsilon,4}) (1+|||({\bf u}^1,\phi^1)|||^2_{\varepsilon,s}+|||({\bf u}^1,\phi^1)|||^2_{\varepsilon,3}),
\end{split}
\end{equation}
where we have used the definition of the norm $|||\cdot|||_{\varepsilon,s}$ in \eqref{trinorm} and Lemma \ref{lem2} to replace the norms of $n^1$ with the norms of $\phi^1$,
By putting \eqref{equ27} and \eqref{equ28} together and using Lemma \ref{lem2}, we obtain \eqref{equ5}. This ends the proof of Lemma \ref{lem6}.
\end{proof}

\begin{lemma}\label{lem7}
For $I_{1}$ in \eqref{equ6}, we have the estimates
\begin{equation*}
\begin{split}
I_{1}\leq &-\frac12\frac{d}{dt}\int\frac{n^0}{n^\varepsilon} |\partial^{\gamma}\phi^1|^2 -\frac{\varepsilon}{2}\frac{d}{dt}\int\frac{1}{n^\varepsilon} |\partial^{\gamma}\nabla\phi^1|^2\\
&+C(1+\varepsilon^3|||({\bf u}^1,\phi^1)|||^3_{\varepsilon,5})(1+|||({\bf u}^1,\phi^1)|||^2_{\varepsilon,s}+|||({\bf u}^1,\phi^1)|||^2_{\varepsilon,3}).
\end{split}
\end{equation*}
\end{lemma}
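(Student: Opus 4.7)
The plan is to substitute the Poisson equation \eqref{rem-p} (differentiated by $\partial_t\partial^\gamma$) into the expression for $I_1=-\int\frac{\partial^\gamma\phi^1}{n^\varepsilon}\partial_t\partial^\gamma n^1$ (where $n^\varepsilon=n^0+\varepsilon n^1$), which converts the $\partial_t\partial^\gamma n^1$ factor into an expression involving $\partial_t\partial^\gamma\Delta\phi^1$, $\partial_t\partial^\gamma(n^0\phi^1)$, $\partial_t\partial^\gamma\Delta\phi^0$ and $\sqrt\varepsilon\,\partial_t\partial^\gamma R^1$. Concretely, since $-\varepsilon\Delta\partial_t\partial^\gamma\phi^1 = \partial_t\partial^\gamma\Delta\phi^0 + \partial_t\partial^\gamma n^1 - \partial_t\partial^\gamma(n^0\phi^1) + \sqrt\varepsilon\,\partial_t\partial^\gamma R^1$, one substitutes and decomposes $I_1$ into four integrals $A_1,\ldots,A_4$ in correspondence with the four terms on the right-hand side.

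The two dominant integrals are $A_1=\varepsilon\int\frac{\partial^\gamma\phi^1}{n^\varepsilon}\Delta\partial_t\partial^\gamma\phi^1$ and the leading piece of $A_3=-\int\frac{\partial^\gamma\phi^1}{n^\varepsilon}\partial_t\partial^\gamma(n^0\phi^1)$. For $A_1$, I integrate by parts twice; the principal contribution is $-\varepsilon\int\frac{1}{n^\varepsilon}\partial^\gamma\nabla\phi^1\cdot\partial_t\partial^\gamma\nabla\phi^1 = -\frac{\varepsilon}{2}\frac{d}{dt}\int\frac{1}{n^\varepsilon}|\partial^\gamma\nabla\phi^1|^2 + \frac{\varepsilon}{2}\int\partial_t(1/n^\varepsilon)\,|\partial^\gamma\nabla\phi^1|^2$, with a remainder coming from differentiating $1/n^\varepsilon$ and paired with $\partial_t\partial^\gamma\nabla\phi^1$. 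For $A_3$, I split $\partial^\gamma(n^0\phi^1)=n^0\partial^\gamma\phi^1+[\partial^\gamma,n^0]\phi^1$; the first piece gives $-\frac12\int\frac{n^0}{n^\varepsilon}\partial_t|\partial^\gamma\phi^1|^2 = -\frac12\frac{d}{dt}\int\frac{n^0}{n^\varepsilon}|\partial^\gamma\phi^1|^2 +\frac12\int\partial_t(n^0/n^\varepsilon)\,|\partial^\gamma\phi^1|^2$. These two $-\frac{d}{dt}$ terms are exactly the two time-derivative terms in the statement of the lemma; the rest of the proof consists of bounding all remaining pieces by the right-hand side of the claimed inequality.

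For the remainder of $A_1$, $A_3$, and for $A_2$ and $A_4$, the key observation is that every remaining term is a product of (i)~a quantity that already lies in our working $H^s$-based triple norm (namely $\partial^\gamma\phi^1$, $\partial^\gamma\nabla\phi^1$ and commutators acting on $\phi^1$), (ii)~a smooth coefficient built from $n^0,{\bf u}^0$ and from $n^\varepsilon=n^0+\varepsilon n^1$ whose $L^\infty$-norm is controlled via \eqref{assumption1} and Sobolev embedding, and (iii)~a time-derivative quantity (either $\partial_t(1/n^\varepsilon)$, $\partial_t n^0/n^\varepsilon$, $\partial_t\partial^\gamma\phi^1$, or $\partial_t\partial^\gamma R^1$). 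The factors in (iii) are estimated using Lemma~\ref{lem3}, Lemma~\ref{lem4}, Corollary~\ref{cor2} and Corollary~\ref{cor1} respectively — each of which allows replacing a time-derivative quantity by the triple norm $|||({\bf u}^1,\phi^1)|||_{\varepsilon,k}$ at a slightly higher level $k$. The commutator $[\partial^\gamma,n^0]\phi^1$ and the coefficient derivatives are controlled via the Moser-type/commutator estimates already invoked in Lemmas~\ref{lem5}--\ref{lem6} and the Sobolev embedding $H^2\hookrightarrow L^\infty$ (valid for $d\le 3$). Finally, Lemma~\ref{lem2} converts the $\|n^1\|_{H^k}$ factors arising in $\partial_t n^1$ into the $\phi^1$-based triple norm.

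The main obstacle I anticipate is the term $A_4=\sqrt\varepsilon\int\frac{\partial^\gamma\phi^1}{n^\varepsilon}\partial_t\partial^\gamma R^1$, together with the piece of $A_1$ involving $\partial_t\partial^\gamma\nabla\phi^1$: at first sight both seem to cost one extra derivative beyond the $s$-level. For $A_4$, I would integrate by parts once in $x$ to transfer one derivative off $\partial^\gamma R^1$ and then invoke Lemma~\ref{lem1} at order $s-1$, where the prefactor $C(\sqrt\varepsilon\tilde C)$ together with the remaining $\sqrt\varepsilon$ produces a coefficient compatible with the $\varepsilon^3|||\cdot|||^3_{\varepsilon,5}$ structure on the right-hand side. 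For the $A_1$ remainder, I would use the $\sqrt\varepsilon$-factor in front to pair $\sqrt\varepsilon\,\partial_t\partial^\gamma\nabla\phi^1$ with the $\sqrt\varepsilon\|\partial^\gamma\nabla\phi^1\|$ summand of the triple norm, so that after applying Young's inequality with Corollary~\ref{cor2} the resulting bound is absorbed into the advertised right-hand side. Collecting $A_1$--$A_4$ and reorganizing yields exactly the stated inequality.
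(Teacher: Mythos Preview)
Your overall plan---substitute the $\partial_t\partial^\gamma$ Poisson equation into $I_1$, extract the two $\tfrac{d}{dt}$ energies from $A_1$ and $A_3$, and bound the rest via Lemmas~\ref{lem1}--\ref{lem4} and Corollaries~\ref{cor1}--\ref{cor2}---matches the paper's decomposition $I_1=\sum I_{1i}$ almost exactly, and your treatment of $A_3$ and $A_4$ (the paper's $I_{12}$ and $I_{14}$) is correct.

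The gap is in your handling of the $A_1$ remainder, i.e.\ the term $I_{113}=-\varepsilon\int\nabla(1/n^\varepsilon)\,\partial^\gamma\phi^1\cdot\partial_t\partial^\gamma\nabla\phi^1$. Your plan is to bound $\sqrt{\varepsilon}\,\|\partial_t\partial^\gamma\nabla\phi^1\|$ directly via Corollary~\ref{cor2}. But Corollary~\ref{cor2} at level $k=|\gamma|=s$ gives
\[
\varepsilon\|\partial^\gamma\nabla\partial_t\phi^1\|^2\le C\bigl(1+\|{\bf u}^1\|_{H^{s+1}}^2+\|\phi^1\|_{H^{s+1}}^2+\varepsilon^2\|\Delta\phi^1\|_{H^{s+1}}^2\bigr),
\]
and none of $\|{\bf u}^1\|_{H^{s+1}}^2$, $\|\phi^1\|_{H^{s+1}}^2$, $\varepsilon^2\|\Delta\phi^1\|_{H^{s+1}}^2$ is controlled by $|||({\bf u}^1,\phi^1)|||_{\varepsilon,s}^2$ (only their $\varepsilon$-weighted versions are). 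So the estimate does not close at level $s$. The paper instead integrates by parts \emph{twice more} in $x$: first moving the extra $\nabla$ off $\partial_t\partial^\gamma\nabla\phi^1$, then moving one $\partial^{\gamma_1}$ (with $|\gamma_1|=1$) off as well, so that every time-derivative factor becomes $\partial_t\partial^{\gamma-\gamma_1}\phi^1$, which lives at level $s-1$ and \emph{is} controlled by $|||({\bf u}^1,\phi^1)|||_{\varepsilon,s}$ via Lemmas~\ref{lem3}--\ref{lem4}. The price is that up to three spatial derivatives land on the coefficient $1/n^\varepsilon$; the bound $\|\partial^\alpha(1/n^\varepsilon)\|_{L^\infty}\le C(1+\varepsilon^{|\alpha|}|||\phi^1|||^{|\alpha|}_{\varepsilon,2+|\alpha|})$ with $|\alpha|=3$ is precisely what produces the factor $\varepsilon^3|||({\bf u}^1,\phi^1)|||_{\varepsilon,5}^3$ in the lemma's statement. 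In other words, that cubic prefactor is not cosmetic---it is forced by the extra integrations by parts needed to close $I_{113}$, and your plan does not account for it.
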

\begin{proof}
From the remainder equation \eqref{rem-p}, we obtain
\begin{equation}\label{equ18}
\begin{split}
\partial^{\gamma}\partial_tn^1=\partial^{\gamma}\partial_t(n^0\phi^1) -\varepsilon\partial^{\gamma}\partial_t\Delta\phi^1 -\partial^{\gamma}\partial_t\Delta\phi^0 -\sqrt{\varepsilon}\partial^{\gamma}\partial_tR^1.
\end{split}
\end{equation}
In this way, we can divide $I_{1}$ in \eqref{equ6} into the following
\begin{equation}\label{equ7}
\begin{split}
I_{1}=&\varepsilon\int\frac{\partial^{\gamma}\phi^1}{n^0+\varepsilon n^1}\partial^{\gamma}\partial_t\Delta\phi^1 -\int\frac{\partial^{\gamma}\phi^1}{n^0+\varepsilon n^1}\partial^{\gamma}\partial_t(n^0\phi^1) \\
&-\int\frac{\partial^{\gamma}\phi^1}{n^0+\varepsilon n^1}\partial^{\gamma}\partial_t\Delta\phi^0 -\sqrt{\varepsilon}\int\frac{\partial^{\gamma}\phi^1}{n^0+\varepsilon n^1}\partial^{\gamma}\partial_tR^1=\sum_{i=1}^4I_{1i}.
\end{split}
\end{equation}

In the following, we treat the RHS terms of \eqref{equ7} one by one.

\begin{itemize}
  \item \emph{Estimate of $I_{12}$.}
\end{itemize}
For the term $I_{12}$, we have
\begin{equation*}
\begin{split}
I_{12}=&-\int\frac{\partial^{\gamma}\phi^1}{n^0+\varepsilon n^1}\partial^{\gamma}(\phi^1\partial_tn^0) -\int\frac{\partial^{\gamma}\phi^1}{n^0+\varepsilon n^1}\partial^{\gamma}(n^0\partial_t\phi^1)\\
=&-\int\frac{\partial^{\gamma}\phi^1}{n^0+\varepsilon n^1}\partial^{\gamma}(\phi^1\partial_tn^0) -\int\frac{n^0\partial^{\gamma}\phi^1}{n^0+\varepsilon n^1}\partial_t\partial^{\gamma}\phi^1 -\int\frac{\partial^{\gamma}\phi^1}{n^0+\varepsilon n^1}[\partial^{\gamma},n^0]\partial_t\phi^1\\
=&:\sum_{i=1}^3I_{12i}.
\end{split}
\end{equation*}
For the first term $I_{121}$, since $n^0$ is known and is assumed to be smooth in Theorem \ref{thm1}, we have
\begin{equation*}
\begin{split}
I_{121}\leq C\|\phi^1\|^2_{H^s},
\end{split}
\end{equation*}
thanks to the multiplicative estimate in Lemma \ref{Le-inequ}. For the second term $I_{122}$, by integration by parts and Lemma \ref{lem3}, we have
\begin{equation*}
\begin{split}
I_{122}=&-\frac12\frac{d}{dt}\int\frac{n^0}{n^0+\varepsilon n^1}|\partial^{\gamma}\phi^1|^2 +\frac12\int\partial_t(\frac{n^0}{n^0+\varepsilon n^1})|\partial^{\gamma}\phi^1|^2\\
\leq & -\frac12\frac{d}{dt}\int\frac{n^0}{n^0+\varepsilon n^1}|\partial^{\gamma}\phi^1|^2 +C(1+\varepsilon\|\partial_tn^1\|_{L^{\infty}}) \|\partial^{\gamma}\phi^1\|^2_{L^2}\\
\leq & -\frac12\frac{d}{dt}\int\frac{n^0}{n^0+\varepsilon n^1}|\partial^{\gamma}\phi^1|^2 +C(1+\varepsilon^2|||({\bf u}^1,\phi^1)|||^2_{\varepsilon,3})\|\phi^1\|^2_{H^s},
\end{split}
\end{equation*}
where we have used the Sobolev embedding $H^2\hookrightarrow L^{\infty}$ for $d\leq 3$. For the third term $I_{123}$, by commutator estimates in Lemma \ref{Le-inequ} and Lemma \ref{lem3}, we have
\begin{equation*}
\begin{split}
I_{123}\leq & C\|\partial^{\gamma}\phi^1\|_{L^2}(\|\nabla n^0\|_{L^{\infty}}\|\partial_t\phi^1\|_{H^{s-1}} +\|n^0\|_{H^s}\|\partial_t\phi^1\|_{L^{\infty}})\\
\leq & C(1+|||({\bf u}^1,\phi^1)|||^2_{\varepsilon,s}+|||({\bf u}^1,\phi^1)|||^2_{\varepsilon,3}).
\end{split}
\end{equation*}
Summarizing, we have
\begin{equation}\label{equ22}
\begin{split}
I_{12}\leq &-\frac12\frac{d}{dt}\int\frac{n^0}{n^0+\varepsilon n^1}|\partial^{\gamma}\phi^1|^2\\
&+C(1+\varepsilon^2|||({\bf u}^1,\phi^1)|||^2_{\varepsilon,3})(1+|||({\bf u}^1,\phi^1)|||^2_{\varepsilon,s}+|||({\bf u}^1,\phi^1)|||^2_{\varepsilon,3}).
\end{split}
\end{equation}

\begin{itemize}
  \item \emph{Estimate of $I_{13}$.}
\end{itemize}
For the term $I_{13}$, since $\phi^0$ is known and smooth, it is easy to obtain
\begin{equation}\label{equ23}
\begin{split}
I_{13}\leq C(1+\|\phi^1\|^2_{H^s}).
\end{split}
\end{equation}

\begin{itemize}
  \item \emph{Estimate of $I_{14}$.}
\end{itemize}
For the term $I_{14}$, by integration by parts, we have
\begin{equation*}
\begin{split}
I_{14}=& \sqrt{\varepsilon}\int\frac{\partial^{\gamma+\gamma_1}\phi^1}{n^0+\varepsilon n^1}\partial^{\gamma-\gamma_1}\partial_tR^1 +\sqrt{\varepsilon}\int\partial^{\gamma_1}(\frac{1}{n^0+\varepsilon n^1})\partial^{\gamma}\phi^1\partial^{\gamma-\gamma_1}\partial_tR^1,
\end{split}
\end{equation*}
where $\gamma_1\leq\gamma$ is a multiindex with $|\gamma_1|=1$. By Lemma \ref{lem2}, Corollary \ref{cor1} and the definition of the triple norm $|||\cdot|||_{\varepsilon,s}$, we have the bound
\begin{equation}\label{equ24}
\begin{split}
I_{14}\leq & C\sqrt\varepsilon\|\partial^{\gamma+\gamma_1}\phi^1\|_{L^2} \|\partial_tR^1\|_{H^{s-1}} +C(1+\varepsilon\|\partial^{\gamma_1}n^1\|_{L^{\infty}}) \|\partial^{\gamma}\phi^1\|_{L^2}\|\partial_tR^1\|_{H^{s-1}}\\
\leq & C\|\partial_tR^1\|^2_{H^{s-1}}+C\varepsilon\|\nabla\phi^1\|_{H^s}^2 +C(1+\varepsilon^2\|n^1\|^2_{H^3})\|\phi^1\|_{H^s}^2\\
\leq & C(1+\varepsilon^2|||({\bf u}^1,\phi^1)|||^2_{\varepsilon,3})(1+|||({\bf u}^1,\phi^1)|||^2_{\varepsilon,s}).
\end{split}
\end{equation}

\begin{itemize}
  \item \emph{Estimate of $I_{11}$.}
\end{itemize}
We next deal with the term $I_{11}$ in \eqref{equ7}. By integration by parts, we have
\begin{equation*}
\begin{split}
I_{11}=&-\varepsilon\int\frac{1}{n^0+\varepsilon n^1} \partial^{\gamma}\nabla\phi^1\cdot\partial_t\partial^{\gamma}\nabla\phi^1 -\varepsilon\int\nabla(\frac{1}{n^0+\varepsilon n^1}) \partial^{\gamma}\phi^1\cdot\partial_t\partial^{\gamma}\nabla\phi^1\\
=&-\frac{\varepsilon}{2}\frac{d}{dt}\int\frac{1}{n^0+\varepsilon n^1} |\partial^{\gamma}\nabla\phi^1|^2 +\frac{\varepsilon}{2}\int\partial_t(\frac{1}{n^0+\varepsilon n^1}) |\partial^{\gamma}\nabla\phi^1|^2 \\
&-\varepsilon\int\nabla(\frac{1}{n^0+\varepsilon n^1}) \partial^{\gamma}\phi^1\cdot\partial_t\partial^{\gamma}\nabla\phi^1 =:I_{111}+I_{112}+I_{113}.
\end{split}
\end{equation*}
Since from Sobolev embedding and Lemma \ref{lem3}, we have
\begin{equation*}
\begin{split}
\|\partial_t(\frac{n^0}{n^0+\varepsilon n^1})\|_{L^{\infty}}\leq C(1+\varepsilon\|\partial_tn^1\|_{L^{\infty}})
\leq  C(1+\varepsilon^2|||({\bf u}^1, \phi^1)|||^2_{\varepsilon,3}),
\end{split}
\end{equation*}
it is immediate that
\begin{equation*}
\begin{split}
 I_{112}\leq C(1+\varepsilon^2|||({\bf u}^1, \phi^1)|||^2_{\varepsilon,3}) (\varepsilon\|\partial^{\gamma}\nabla\phi^1\|^2).
\end{split}
\end{equation*}
For the term $I_{113}$, we have by integration by parts,
\begin{equation}\label{equ8}
\begin{split}
I_{113}=&\varepsilon\int\Delta(\frac{1}{n^0+\varepsilon n^1}) \partial^{\gamma}\phi^1\partial_t\partial^{\gamma}\phi^1 +\varepsilon\int\nabla(\frac{1}{n^0+\varepsilon n^1}) \partial^{\gamma}\nabla\phi^1\partial_t\partial^{\gamma}\phi^1\\
=&-\varepsilon\int\partial^{\gamma_1}\Delta(\frac{1}{n^\varepsilon}) \partial^{\gamma}\phi^1\partial_t\partial^{\gamma-\gamma_1}\phi^1 -\varepsilon\int\Delta(\frac{1}{n^\varepsilon}) \partial^{\gamma+\gamma_1}\phi^1\partial_t\partial^{\gamma-\gamma_1}\phi^1 \\
&-\varepsilon\int\partial^{\gamma_1}\nabla(\frac{1}{n^\varepsilon}) \nabla\partial^{\gamma}\phi^1\partial_t\partial^{\gamma-\gamma_1}\phi^1 -\varepsilon\int\nabla(\frac{1}{n^\varepsilon}) \nabla\partial^{\gamma+\gamma_1}\phi^1\partial_t\partial^{\gamma-\gamma_1}\phi^1, \end{split}
\end{equation}
where $\gamma_1\leq\gamma$ is a multiindex with $|\gamma_1|=1$. By direct computation, H\"older inequality and Sobolev embedding $H^2\hookrightarrow L^{\infty}$, it is easy to obtain
\begin{equation}\label{equ9}
\begin{split}
\|\partial^{\alpha}(\frac{1}{n^\varepsilon})\|_{L^{\infty}}\leq & C(1+\varepsilon^{|\alpha|}\|n^1\|^{|\alpha|}_{H^{2+|\alpha|}})\\
\leq & C(1+\varepsilon^{|\alpha|}|||({\bf u}^1,\phi^1)|||^{|\alpha|}_{\varepsilon,{2+|\alpha|}})
\end{split}
\end{equation}
for any smooth function $n^0$ and any multiindex $\alpha$, thanks to Lemma \ref{lem2}. On the other hand, from Lemma \ref{lem4} and Lemma \ref{lem3}, we obtain
\begin{equation}\label{equ10}
\begin{split}
\|\partial_t\partial^{\gamma-\gamma_1}\phi^1\|^2_{L^{2}}\leq & C(1+\|\phi^1\|^2_{H^{s-1}}+\|\partial_tn^1\|^2_{H^{s-1}})\\
\leq & C(1+|||({\bf u}^1,\phi^1)|||^2_{\varepsilon,s}).
\end{split}
\end{equation}
Since the order of the derivatives on $n^0/(n^0+\varepsilon n^1)$ in \eqref{equ8} does not exceed 3, by using H\"older inequality, \eqref{equ9} and \eqref{equ10}, $I_{113}$ can be bounded by
\begin{equation*}
\begin{split}
I_{113}\leq & C(1+\varepsilon^{3}|||({\bf u}^1,\phi^1)|||^{3}_{\varepsilon,5}) \cdot\varepsilon(\|\partial^{\gamma}\phi^1\|_{L^2} +\|\partial^{\gamma+\gamma_1}\phi^1\|_{L^2}\\ &+\|\nabla\partial^{\gamma}\phi^1\|_{L^2} +\|\nabla\partial^{\gamma+\gamma_1}\phi^1\|_{L^2})\cdot \|\partial_t\partial^{\gamma-\gamma_1}\phi^1\|_{L^2}\\
\leq & C(1+\varepsilon^{3}|||({\bf u}^1,\phi^1)|||^{3}_{\varepsilon,5}) \cdot\\
&\cdot(\|\partial^{\gamma}\phi^1\|^2_{L^2} +\varepsilon\|\partial^{\gamma}\nabla\phi^1\|^2_{L^2} +\varepsilon^2\|\partial^{\gamma}\Delta\phi^1\|^2_{L^2} +\|\partial_t\partial^{\gamma-\gamma_1}\phi^1\|^2_{L^2})\\
\leq & C(1+\varepsilon^{3}|||({\bf u}^1,\phi^1)|||^{3}_{\varepsilon,5})(1+|||({\bf u}^1,\phi^1)|||^2_{\varepsilon,s}),
\end{split}
\end{equation*}
where we have used the definition of $|||\cdot|||_{\varepsilon,s}$ in \eqref{trinorm} and the $L^2$ boundedness of the Riesz operator \cite{Stein70}. To be more precise, there exists some constant $C>0$ such that $\|\partial_i\partial_j\phi^1\|\leq C\|\Delta\phi^1\|$
since $\partial_i\partial_j=-R_iR_j\Delta$, where $R_i$ is the $i^{th}$ Riesz operator. In particular, we have $\|\nabla\partial^{\gamma+\gamma_1}\phi^1\|\leq C\|\partial^{\gamma}\Delta\phi\|$.

Summarizing, we have
\begin{equation}\label{equ25}
\begin{split}
I_{11}=&-\frac{\varepsilon}{2}\frac{d}{dt}\int\frac{1}{n^\varepsilon} |\partial^{\gamma}\nabla\phi^1|^2 +C(1+\varepsilon^{3}|||({\bf u}^1,\phi^1)|||^{3}_{\varepsilon,5})(1+|||({\bf u}^1,\phi^1)|||^2_{\varepsilon,s}).
\end{split}
\end{equation}

By \eqref{equ7} and the estimates of \eqref{equ22}, \eqref{equ23}, \eqref{equ24} and \eqref{equ25}, we have
\begin{equation*}
\begin{split}
I_{1}\leq &-\frac12\frac{d}{dt}\int\frac{n^0}{n^\varepsilon} |\partial^{\gamma}\phi^1|^2 -\frac{\varepsilon}{2}\frac{d}{dt}\int\frac{1}{n^\varepsilon} |\partial^{\gamma}\nabla\phi^1|^2\\
&+C(1+\varepsilon^3|||({\bf u}^1,\phi^1)|||^3_{\varepsilon,5})(1+|||({\bf u}^1,\phi^1)|||^2_{\varepsilon,s}+|||({\bf u}^1,\phi^1)|||^2_{\varepsilon,3}).
\end{split}
\end{equation*}
This ends the proof of Lemma \ref{lem7}.
\end{proof}

Now, we can end the proof of Lemma \ref{lem5}.

\begin{proof}[End of proof of Lemma \ref{lem5}]The proof of Lemma \ref{lem5} is closed by \eqref{equ4}, \eqref{equ30}, \eqref{equ31}, \eqref{equ26}, \eqref{equ29} and Lemma \ref{lem6} and \ref{lem7}.
\end{proof}

\begin{proposition}\label{prop1}
Let $s\geq0$ be a positive integer, $(n^1,{\bf u}^1,\phi^1)$ be a smooth solution for the system \eqref{rem}. There exists $\varepsilon_1>0$ and $C,C'>0$ such that for any $0<\varepsilon<\varepsilon_1$, there holds
\begin{equation}\label{equ11}
\begin{split}
\|{\bf u}^1&(t)\|^2_{H^s}+\|\phi^1(t)\|^2_{H^s} +\varepsilon\|\nabla\phi^1(t)\|^2_{H^s}\\
\leq & C'(\|{\bf u}^1(0)\|^2_{H^s}+\|\phi^1(0)\|^2_{H^s} +\varepsilon\|\nabla\phi^1(0)\|^2_{H^s})\\
&+C\int_0^t(1+\varepsilon^3|||({\bf u}^1,\phi^1)|||^3_{\varepsilon,5})(1+|||({\bf u}^1,\phi^1)|||^2_{\varepsilon,s}+|||({\bf u}^1,\phi^1)|||^2_{\varepsilon,3})d\tau.
\end{split}
\end{equation}
where $C'$ depends only on $\sigma'$ and $\sigma''$.
\end{proposition}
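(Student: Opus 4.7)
The plan is to obtain Proposition \ref{prop1} as a direct integrated consequence of Lemma \ref{lem5}, combined with the norm-equivalence afforded by the density bounds in \eqref{assumption1}. First, I would invoke Lemma \ref{lem5} for every multiindex $\gamma$ with $|\gamma|=k\leq s$; since the right-hand side of \eqref{equ101} is already expressed in terms of the triple norm at order $s$, summing the resulting inequalities over all such $\gamma$ changes only the constant $C$ and yields
\begin{equation*}
\frac{d}{dt}\mathcal{E}_s(t)\leq C(1+\varepsilon^{3}|||({\bf u}^1,\phi^1)|||^{3}_{\varepsilon,5})(1+|||({\bf u}^1,\phi^1)|||^{2}_{\varepsilon,s}+|||({\bf u}^1,\phi^1)|||^{2}_{\varepsilon,3}),
\end{equation*}
where
\begin{equation*}
\mathcal{E}_s(t):=\sum_{|\gamma|\leq s}\Bigl(\|\partial^{\gamma}{\bf u}^1\|^{2}+\int\frac{n^{0}}{n^{\varepsilon}}|\partial^{\gamma}\phi^1|^{2}+\varepsilon\int\frac{1}{n^{\varepsilon}}|\partial^{\gamma}\nabla\phi^1|^{2}\Bigr).
\end{equation*}

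Second, I would exploit \eqref{assumption1}, which guarantees $\sigma'/2<n^{\varepsilon}=n^{0}+\varepsilon n^{1}<2\sigma''$ uniformly on $[0,T_{\varepsilon}]$ and for $\varepsilon<\varepsilon_1$, together with $\sigma'<n^{0}<\sigma''$ from Theorem \ref{thm1}. These bounds give two-sided estimates such as
\begin{equation*}
\frac{\sigma'}{4\sigma''}\|\partial^{\gamma}\phi^1\|^{2}\leq\int\frac{n^{0}}{n^{\varepsilon}}|\partial^{\gamma}\phi^1|^{2}\leq\frac{2\sigma''}{\sigma'}\|\partial^{\gamma}\phi^1\|^{2},
\end{equation*}
and the analogous statement for $\varepsilon\int (n^{\varepsilon})^{-1}|\partial^{\gamma}\nabla\phi^1|^{2}$. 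Summing over $|\gamma|\leq s$ produces constants $c',C'>0$, depending only on $\sigma'$ and $\sigma''$, such that
\begin{equation*}
c'\bigl(\|{\bf u}^1\|^2_{H^s}+\|\phi^1\|^2_{H^s}+\varepsilon\|\nabla\phi^1\|^2_{H^s}\bigr)\leq\mathcal{E}_s(t)\leq C'\bigl(\|{\bf u}^1\|^2_{H^s}+\|\phi^1\|^2_{H^s}+\varepsilon\|\nabla\phi^1\|^2_{H^s}\bigr).
\end{equation*}

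Third, integrating the summed differential inequality on $[0,t]\subset[0,T_{\varepsilon}]$ yields $\mathcal{E}_s(t)\leq\mathcal{E}_s(0)+C\int_0^t(\cdots)d\tau$; applying the upper equivalence at $t=0$ and the lower equivalence at time $t$, and dividing by $c'$ (which absorbs into a new $C,C'$), gives exactly \eqref{equ11}. The constant $C'$ on the initial data then carries only dependence on $\sigma',\sigma''$, as required.

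I do not anticipate any real obstacle: the genuine analytical content is already contained in Lemma \ref{lem5} (together with Lemmas \ref{lem6} and \ref{lem7}), so the remaining work is bookkeeping: summing over $\gamma$, integrating in time, and invoking the uniform positivity of $n^{\varepsilon}$ to pass between the dissipation-style weighted quantities and the standard Sobolev norms. The only subtlety worth checking is that the density bounds from \eqref{assumption1} are indeed valid uniformly on the whole interval $[0,T_{\varepsilon}]$ under the standing bootstrap assumption \eqref{assumption}, which is precisely how $\varepsilon_1=\varepsilon_1(\tilde C)$ was selected.
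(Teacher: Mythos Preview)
Your proposal is correct and follows essentially the same route as the paper: the paper's proof consists precisely of integrating \eqref{equ101} over $[0,t]$, summing over $|\gamma|\leq s$, and then invoking the two-sided density bounds $\sigma'<n^{0}<\sigma''$ and $\sigma'/2<n^{\varepsilon}<2\sigma''$ from \eqref{assumption1} to pass from the weighted energy to the plain $H^{s}$ norms. The only cosmetic difference is that you sum first and integrate second, whereas the paper integrates first and sums second; this is immaterial.
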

\begin{proof}
This is shown by integrating \eqref{equ101} over $[0,t]$ and summing them up for $|\gamma|\leq s$, and then using $\sigma'<n^0<\sigma''$ and $\frac{\sigma'}{2}<n^{\varepsilon}<2\sigma''$ for any $t\in [0,T_{\varepsilon}]$ in \eqref{blowup} and \eqref{assumption1} for $0<\varepsilon<\varepsilon_1$.
\end{proof}

However, this Gronwall inequality is not closed since the right hand side of \eqref{equ11} depends on $\varepsilon^2\|\Delta\phi^1\|_{H^s}^2$, which does not appear on the left hand side. This will be treated in the next subsection.

\subsection{Weighted $s+1$ order estimates}
In the following, we also let $\gamma$ be a multiindex with $|\gamma|=s$. The main result in this subsection is Proposition \ref{prop2}.

\begin{lemma}\label{lem8}
Let $s\geq0$ be a positive integer, $(n^1,{\bf u}^1,\phi^1)$ be a smooth solution for the system \eqref{rem}. There exists $\varepsilon_1>0$ and $C,C'>0$ such that
\begin{equation}\label{equ45}
\begin{split}
\frac{\varepsilon}{2}\frac{d}{dt}\int|\partial^{\gamma}&\nabla{\bf u}^1|^2 \leq -\frac{\varepsilon}{2}\frac{d}{dt}\int\frac{ n^0}{n^{\varepsilon}}|\partial^{\gamma}\nabla\phi^1|^2 -\frac{\varepsilon^2}{2}\frac{d}{dt}\int\frac{1}{n^\varepsilon}|\partial^{\gamma}\Delta\phi^1|^2\\
&+C(1+\varepsilon^2|||({\bf u}^1,\phi^1)|||_{\varepsilon,4}^2)(1+|||({\bf u}^1,\phi^1)|||_{\varepsilon,s}^2+|||({\bf u}^1,\phi^1)|||_{\varepsilon,3}^2),
\end{split}
\end{equation}
for any $0<\varepsilon<\varepsilon_1$.
\end{lemma}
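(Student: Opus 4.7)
\medskip
\noindent\textbf{Proof proposal.} My plan is to run the same energy-cancellation argument as in Lemma \ref{lem5}, but one spatial derivative higher and weighted by $\varepsilon$. Concretely, I would apply $\partial^{\gamma}\nabla$ to \eqref{rem-u} and take the $L^2$ inner product with $\varepsilon\,\partial^{\gamma}\nabla{\bf u}^1$. The LHS produces exactly $\frac{\varepsilon}{2}\frac{d}{dt}\|\partial^{\gamma}\nabla{\bf u}^1\|^2$, and the RHS splits into four analogues of $I$--$IV$ in \eqref{equ4}, each carrying one extra spatial derivative and a factor of $\varepsilon$.

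The three transport analogues (corresponding to $II, III, IV$) should be handled exactly as in Lemma \ref{lem5}: integration by parts to remove the pure transport pieces, followed by the commutator estimates \eqref{commutator} and the Sobolev embedding $H^2\hookrightarrow L^{\infty}$ valid for $d\leq 3$. Because the triple norm in \eqref{trinorm} already contains $\varepsilon\|\nabla{\bf u}^1\|^2_{H^s}$, bounds of the type $C\varepsilon\|{\bf u}^1\|_{H^{s+1}}^2+C\varepsilon\|{\bf u}^1\|_{H^4}^2$ fit directly inside the triple norms on the RHS of \eqref{equ45}.

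The main term is the pressure contribution $-\varepsilon\int\partial^{\gamma}\nabla\nabla\phi^1\cdot\partial^{\gamma}\nabla{\bf u}^1$. After one integration by parts this equals $\varepsilon\int\partial^{\gamma}\nabla\phi^1\,\partial^{\gamma}\nabla(\nabla\cdot{\bf u}^1)$. I would then differentiate the continuity equation \eqref{rem-n} by $\partial^{\gamma}\nabla$ (the analogue of \eqref{equ13}) to solve for $(n^0+\varepsilon n^1)\partial^{\gamma}\nabla(\nabla\cdot{\bf u}^1)$ as $-\partial_t\partial^{\gamma}\nabla n^1$ plus transport and commutator terms. The dominant piece is then $-\varepsilon\int\frac{\partial^{\gamma}\nabla\phi^1}{n^{\varepsilon}}\partial_t\partial^{\gamma}\nabla n^1$, into which I would substitute the differentiated Poisson equation \eqref{equ18}. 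The $\partial_t(n^0\phi^1)$ substitution contributes $-\frac{\varepsilon}{2}\frac{d}{dt}\int\frac{n^0}{n^{\varepsilon}}|\partial^{\gamma}\nabla\phi^1|^2$ upon isolating the leading $n^0\,\partial_t\partial^{\gamma}\nabla\phi^1$ factor, while the $\varepsilon\,\partial_t\partial^{\gamma}\nabla\Delta\phi^1$ substitution, after one further integration by parts in the remaining $\nabla$, yields $-\frac{\varepsilon^2}{2}\frac{d}{dt}\int\frac{1}{n^{\varepsilon}}|\partial^{\gamma}\Delta\phi^1|^2$. The $\partial_t\nabla\Delta\phi^0$ piece is smooth and harmless, and the $\sqrt{\varepsilon}\,\partial_t\nabla R^1$ piece is controlled by Lemma \ref{lem1} and Corollary \ref{cor1}.

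The hard part will be keeping every residual error term within the scope of $|||(\cdot)|||^2_{\varepsilon,s}$, $|||(\cdot)|||^2_{\varepsilon,3}$, or the prefactor $\varepsilon^2|||(\cdot)|||^2_{\varepsilon,4}$. The dangerous contributions naively appear at order $s+2$ (e.g., $\|\partial^{\gamma}\nabla\nabla\phi^1\|$), but each comes with a compensating $\varepsilon$ or $\varepsilon^2$, and the Riesz identity $\partial_i\partial_j=-R_iR_j\Delta$ allows me to trade any mixed second derivative for a $\Delta$, so such terms get absorbed into the $\varepsilon^2\|\Delta\phi^1\|_{H^s}^2$ slot of $|||\phi^1|||^2_{\varepsilon,s}$. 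Lemma \ref{lem2} will convert the remaining $n^1$-norms into $\phi^1$-norms, producing exactly the $\varepsilon^2|||({\bf u}^1,\phi^1)|||^2_{\varepsilon,4}$ prefactor of \eqref{equ45}, while Lemmas \ref{lem3}--\ref{lem4} and Corollary \ref{cor2} control the time derivatives arising from differentiating the weights $n^0/n^{\varepsilon}$ and $1/n^{\varepsilon}$ inside the $\frac{d}{dt}$ brackets.
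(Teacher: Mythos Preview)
Your plan is correct and tracks the paper's argument closely; the only tactical difference is where you park the extra derivative. The paper applies $\partial^{\gamma}$ to \eqref{rem-u} and tests with $-\varepsilon\partial^{\gamma}\Delta{\bf u}^1$, then integrates by parts three times so that the pressure term becomes $I^{\varepsilon}=-\varepsilon\int\partial^{\gamma}\Delta\phi^1\,\partial^{\gamma}\nabla\cdot{\bf u}^1$. This lets them reuse the substitution \eqref{equ13} for $\partial^{\gamma}\nabla\cdot{\bf u}^1$ \emph{verbatim} from Lemma \ref{lem5} (only the multiplier changes from $\partial^{\gamma}\phi^1/n^{\varepsilon}$ to $\varepsilon\partial^{\gamma}\Delta\phi^1/n^{\varepsilon}$), and likewise reuse \eqref{equ18} without any further spatial differentiation. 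In that arrangement the energy $\varepsilon^2\!\int\frac{1}{n^{\varepsilon}}|\partial^{\gamma}\Delta\phi^1|^2$ drops out directly from the $-\varepsilon\,\partial_t\partial^{\gamma}\Delta\phi^1$ piece (their $I^{\varepsilon}_{11}$), while the energy $\varepsilon\!\int\frac{n^0}{n^{\varepsilon}}|\partial^{\gamma}\nabla\phi^1|^2$ comes from one spatial integration by parts on $\int\frac{\varepsilon n^0}{n^{\varepsilon}}\partial^{\gamma}\Delta\phi^1\,\partial_t\partial^{\gamma}\phi^1$ (their $I^{\varepsilon}_{122}$). Your route --- pairing $\varepsilon\partial^{\gamma}\nabla\phi^1$ with $\partial^{\gamma}\nabla(\nabla\cdot{\bf u}^1)$ and substituting a $\nabla$-differentiated \eqref{equ13} --- is equivalent and simply swaps which of the two energies appears directly and which needs the extra integration by parts; it does, however, generate a few additional commutator terms (from $\nabla$ hitting $n^{\varepsilon}$ in \eqref{equ13}) that the paper's organization sidesteps. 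Either way the residual errors are closed by Lemmas \ref{lem1}--\ref{lem4}, Corollaries \ref{cor1}--\ref{cor2}, and the Riesz bound, exactly as you outline.
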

\begin{proof}
Let $\gamma$ be a multiindex with $|\gamma|=s\geq0$. Taking $\partial^{\gamma}$ in the second equation of \eqref{rem}, we obtain
\begin{equation*}
\partial_t\partial^{\gamma}{\bf u}^{1}+\partial^{\gamma}({\bf u}^0\cdot\nabla {\bf u}^1)+\partial^{\gamma}({\bf u}^1\cdot\nabla {\bf u}^0)+\varepsilon \partial^{\gamma}({\bf u}^1\cdot\nabla {\bf u}^1)=-\partial^{\gamma}\nabla\phi^1.
\end{equation*}
Taking $L^2$ inner product with $-\varepsilon \partial^{\gamma}\Delta{\bf u}^1$, we obtain
\begin{equation}\label{equ12}
\begin{split}
\frac{\varepsilon}{2}\frac{d}{dt}\int|\partial^{\gamma}\nabla{\bf u}^1|^2 =&-\varepsilon\int\partial_t\partial^{\gamma}{\bf u}^1\partial^{\gamma}\Delta{\bf u}^1\\
=&\varepsilon\int\partial^{\gamma}\nabla\phi^1\partial^{\gamma}\Delta{\bf u}^1 +\varepsilon^2\int\partial^{\gamma}({\bf u}^1\cdot\nabla {\bf u}^1)\partial^{\gamma}\Delta{\bf u}^1\\
&+\varepsilon\int\partial^{\gamma}({\bf u}^0\cdot\nabla {\bf u}^1)\partial^{\gamma}\Delta{\bf u}^1 +\varepsilon\int\partial^{\gamma}({\bf u}^1\cdot\nabla {\bf u}^0)\partial^{\gamma}\Delta{\bf u}^1\\
=&:I^{\varepsilon}+II^{\varepsilon}+III^{\varepsilon}+IV^{\varepsilon}.
\end{split}
\end{equation}

\begin{itemize}
  \item \emph{Estimate of $IV^{\varepsilon}$.}
\end{itemize}
The term $IV^{\varepsilon}$ can be bounded by
\begin{equation}\label{equ41}
\begin{split}
IV^{\varepsilon}=&-\varepsilon\int\partial^{\gamma}\nabla({\bf u}^1\cdot\nabla {\bf u}^0)\partial^{\gamma}\nabla{\bf u}^1\\
=&-\varepsilon\int\left(\partial^{\gamma}(\nabla{\bf u}^1\cdot\nabla {\bf u}^0) +\partial^{\gamma}({\bf u}^1\cdot\nabla^2{\bf u}^0)\right)\partial^{\gamma}\nabla{\bf u}^1\\
\leq & C\varepsilon(\|\partial^{\gamma}(\nabla{\bf u}^1\cdot\nabla{\bf u}^0)\|^2_{L^2}+\|\partial^{\gamma}({\bf u}^1\cdot\nabla^2{\bf u}^0)\|^2_{L^2})+C\varepsilon\|\partial^{\gamma}\nabla{\bf u}^1\|^2_{L^2}\\
\leq & C(\|{\bf u}^1\|^2_{H^s}+\varepsilon\|\nabla{\bf u}^1\|^2_{H^s}+\|{\bf u}^1\|^2_{H^3}),
\end{split}
\end{equation}
where we have used the commutator estimates \eqref{commutator}, the Sobolev embedding $H^2\hookrightarrow L^{\infty}$ when $d\leq 3$ and the fact that $n^0$ and ${\bf u}^0$ are known smooth solutions of the Euler equation \eqref{EQ} by Theorem \ref{thm1}.

\begin{itemize}
  \item \emph{Estimate of $III^{\varepsilon}$.}
\end{itemize}
By integration by parts, the third term $III^{\varepsilon}$ can be rewritten as
\begin{equation*}
\begin{split}
III^{\varepsilon}=&\varepsilon\int{\bf u}^0\cdot\nabla\partial^{\gamma}{\bf u}^1\partial^{\gamma}\Delta{\bf u}^1 +\varepsilon\int[\partial^{\gamma},{\bf u}^0]\cdot\nabla{\bf u}^1\partial^{\gamma}\Delta{\bf u}^1\\
=&\varepsilon\int\sum_{i,j,k}{u}^0_i\partial_i\partial^{\gamma}{u}^1_k \partial^{\gamma}\partial_j\partial_j{u}^1_k +\varepsilon\int\sum_{\beta=1}^{\gamma}C_{\gamma}^{\beta} \partial^{\beta}{\bf u}^0\cdot\partial^{\gamma-\beta}\nabla{\bf u}^1\cdot \partial^{\gamma}\Delta{\bf u}^1\\
=&-2\varepsilon\int\sum_{i,j,k}\partial_j{u}^0_i \partial_i\partial^{\gamma}{u}^1_k \partial^{\gamma}\partial_j{u}^1_k +\varepsilon\int\sum_{i,j,k}\partial_i{u}^0_i \partial^{\gamma}\partial_j{u}^1_k \partial^{\gamma}\partial_j{u}^1_k\\
&-\varepsilon\int\sum_{\beta=1}^{\gamma}C_{\gamma}^{\beta} \partial^{\beta}{\bf u}^0\cdot\partial^{\gamma-\beta}\nabla^2{\bf u}^1\cdot \partial^{\gamma}\nabla{\bf u}^1 -\varepsilon\int\sum_{\beta=1}^{\gamma}C_{\gamma}^{\beta} \partial^{\beta}\nabla{\bf u}^0\cdot\partial^{\gamma-\beta}\nabla{\bf u}^1\cdot \partial^{\gamma}\nabla{\bf u}^1\\
=&:III^{\varepsilon}_1+III^{\varepsilon}_2+III^{\varepsilon}_3 +III^{\varepsilon}_4,
\end{split}
\end{equation*}
where ${\bf u}^0=(u^0_1,\cdots,u^0_d)$ and ${\bf u}^1=(u^1_1,\cdots,u^1_d)$ and we have used integration by parts twice in the third equality. For $III^{\varepsilon}_1$, $III^{\varepsilon}_2$ and $III^{\varepsilon}_3$, we easily obtain
\begin{equation*}
\begin{split}
|III^{\varepsilon}_1,III^{\varepsilon}_2,III^{\varepsilon}_3|\leq C\varepsilon\|\nabla{\bf u}^1\|_{H^s}^2.
\end{split}
\end{equation*}
For $III^{\varepsilon}_4$, by H\"older inequality, we obtain
\begin{equation*}
\begin{split}
|III^{\varepsilon}_4|\leq C\varepsilon\|{\bf u}^1\|_{H^s}\|\nabla{\bf u}^1\|_{H^s}\leq C(\|{\bf u}^1\|_{H^s}^2+\varepsilon\|\nabla{\bf u}^1\|_{H^s}^2).
\end{split}
\end{equation*}
Summing them up, we obtain
\begin{equation}\label{equ42}
\begin{split}
|III^{\varepsilon}|\leq C(\|{\bf u}^1\|_{H^s}^2+\varepsilon\|\nabla{\bf u}^1\|_{H^s}^2).
\end{split}
\end{equation}

\begin{itemize}
  \item \emph{Estimate of $II^{\varepsilon}$.}
\end{itemize}
For the second term $II^{\varepsilon}$, by integration by parts, we have
\begin{equation*}
\begin{split}
II^{\varepsilon}=&\varepsilon^2\int{\bf u}^1\cdot\partial^{\gamma}\nabla{\bf u}^1 \partial^{\gamma}\Delta{\bf u}^1 +\varepsilon^2\int[\partial^{\gamma},{\bf u}^1]\cdot\nabla{\bf u}^1\partial^{\gamma}\Delta{\bf u}^1\\
=&-\varepsilon^2\int\sum_{i,j,k}\partial_j{u}^1_i \partial^{\gamma}\partial_i{u}^1_k\partial^{\gamma}\partial_j{u}^1_k +\frac{\varepsilon^2}{2}\int\sum_{i,j,k}\partial_i{u}^1_i \partial^{\gamma}\partial_j{u}^1_k\partial^{\gamma}\partial_j{u}^1_k\\
&-\varepsilon^2\int[\partial^{\gamma},\nabla{\bf u}^1]\cdot\nabla{\bf u}^1\cdot\partial^{\gamma}\nabla{\bf u}^1 -\varepsilon^2\int[\partial^{\gamma},{\bf u}^1]\cdot\nabla^2{\bf u}^1\cdot\partial^{\gamma}\nabla{\bf u}^1 =:\sum_{i=1}^4II^{\varepsilon}_{i}.
\end{split}
\end{equation*}
By H\"older inequality, $II^{\varepsilon}_{1}$ and $II^{\varepsilon}_{2}$ are bounded by
\begin{equation*}
\begin{split}
|II^{\varepsilon}_{1},II^{\varepsilon}_{2}|\leq & C(\varepsilon\|\nabla{\bf u}^1\|_{L^{\infty}})(\varepsilon\|\nabla{\bf u}^1\|_{H^s}^2)\\
\leq & C(\varepsilon\|{\bf u}^1\|_{H^3})(\varepsilon\|\nabla{\bf u}^1\|_{H^s}^2).
\end{split}
\end{equation*}
By commutator estimates,
\begin{equation*}
\begin{split}
\|[\partial^{\gamma},\nabla{\bf u}^1]\cdot\nabla{\bf u}^1\|_{L^2}\leq & C(\|\nabla^2{\bf u}^1\|_{L^{\infty}}\|\nabla{\bf u}^1\|_{H^{s-1}} +\|\nabla{\bf u}^1\|_{H^s}\|\nabla{\bf u}^1\|_{L^{\infty}})\\
\leq & C\|{\bf u}^1\|_{H^4}(\|{\bf u}^1\|_{H^{s}}+\|\nabla{\bf u}^1\|_{H^s}).
\end{split}
\end{equation*}
Hence
\begin{equation*}
\begin{split}
|II^{\varepsilon}_{3}|\leq  C\varepsilon\|{\bf u}^1\|_{H^4}(\|{\bf u}^1\|^2_{H^{s}}+\varepsilon\|\nabla{\bf u}^1\|^2_{H^s}).
\end{split}
\end{equation*}
Similarly, by commutator estimates
\begin{equation*}
\begin{split}
|II^{\varepsilon}_{4}|\leq & C\varepsilon^2(\|\nabla{\bf u}^1\|_{L^{\infty}}\|\nabla^2{\bf u}^1\|_{H^{s-1}} +\|{\bf u}^1\|_{H^s}\|\nabla^2{\bf u}^1\|_{L^{\infty}})\|\nabla{\bf u}^1\|_{H^s}\\
\leq & C\varepsilon\|{\bf u}^1\|_{H^4}(\|{\bf u}^1\|^2_{H^{s}}+\varepsilon\|\nabla{\bf u}^1\|^2_{H^s}).
\end{split}
\end{equation*}
Summing them up, we obtain
\begin{equation}\label{equ43}
\begin{split}
|II^{\varepsilon}|\leq & C\varepsilon\|{\bf u}^1\|_{H^4}(\|{\bf u}^1\|^2_{H^{s}}+\varepsilon\|\nabla{\bf u}^1\|^2_{H^s}).
\end{split}
\end{equation}

\begin{itemize}
  \item \emph{Estimate of $I^{\varepsilon}$.}
\end{itemize}
In the following, we treat the first term $I^{\varepsilon}$ in \eqref{equ12}. By integration by parts thrice, we obtain
\begin{equation*}
\begin{split}
I^{\varepsilon}=-\varepsilon\int\partial^{\gamma}\Delta\phi^1 \partial^{\gamma}\nabla\cdot{\bf u}^1.
\end{split}
\end{equation*}
By using \eqref{equ13}, we obtain
\begin{equation}\label{equ14}
\begin{split}
I^{\varepsilon}=& \int\frac{\varepsilon\partial^{\gamma}\Delta\phi^1}{n^\varepsilon} \partial_t\partial^{\gamma}n^1 +\int\frac{\varepsilon\partial^{\gamma}\Delta\phi^1}{n^\varepsilon} \partial^{\gamma}({\bf u}^\varepsilon\cdot\nabla n^1)\\
&+\int\frac{\varepsilon\partial^{\gamma}\Delta\phi^1}{n^\varepsilon} [\partial^{\gamma},n^\varepsilon]\nabla\cdot{\bf u}^1
+\int\frac{\varepsilon\partial^{\gamma}\Delta\phi^1}{n^\varepsilon} \partial^{\gamma}({\bf u}^1\cdot\nabla n^0)\\
&+\int\frac{\varepsilon\partial^{\gamma}\Delta\phi^1}{n^\varepsilon} \partial^{\gamma}({n}^1\cdot\nabla {\bf u}^0) =:\sum_{i=1}^5I^{\varepsilon}_{i}.
\end{split}
\end{equation}

In the following, we estimate $I^{\varepsilon}_{3}\sim I^{\varepsilon}_{5}$ while leaving the estimates of $I^{\varepsilon}_{1}$ and $I^{\varepsilon}_{2}$ to the next two lemmas. For $I^{\varepsilon}_{3}$, we have
\begin{equation*}
\begin{split}
I^{\varepsilon}_{3}\leq & C\varepsilon\|\partial^{\gamma}\Delta\phi^1\|_{L^2}(\|\nabla(n^0+\varepsilon n^1)\|_{L^{\infty}}\|\nabla\cdot{\bf u}^1\|_{H^{s-1}}+\|n^0+\varepsilon n^1\|_{H^{s}}\|\nabla\cdot{\bf u}^1\|_{L^{\infty}})\\
\leq & C\varepsilon^2\|\Delta\phi^1\|^2_{H^s}+ \|{\bf u}^1\|^2_{H^s}+\|{\bf u}^1\|^2_{H^3} +C\varepsilon^2(\|n^1\|^2_{H^3}+\|{\bf u}^1\|^2_{H^3})(\|n^1\|^2_{H^s}+\|{\bf u}^1\|^2_{H^s}),
\end{split}
\end{equation*}
where we have used the commutator estimates \eqref{commutator} and the fact that $n^0+\varepsilon n^1$ are bounded from above and below by positive numbers when $\varepsilon<\varepsilon_1$ is small enough in \eqref{assumption1}. Recalling Lemma \ref{lem2} and the definition of the triple norm \eqref{trinorm}, we obtain
\begin{equation*}
\begin{split}
I^{\varepsilon}_{3}\leq & C(1+\varepsilon^2|||({\bf u}^1,\phi)|||^2_{\varepsilon,3})|||({\bf u}^1,\phi)|||^2_{\varepsilon,s}+C\|{\bf u}^1\|^2_{H^3}.
\end{split}
\end{equation*}
For $I^{\varepsilon}_{4}$, we have
\begin{equation*}
\begin{split}
I^{\varepsilon}_{4}\leq & C\varepsilon\|\Delta\phi^1\|_{H^s}(\|{\bf u}^1\|_{L^\infty}\|\nabla n^0\|_{H^s}+\|{\bf u}^1\|_{H^s}\|\nabla n^0\|_{L^\infty})\\
\leq & C\varepsilon^2\|\Delta\phi^1\|^2_{H^s}+C\|{\bf u}^1\|^2_{H^s}+C\|{\bf u}^1\|^2_{H^2}.
\end{split}
\end{equation*}
Similarly, $I^{\varepsilon}_{5}$ can be bounded by
\begin{equation*}
\begin{split}
I^{\varepsilon}_{5}\leq & C\varepsilon\|\Delta\phi^1\|^2_{H^s} +C\|n^1\|^2_{H^s}+C\|n^1\|^2_{H^2}\\
\leq & C(1+|||\phi|||^2_{\varepsilon,s})+C\|n^1\|^2_{H^2},
\end{split}
\end{equation*}
thanks to Lemma \ref{lem2}. Summarizing, we obtain from \eqref{equ14}
\begin{equation}\label{equ44}
\begin{split}
I^{\varepsilon}\leq & I^{\varepsilon}_{1}+I^{\varepsilon}_{2} +C\|{\bf u}^1\|^2_{H^3}+C\|n^1\|^2_{H^2}\\
&+C(1+\varepsilon^2|||({\bf u}^1,\phi)|||^2_{\varepsilon,3})(1+|||({\bf u}^1,\phi)|||^2_{\varepsilon,s}),
\end{split}
\end{equation}
where $I^{\varepsilon}_{1}$ and $I^{\varepsilon}_{2}$ will be treated in the next two lemmas.
\end{proof}

\begin{lemma}\label{lem9}
The term $I^{\varepsilon}_2$ in \eqref{equ14} can be estimated as
\begin{equation*}
\begin{split}
I^{\varepsilon}_{2}\leq C(1+\varepsilon^2|||({\bf u}^1,\phi^1)|||_{\varepsilon,3}^2)(1+|||({\bf u}^1,\phi^1)|||_{\varepsilon,s}^2)+|||\phi^1|||_{\varepsilon,3}^2,
\end{split}
\end{equation*}
for all $0<\varepsilon<\varepsilon_1$ for some $\varepsilon_1>0$.
\end{lemma}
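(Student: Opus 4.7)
The plan is to mirror the treatment of $I_{21}$ in Lemma \ref{lem6}: substitute the remainder Poisson equation in the form \eqref{equ15} to trade $\partial^\gamma\nabla n^1$ for derivatives of $\phi^1$ (together with the smooth background term $\partial^\gamma\nabla\Delta\phi^0$ and the correction $\sqrt{\varepsilon}\partial^\gamma\nabla R^1$), and then exploit the Poisson structure via repeated integration by parts so that every residue either fits inside $|||\cdot|||_{\varepsilon,s}$ or is directly controllable by the smooth background.

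First I would write
$$I_2^\varepsilon = \int\frac{\varepsilon\,\partial^\gamma\Delta\phi^1}{n^\varepsilon}\,{\bf u}^\varepsilon\cdot\partial^\gamma\nabla n^1 + \int\frac{\varepsilon\,\partial^\gamma\Delta\phi^1}{n^\varepsilon}\,[\partial^\gamma,{\bf u}^\varepsilon]\cdot\nabla n^1.$$
The commutator piece is dispatched with the Kato--Ponce type estimates of Lemma \ref{Le-inequ}, the elliptic control of Lemma \ref{lem2} (which converts $\|\nabla n^1\|_{H^{s-1}}$ into $1+|||\phi^1|||_{\varepsilon,s}$), the Sobolev embedding $H^2\hookrightarrow L^\infty$ valid for $d\leq 3$, and Young's inequality, and is comfortably absorbed into the desired right-hand side. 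Substituting \eqref{equ15} into the remaining principal integral splits it into four sub-terms $I^\varepsilon_{2,1},\ldots,I^\varepsilon_{2,4}$ corresponding respectively to $\nabla\partial^\gamma(n^0\phi^1)$, $-\varepsilon\,\partial^\gamma\nabla\Delta\phi^1$, $-\partial^\gamma\nabla\Delta\phi^0$ and $-\sqrt{\varepsilon}\,\partial^\gamma\nabla R^1$. The background term $I^\varepsilon_{2,3}$ is trivial since $\phi^0$ is smooth, and $I^\varepsilon_{2,4}$ is handled with Lemma \ref{lem1} together with the Riesz bound $\|\nabla\phi^1\|\leq C\|\Delta\phi^1\|$.

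For $I^\varepsilon_{2,2}$, one integration by parts transforms
$$-\varepsilon\int\frac{\varepsilon\,\partial^\gamma\Delta\phi^1}{n^\varepsilon}\,{\bf u}^\varepsilon\cdot\partial^\gamma\nabla\Delta\phi^1 \quad\text{into}\quad \frac{\varepsilon^2}{2}\int\nabla\cdot\Bigl(\frac{{\bf u}^\varepsilon}{n^\varepsilon}\Bigr)\,|\partial^\gamma\Delta\phi^1|^2,$$
and the direct bound $\|\nabla({\bf u}^\varepsilon/n^\varepsilon)\|_{L^\infty}\leq C(1+\varepsilon^2\|(n^1,{\bf u}^1)\|_{H^3}^2)$ combined with $\varepsilon^2\|\Delta\phi^1\|_{H^s}^2\leq|||\phi^1|||_{\varepsilon,s}^2$ delivers an admissible contribution.

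The delicate piece is $I^\varepsilon_{2,1}$. After expanding $\nabla(n^0\phi^1)=\phi^1\nabla n^0+n^0\nabla\phi^1$ and extracting the $[\partial^\gamma,n^0]\nabla\phi^1$ commutator (again handled by Lemma \ref{Le-inequ}), one is left with the core term
$$M=\varepsilon\int\frac{n^0{\bf u}^\varepsilon}{n^\varepsilon}\cdot\partial^\gamma\nabla\phi^1\,\partial^\gamma\Delta\phi^1.$$
Here a naive Cauchy--Schwarz would produce the unweighted $\|\nabla\phi^1\|_{H^s}$, which is \emph{not} controlled by the triple norm; this is the principal obstacle. My plan is to integrate by parts twice: setting $A_j=n^0u^\varepsilon_j/n^\varepsilon$ and writing $\Delta=\partial_k\partial_k$, one IBP in $\partial_k$ produces $-\varepsilon\int\partial_kA_j\,\partial^\gamma\partial_j\phi^1\,\partial^\gamma\partial_k\phi^1$, which is of the acceptable form $\varepsilon\|\nabla\phi^1\|_{H^s}^2$, plus a remainder $-\varepsilon\int A_j\,\partial^\gamma\partial_j\partial_k\phi^1\,\partial^\gamma\partial_k\phi^1$; using the symmetry $\partial_j\partial_k=\partial_k\partial_j$ to identify the product as $\tfrac12\partial_j|\partial^\gamma\nabla\phi^1|^2$, a second IBP turns this remainder into $\tfrac{\varepsilon}{2}\int\partial_jA_j\,|\partial^\gamma\nabla\phi^1|^2$. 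Both residues are bounded by $C(1+\varepsilon^2|||({\bf u}^1,\phi^1)|||^2_{\varepsilon,3})\,\varepsilon\|\nabla\phi^1\|_{H^s}^2\leq C(1+\varepsilon^2|||({\bf u}^1,\phi^1)|||^2_{\varepsilon,3})\,|||\phi^1|||_{\varepsilon,s}^2$. Either IBP performed in isolation would leave a derivative on $\phi^1$ that the $\varepsilon$-weighted norms cannot absorb, which is why the pair is required. Collecting all pieces and invoking Lemma \ref{lem2} to convert $n^1$-norms into $\phi^1$-quantities yields the stated inequality.
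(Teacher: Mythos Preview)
Your proposal is correct and follows essentially the same route as the paper: the same split into the commutator $[\partial^{\gamma},{\bf u}^{\varepsilon}]\cdot\nabla n^1$ and the principal part, the same substitution of \eqref{equ15} into the principal part producing four sub-terms, the same single integration by parts for the $\varepsilon\partial^{\gamma}\nabla\Delta\phi^1$ piece, and --- crucially --- the same two-step integration by parts on the core term $M$ that converts $\varepsilon\int A_j\,\partial^{\gamma}\partial_j\phi^1\,\partial^{\gamma}\Delta\phi^1$ into terms of the form $\varepsilon\int(\partial A)\,|\partial^{\gamma}\nabla\phi^1|^2$; the paper records these as $I^{\varepsilon}_{2111}$ and $I^{\varepsilon}_{2112}$.

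One small correction: the inequality you quote for $I^{\varepsilon}_{2,4}$, namely $\|\nabla\phi^1\|\leq C\|\Delta\phi^1\|$, is false in general (low frequencies are not controlled), and the Riesz bound in the paper is only for second derivatives, $\|\partial_i\partial_j\phi^1\|\leq C\|\Delta\phi^1\|$. Fortunately you do not need it: after Cauchy--Schwarz and Lemma~\ref{lem1} one is left with $\varepsilon\|\phi^1\|_{H^{s+1}}^2\leq C\varepsilon\|\phi^1\|_{L^2}^2+C\varepsilon\|\nabla\phi^1\|_{H^s}^2$, and both pieces sit directly inside $|||\phi^1|||_{\varepsilon,s}^2$ without any Riesz-type transfer.
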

\begin{proof}
Recall that $I^{\varepsilon}_{2}$ is given by \eqref{equ14}.
By integration by parts, it can be rewritten as
\begin{equation*}
\begin{split}
I^{\varepsilon}_{2}=\int\frac{\varepsilon\partial^{\gamma}\Delta\phi^1} {n^\varepsilon}{\bf u}^\varepsilon\cdot\nabla\partial^{\gamma} n^1 +\int\frac{\varepsilon\partial^{\gamma}\Delta\phi^1}{n^\varepsilon} [\partial^{\gamma},{\bf u}^\varepsilon]\cdot\nabla n^1=:I^{\varepsilon}_{21}+I^{\varepsilon}_{22}.
\end{split}
\end{equation*}

\begin{itemize}
  \item \emph{Estimate of $I^{\varepsilon}_{22}$.}
\end{itemize}
For the second term $I^{\varepsilon}_{22}$, we obtain
\begin{equation}\label{equ32}
\begin{split}
I^{\varepsilon}_{22}\leq & C\varepsilon\|\partial^{\gamma}\Delta\phi^1\|_{L^2}\|[\partial^{\gamma}, {\bf u}^\varepsilon]\cdot\nabla n^1\|_{L^2}\\
\leq & C\varepsilon\|\partial^{\gamma}\Delta\phi^1\|_{L^2}(\|\nabla{\bf u}^\varepsilon\|_{L^{\infty}}\|\nabla n^1\|_{H^{s-1}}+\|{\bf u}^\varepsilon\|_{H^s}\|\nabla n^1\|_{L^{\infty}})\\
\leq & C\varepsilon\|\Delta\phi^1\|_{H^s}(\|n^1\|_{H^s}+\|n^1\|_{H^3} +\varepsilon\|{\bf u}^1\|_{H^3}\|n^1\|_{H^s} +\varepsilon\|n^1\|_{H^3}\|{\bf u}^1\|_{H^s})\\
\leq & C\varepsilon^2\|\Delta\phi^1\|^2_{H^s} +C(\|n^1\|^2_{H^s}+\|n^1\|^2_{H^3} +\varepsilon^2\|({\bf u}^1,n^1)\|^2_{H^3}(\|(n^1,{\bf u}^1)\|^2_{H^s}))\\
\leq & C(1+\varepsilon^2|||({\bf u}^1,\phi^1)|||_{\varepsilon,3}^2)(1+|||({\bf u}^1,\phi^1)|||_{\varepsilon,s}^2)+|||\phi^1|||_{\varepsilon,3}^2,
\end{split}
\end{equation}
thanks to the commutator estimates \eqref{commutator} in the second step, Lemma \ref{lem2} in the last step and the definition of the triple norm in \eqref{trinorm}.

\begin{itemize}
  \item \emph{Estimate of $I^{\varepsilon}_{21}$.}
\end{itemize}
In the rest of this lemma, we focus us on the treatment of $I^{\varepsilon}_{21}$. Recalling the remainder equation \eqref{rem-p} (see also \eqref{equ15}), $I^{\varepsilon}_{21}$ can be divided into
\begin{equation}\label{equ17}
\begin{split}
I^{\varepsilon}_{21}=&\int\frac{\varepsilon \partial^{\gamma}\Delta\phi^1}{n^\varepsilon}{\bf u}^\varepsilon\cdot\nabla\partial^{\gamma}(n^0\phi^1) -\varepsilon^2\int\frac{\partial^{\gamma}\Delta\phi^1}{n^\varepsilon} {\bf u}^\varepsilon\cdot\partial^{\gamma}\nabla\Delta\phi^1\\
&-\int\frac{\varepsilon\partial^{\gamma}\Delta\phi^1}{n^\varepsilon}{\bf u}^\varepsilon\cdot\partial^{\gamma}\nabla\Delta\phi^0 -\sqrt\varepsilon\int\frac{\varepsilon\partial^{\gamma}\Delta\phi^1} {n^\varepsilon}{\bf u}^\varepsilon\cdot\partial^{\gamma}\nabla R^1=:\sum_{i=1}^4I^{\varepsilon}_{21i}.
\end{split}
\end{equation}

In the following, we estimates the four terms on the RHS of \eqref{equ17}. For the first term $I^{\varepsilon}_{211}$, we have
\begin{equation}\label{equ16}
\begin{split}
I^{\varepsilon}_{211}=& \int\frac{\varepsilon n^0\partial^{\gamma}\Delta\phi^1}{n^\varepsilon}{\bf u}^\varepsilon\cdot \nabla\partial^{\gamma}\phi^1 +\int\frac{\varepsilon\partial^{\gamma}\Delta\phi^1}{n^\varepsilon}{\bf u}^\varepsilon\cdot[\partial^{\gamma},n^0]\nabla\phi^1\\
=&\frac{\varepsilon}2\int\nabla\cdot\left(\frac{n^0{\bf u}^\varepsilon}{n^\varepsilon}\right)|\partial^{\gamma}\nabla\phi^1|^2 -\varepsilon\int\sum_{i,j}\partial_i \left(\frac{n^0{u}_j^\varepsilon}{n^\varepsilon}\right) \partial^{\gamma}\partial_i\phi^1\partial^{\gamma}\partial_j\phi^1\\
&+\int\frac{\varepsilon\partial^{\gamma}\Delta\phi^1}{n^\varepsilon} {\bf u}^\varepsilon\cdot[\partial^{\gamma},n^0]\nabla\phi^1 =:I^{\varepsilon}_{2111}+I^{\varepsilon}_{2112}+I^{\varepsilon}_{2113},
\end{split}
\end{equation}
where $u_j^{\varepsilon}=({\bf u}^{\varepsilon})_j=u_j^0+{\varepsilon}u_j^1$. By using Sobolev embedding $H^2\hookrightarrow L^{\infty}$,
\begin{equation*}
\begin{split}
\|\nabla\left(\frac{n^0{\bf u}^\varepsilon}{n^\varepsilon}\right) \|_{L^{\infty}}\leq C+C\varepsilon^2(\|n^1\|^2_{H^3}+\|{\bf u}^1\|^2_{H^3}),
\end{split}
\end{equation*}
which yields for the first two terms on the RHS of \eqref{equ16}
\begin{equation*}
\begin{split}
|I^{\varepsilon}_{2111},I^{\varepsilon}_{2112}|\leq & C\varepsilon(1+\varepsilon^2(\|n^1\|^2_{H^3}+\|{\bf u}^1\|^2_{H^3}))\|\nabla\phi^1\|^2_{H^s}\\
\leq & C(1+\varepsilon^2|||({\bf u}^1,\phi^1)|||^2_{\varepsilon,3})|||\phi^1|||^2_{\varepsilon,s}.
\end{split}
\end{equation*}
For the term $I^{\varepsilon}_{2113}$, by the commutator estimates \eqref{commutator}, we have
\begin{equation*}
\begin{split}
\|[\partial^{\gamma},n^0]\nabla\phi^1\|_{L^2}\leq & C(\|\nabla n^0\|_{L^{\infty}}\|\nabla\phi^1\|_{H^{s-1}}+\|n^0\|_{H^s}\|\nabla\phi^1\|_{L^{\infty}})\\
\leq & C(\|\phi^1\|_{H^3}+\|\phi^1\|_{H^s}),
\end{split}
\end{equation*}
which implies that by H\"older inequality
\begin{equation*}
\begin{split}
I^{\varepsilon}_{2113}\leq & C\|\frac{1}{n^0+\varepsilon n^1}\|_{L^{\infty}}\|\varepsilon\partial^{\gamma}\Delta\phi^1({\bf u}^0+\varepsilon{\bf u}^1)\|_{L^2}\|[\partial^{\gamma},n^0]\nabla\phi^1\|_{L^2}\\
\leq & C(\|\phi^1\|^2_{H^{3}}+\|\phi^1\|^2_{H^{s}}) +C(1+\varepsilon^2\|{\bf u}^1\|_{L^{\infty}}^2) (\varepsilon^2\|\partial^{\gamma}\Delta\phi^1\|_{L^2}^2)\\
\leq & C(1+\varepsilon^2\|{\bf u}^1\|_{H^2}^2)|||\phi^1|||^2_{\varepsilon,s}+C\|\phi^1\|^2_{H^{3}},
\end{split}
\end{equation*}
thanks to \eqref{assumption1}, the Sobolev embedding $H^2\hookrightarrow L^{\infty}$ and the definition of the triple norm \eqref{trinorm}.

For the first term $I_{212}$, by integration by parts and Lemma \ref{lem2}, we have
\begin{equation*}
\begin{split}
|I^{\varepsilon}_{212}|=&|\frac{\varepsilon^2}{2} \int\nabla\cdot\left(\frac{{\bf u}^0+\varepsilon{\bf u}^1}{n^0+\varepsilon n^1}\right)|\partial^{\gamma}\Delta\phi^1|^2|\\
\leq & C(1+\varepsilon^2\|(n^1,{\bf u}^1)\|^2_{H^{3}}) \cdot\varepsilon^2\|\partial^{\gamma}\Delta\phi^1\|^2_{L^2}\\
\leq & C(1+\varepsilon^2|||({\bf u}^1,\phi^1)|||^2_{\varepsilon,3}) |||\phi^1|||^2_{\varepsilon,s}.
\end{split}
\end{equation*}

For the third term $I^{\varepsilon}_{213}$, by H\"older inequality and Sobolev embedding, we have
\begin{equation*}
\begin{split}
|I^{\varepsilon}_{213}|\leq & \|\frac{\varepsilon\partial^{\gamma}\Delta\phi^1({\bf u}^0+\varepsilon{\bf u}^1)}{n^0+\varepsilon n^1}\|_{L^2} \|\partial^{\gamma}\nabla\Delta\phi^0\|_{L^2}\\
\leq & \|\partial^{\gamma}\nabla\Delta\phi^0\|_{L^2}^2 +C(1+\varepsilon^2\|{\bf u}^1\|^2_{L^{\infty}})(\varepsilon^2\|\partial^{\gamma}\Delta\phi^1\|^2_{L^2})\\
\leq & C+ C(1+\varepsilon^2\|{\bf u}^1\|^2_{H^2})|||\phi^1|||^2_{\varepsilon,s}.
\end{split}
\end{equation*}

For the fourth term $I^{\varepsilon}_{214}$, we have by H\"older inequality,
\begin{equation*}
\begin{split}
I^{\varepsilon}_{214}\leq & C\|\sqrt{\varepsilon}\partial^{\gamma}\nabla R^1\|_{L^2}^2 +C(1+\varepsilon^2\|{\bf u}^1\|^2_{L^{\infty}}) (\varepsilon^2\|\partial^{\gamma}\Delta\phi^1\|^2_{H^s})\\
\leq & C\varepsilon(\|\nabla\phi^1\|_{H^s}^2+\|\phi^1\|^2) +C(1+\varepsilon^2\|{\bf u}^1\|^2_{H^{2}}) (\varepsilon^2\|\partial^{\gamma}\Delta\phi^1\|^2_{L^2})\\
\leq & C(1+\varepsilon^2\|{\bf u}^1\|^2_{H^2})|||\phi^1|||^2_{\varepsilon,s},
\end{split}
\end{equation*}
where we have used Lemma \ref{lem1} with $|\alpha|=s+1$ there and the fact that $\|\phi^1\|_{H^{s+1}}\approx \|\nabla\phi^1\|_{H^{s}}+\|\phi^1\|_{H^s}$. Summarizing, we obtain
\begin{equation}\label{equ33}
\begin{split}
I^{\varepsilon}_{21}\leq & C+C\|\phi^1\|^2_{H^3} +C(1+\varepsilon^2|||({\bf u}^1,\phi^1)|||^2_{\varepsilon,3}) |||\phi^1|||^2_{\varepsilon,s}.
\end{split}
\end{equation}
Putting \eqref{equ32} and \eqref{equ33} together, we end the proof of Lemma \ref{lem9}.
\end{proof}

\begin{lemma}\label{lem10}
The term $I^{\varepsilon}_1$ in \eqref{equ14} is bounded by
\begin{equation}\label{equ40}
\begin{split}
I^{\varepsilon}_{1}\leq & -\frac{\varepsilon}{2}\frac{d}{dt}\int(\frac{ n^0}{n^\varepsilon})|\partial^{\gamma}\nabla\phi^1|^2 -\frac{\varepsilon^2}{2}\frac{d}{dt}\int\frac{1}{n^\varepsilon} |\partial^{\gamma}\Delta\phi^1|^2\\
&+C(1+\varepsilon^2|||({\bf u}^1,\phi^1)|||_{\varepsilon,4}^2)(1+|||({\bf u}^1,\phi^1)|||_{\varepsilon,s}^2+|||({\bf u}^1,\phi^1)|||_{\varepsilon,3}^2),
\end{split}
\end{equation}
for all $0<\varepsilon<\varepsilon_1$, for some $\varepsilon_1>0$.
\end{lemma}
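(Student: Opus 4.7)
The plan is to mirror the proof of Lemma \ref{lem7}, replacing the test function $\partial^{\gamma}\phi^1/n^{\varepsilon}$ by the weighted version $\varepsilon\partial^{\gamma}\Delta\phi^1/n^{\varepsilon}$. The first step is to substitute the time-differentiated Poisson identity \eqref{equ18} into the defining expression $I^{\varepsilon}_{1}=\int\frac{\varepsilon\partial^{\gamma}\Delta\phi^1}{n^{\varepsilon}}\partial_t\partial^{\gamma}n^1$ from \eqref{equ14}. This decomposes $I^{\varepsilon}_{1}$ into four pieces corresponding to the four terms in \eqref{equ18}; the two that generate the $d/dt$ energies in \eqref{equ40} come from $-\varepsilon\partial^{\gamma}\partial_t\Delta\phi^1$ and from $\partial^{\gamma}\partial_t(n^0\phi^1)$, while the remaining two contribute only to the error.

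The piece from $-\varepsilon\partial^{\gamma}\partial_t\Delta\phi^1$ equals $-\varepsilon^2\int(\partial^{\gamma}\Delta\phi^1/n^\varepsilon)\partial^{\gamma}\partial_t\Delta\phi^1$, and a direct manipulation yields
\begin{equation*}
-\frac{\varepsilon^2}{2}\frac{d}{dt}\int\frac{|\partial^{\gamma}\Delta\phi^1|^2}{n^\varepsilon}+\frac{\varepsilon^2}{2}\int\partial_t\!\Bigl(\frac{1}{n^\varepsilon}\Bigr)|\partial^{\gamma}\Delta\phi^1|^2,
\end{equation*}
where the residual is bounded by $\|\partial_t n^1\|_{L^\infty}\lesssim|||({\bf u}^1,\phi^1)|||_{\varepsilon,3}$ (Lemma \ref{lem3} plus Sobolev embedding) and absorbs into the $|||\phi^1|||_{\varepsilon,s}^2$ component of the triple norm. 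For the piece from $\partial^{\gamma}\partial_t(n^0\phi^1)$, I would split $\partial_t(n^0\phi^1)=n^0\partial_t\phi^1+\phi^1\partial_tn^0$ and commute $\partial^\gamma$ past $n^0$ via \eqref{commutator}. The principal term is $\varepsilon\int(n^0/n^\varepsilon)\partial^{\gamma}\Delta\phi^1\,\partial_t\partial^{\gamma}\phi^1$; using $\Delta=\nabla\cdot\nabla$ and integrating by parts once transfers a gradient to $\partial_t\partial^{\gamma}\phi^1$ and one to the coefficient, producing
\begin{equation*}
-\varepsilon\int\frac{n^0}{n^\varepsilon}\partial^{\gamma}\nabla\phi^1\cdot\partial_t\partial^{\gamma}\nabla\phi^1-\varepsilon\int\nabla\!\Bigl(\frac{n^0}{n^\varepsilon}\Bigr)\partial_t\partial^{\gamma}\phi^1\cdot\partial^{\gamma}\nabla\phi^1.
\end{equation*}
The first of these integrates in time to $-\frac{\varepsilon}{2}\frac{d}{dt}\int(n^0/n^\varepsilon)|\partial^{\gamma}\nabla\phi^1|^2$ plus a correction involving $\partial_t(n^0/n^\varepsilon)$; the second is estimated by H\"older together with Corollary \ref{cor2}, which bounds $\|\partial_t\partial^{\gamma}\phi^1\|$ by $|||({\bf u}^1,\phi^1)|||_{\varepsilon,s+1}$.

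The remaining two pieces are routine: the $-\partial^{\gamma}\partial_t\Delta\phi^0$ contribution is harmless since $\phi^0$ is a smooth solution of \eqref{EQ}, and the $-\sqrt{\varepsilon}\partial^{\gamma}\partial_t R^1$ contribution is controlled by Corollary \ref{cor1}, the extra $\sqrt{\varepsilon}$ absorbing the $L^2$ pairing against $\varepsilon\partial^\gamma\Delta\phi^1$ and yielding a bound compatible with $|||\phi^1|||_{\varepsilon,s}^2$. The main obstacle will be the careful bookkeeping of $\varepsilon$-powers in the corrections: the factors $\nabla(n^0/n^\varepsilon)$, $\nabla^2(n^0/n^\varepsilon)$ and their time derivatives appearing through integration by parts and through $\partial_t(1/n^\varepsilon)$ introduce quantities like $\varepsilon^k\|n^1\|_{H^{2+k}}^k$, which via Lemma \ref{lem2} must be repackaged as $\varepsilon^2|||({\bf u}^1,\phi^1)|||_{\varepsilon,4}^2$ so as to produce exactly the prefactor in \eqref{equ40}. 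Once this bookkeeping is organised, the structure of Lemma \ref{lem7} carries over with the exponent $5$ in the triple norm replaced by $4$, because one fewer commutator layer is needed when $\partial^\gamma\Delta$ is already on $\phi^1$.
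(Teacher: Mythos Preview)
Your overall architecture matches the paper's: substitute \eqref{equ18}, split into four pieces, extract the two time-derivative energies from the $\varepsilon\partial_t\Delta\phi^1$ and $n^0\partial_t\phi^1$ contributions, and treat the rest as errors. The gap is in how you handle the cross term
\[
-\varepsilon\int\nabla\!\Bigl(\frac{n^0}{n^\varepsilon}\Bigr)\,\partial_t\partial^{\gamma}\phi^1\cdot\partial^{\gamma}\nabla\phi^1.
\]
You propose to estimate $\|\partial_t\partial^{\gamma}\phi^1\|$ directly via Corollary~\ref{cor2} at level $|\gamma|=s$, and you yourself note this yields a bound by $|||({\bf u}^1,\phi^1)|||_{\varepsilon,s+1}$. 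But the right-hand side of \eqref{equ40} contains only $|||\cdot|||_{\varepsilon,s}$ (and the lower-order $|||\cdot|||_{\varepsilon,3}$, $|||\cdot|||_{\varepsilon,4}$); no combination of the available $\varepsilon$-weights rescues this, because Lemma~\ref{lem3} at level $s$ inevitably produces $\varepsilon^2\|\Delta\phi^1\|_{H^{s+1}}^2$, and hence $\varepsilon^3\|\nabla\Delta\phi^1\|_{H^s}^2$, which is not controlled by $|||\phi^1|||_{\varepsilon,s}^2$. The resulting inequality would therefore fail to close the Gronwall argument in Proposition~\ref{prop2}.

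The paper's remedy is an extra integration by parts with respect to one derivative $\partial^{\gamma_1}$ (with $\gamma_1\leq\gamma$, $|\gamma_1|=1$) in this cross term, shifting it from $\partial_t\partial^{\gamma}\phi^1$ onto the other factors. This lowers the time-differentiated factor to $\partial_t\partial^{\gamma-\gamma_1}\phi^1$, which Lemmas~\ref{lem3}--\ref{lem4} bound by $C(1+|||({\bf u}^1,\phi^1)|||_{\varepsilon,s}^2)$, while the extra derivative lands on $\varepsilon\partial^{\gamma+\gamma_1}\nabla\phi^1$, which via the Riesz transforms satisfies $\|\varepsilon\partial^{\gamma+\gamma_1}\nabla\phi^1\|\leq C\varepsilon\|\partial^{\gamma}\Delta\phi^1\|\leq C|||\phi^1|||_{\varepsilon,s}$. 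A similar issue affects your treatment of the $R^1$ piece: invoking Corollary~\ref{cor1} at level $s$ again produces $|||\cdot|||_{\varepsilon,s+1}$; the paper instead uses Lemma~\ref{lem1} to reduce to $\varepsilon\|\partial_t\phi^1\|_{H^s}^2$ and then the $\varepsilon$-weighted gradient bound in Lemma~\ref{lem4} at level $s-1$ to close in $|||\cdot|||_{\varepsilon,s}$. Once these two refinements are incorporated, your outline goes through.
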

\begin{proof}
Recall that $I^{\varepsilon}_1$ is given in \eqref{equ14}. From the remainder equation \eqref{rem-p} (see also \eqref{equ18}), $I^{\varepsilon}_1$ can be divided into the following
\begin{equation}\label{equ35}
\begin{split}
I^{\varepsilon}_{1}=&-\varepsilon\int\frac{\varepsilon\partial^{\gamma}\Delta\phi^1}{n^0+\varepsilon n^1}\partial^{\gamma}\partial_t\Delta\phi^1 +\int\frac{\varepsilon\partial^{\gamma}\Delta\phi^1}{n^0+\varepsilon n^1}\partial^{\gamma}\partial_t(n^0\phi^1) \\
&+\int\frac{\varepsilon\partial^{\gamma}\Delta\phi^1}{n^0+\varepsilon n^1}\partial^{\gamma}\partial_t\Delta\phi^0 +\sqrt{\varepsilon}\int\frac{\varepsilon\partial^{\gamma}\Delta\phi^1}{n^0+\varepsilon n^1}\partial^{\gamma}\partial_tR^1=:\sum_{i=1}^4I^{\varepsilon}_{1i}.
\end{split}
\end{equation}
In the following, we will estimates the RHS terms one by one.

\begin{itemize}
  \item \emph{Estimate of $I^{\varepsilon}_{11}$.}
\end{itemize}
For the first term $I^{\varepsilon}_{11}$, we have
\begin{equation*}
\begin{split}
I^{\varepsilon}_{11}=&-\frac{\varepsilon^2}{2}\frac{d}{dt}\int\frac{1}{n^0+\varepsilon n^1}|\partial^{\gamma}\Delta\phi^1|^2 +\frac{\varepsilon^2}{2}\int\partial_t(\frac{1}{n^0+\varepsilon n^1})|\partial^{\gamma}\Delta\phi^1|^2.
\end{split}
\end{equation*}
Using Lemma \ref{lem3} and Sobolev embedding, we have
\begin{equation*}
\begin{split}
\|\partial_t(\frac{1}{n^0+\varepsilon n^1})\|_{L^{\infty}}\leq & C(1+\varepsilon\|\partial_tn^1\|_{L^{\infty}})\leq C(1+\varepsilon^2|||({\bf u}^1,\phi^1)|||_{\varepsilon,3}^2),
\end{split}
\end{equation*}
which yields that
\begin{equation}\label{equ36}
\begin{split}
I^{\varepsilon}_{11}\leq &-\frac{\varepsilon^2}{2}\frac{d}{dt}\int\frac{1}{n^0+\varepsilon n^1}|\partial^{\gamma}\Delta\phi^1|^2 +C(1+\varepsilon^2|||({\bf u}^1,\phi^1)|||_{\varepsilon,3}^2)|||\phi^1|||^2_{\varepsilon,s}.
\end{split}
\end{equation}

\begin{itemize}
  \item \emph{Estimate of $I^{\varepsilon}_{12}$.}
\end{itemize}
For the second term $I^{\varepsilon}_{12}$, we have
\begin{equation*}
\begin{split}
I^{\varepsilon}_{12}=& \int\frac{\varepsilon\partial^{\gamma}\Delta\phi^1}{n^0+\varepsilon n^1}\partial^{\gamma}(\partial_tn^0\phi^1) +\int\frac{\varepsilon\partial^{\gamma}\Delta\phi^1}{n^0+\varepsilon n^1}\partial^{\gamma}(n^0\partial_t\phi^1)\\
=&\int\frac{\varepsilon\partial^{\gamma}\Delta\phi^1}{n^0+\varepsilon n^1}\partial^{\gamma}(\partial_tn^0\phi^1) +\int\frac{\varepsilon\partial^{\gamma}\Delta\phi^1}{n^0+\varepsilon n^1}n^0\partial_t\partial^{\gamma}\phi^1 +\int\frac{\varepsilon\partial^{\gamma}\Delta\phi^1}{n^0+\varepsilon n^1}[\partial^{\gamma},n^0]\partial_t\phi^1\\
=&:\sum_{i=1}^3I^{\varepsilon}_{12i}
\end{split}
\end{equation*}
For the first term $I^{\varepsilon}_{121}$, by the multiplicative estimates \eqref{commutator}, we have
\begin{equation*}
\begin{split}
\|\partial^{\gamma}(\partial_tn^0\phi^1)\|_{L^2}\leq & C(\|\partial_{t}n^0\|_{L^{\infty}}\|\phi^1\|_{H^s} +\|\partial_{t}n^0\|_{H^s}\|\phi^1\|_{L^{\infty}})\\
\leq & C(\|\phi^1\|_{H^3}+\|\phi^1\|_{H^s}),
\end{split}
\end{equation*}
which yields
\begin{equation*}
\begin{split}
I^{\varepsilon}_{121}\leq & C(\|\phi^1\|_{H^3}+\|\phi^1\|_{H^s})\|\varepsilon\Delta\phi^1\|_{H^s}\\
\leq & C(\|\phi^1\|_{H^3}^2+|||\phi^1|||_{\varepsilon,s}^2).
\end{split}
\end{equation*}
For the second term $I^{\varepsilon}_{122}$, we have by integration by parts,
\begin{equation*}
\begin{split}
I^{\varepsilon}_{122}=&-\int\nabla(\frac{ n^0}{n^0+\varepsilon n^1})\varepsilon\partial^{\gamma}\nabla\phi^1 \partial_t\partial^{\gamma}\phi^1 -\int(\frac{ n^0}{n^0+\varepsilon n^1})\varepsilon\partial^{\gamma}\nabla\phi^1 \partial_t\nabla\partial^{\gamma}\phi^1\\
=&\int\nabla(\frac{ n^0}{n^\varepsilon}) \varepsilon\partial^{\gamma+\gamma_1}\nabla\phi^1 \partial_t\partial^{\gamma-\gamma_1}\phi^1 +\int\nabla\partial^{\gamma_1}(\frac{ n^0}{n^\varepsilon})\varepsilon\partial^{\gamma}\nabla\phi^1 \partial_t\partial^{\gamma-\gamma_1}\phi^1\\
&-\frac{\varepsilon}{2}\frac{d}{dt}\int\frac{ n^0}{n^\varepsilon} |\partial^{\gamma}\nabla\phi^1|^2 +\frac{\varepsilon}{2}\int\partial_t(\frac{ n^0}{n^\varepsilon}) |\partial^{\gamma}\nabla\phi^1|^2=:I^{\varepsilon}_{122i}
\end{split}
\end{equation*}
where $\gamma_1\leq \gamma$ is a multiindex with $|\gamma_1|=1$. By Lemma \ref{lem4} and Lemma \ref{lem3}, we have
\begin{equation*}
\begin{split}
\|\partial_t\partial^{\gamma-\gamma_1}\phi^1\|^2_{L^2}\leq & C(\|\partial_t\partial^{\gamma-\gamma_1}n^1\|^2+\|\phi^1\|^2_{H^{s-1}})\\
\leq & C(1+|||({\bf u}^1,\phi^1)|||_{\varepsilon,s}^2),
\end{split}
\end{equation*}
where $\gamma_1\leq \gamma,\ |\gamma_1|=1$ and hence $|\gamma-\gamma_1|=s-1$. On the other hand, by direct computation, we have
\begin{equation*}
\begin{split}
\|\nabla(\frac{ n^0}{n^\varepsilon})\|_{L^{\infty}}\leq & C(1+\varepsilon\|\nabla n^1\|_{L^{\infty}})\leq C(1+\varepsilon^2|||\phi^1|||_{\varepsilon,3}^2),\\
\|\nabla\partial^{\gamma_1}(\frac{n^0}{n^\varepsilon}) \|_{L^{\infty}}\leq & C(1+\varepsilon^2\|\nabla n^1\|^2_{L^{\infty}} +\varepsilon\|\nabla\partial^{\gamma_1}\phi^1\|_{L^{\infty}})\\
\leq & C(1+\varepsilon^2|||\phi^1|||_{\varepsilon,4}^2),
\end{split}
\end{equation*}
where we have used Lemma \ref{lem2} and Sobolev embedding. Hence, by H\"older inequality, we obtain
\begin{equation*}
\begin{split}
I^{\varepsilon}_{1221},I^{\varepsilon}_{1222}\leq & C\|\nabla(\frac{ n^0}{n^\varepsilon})\|_{L^{\infty}} \|\varepsilon\partial^{\gamma+\gamma_1}\nabla\phi^1\|_{L^2} \|\partial_t\partial^{\gamma-\gamma_1}\phi^1\|_{L^2}\\
&+C\|\nabla\partial^{\gamma_1}(\frac{ n^0}{n^\varepsilon}) \|_{L^{\infty}}\|\varepsilon\partial^{\gamma}\nabla\phi^1\|_{L^2} \|\partial_t\partial^{\gamma-\gamma_1}\phi^1\|_{L^2}\\
\leq & C(1+\varepsilon^2|||\phi^1|||_{\varepsilon,4}^2) (1+|||({\bf u}^1,\phi^1)|||_{\varepsilon,s}^2),
\end{split}
\end{equation*}
where we have used the boundedness of the Riesz operator. Similarly,
\begin{equation*}
\begin{split}
\|\partial_t(\frac{ n^0}{n^\varepsilon})\|_{L^{\infty}}\leq & C(1+\varepsilon^2\|\partial_tn^1\|^2_{L^{\infty}})\leq C(1+\varepsilon^2|||({\bf u}^1,\phi^1)|||_{\varepsilon,3}^2),
\end{split}
\end{equation*}
which yields that
\begin{equation*}
\begin{split}
I^{\varepsilon}_{1224}\leq & C(1+\varepsilon^2|||({\bf u}^1,\phi^1)|||_{\varepsilon,3}^2)|||\phi^1|||_{\varepsilon,s}^2.
\end{split}
\end{equation*}
Therefore, $I^{\varepsilon}_{122}$ is bounded by
\begin{equation*}
\begin{split}
I^{\varepsilon}_{122}\leq -\frac{\varepsilon}{2}\frac{d}{dt} \int\frac{n^0}{n^\varepsilon}|\partial^{\gamma}\nabla\phi^1|^2 +C(1+\varepsilon^2|||({\bf u}^1,\phi^1)|||_{\varepsilon,4}^2)(1+|||({\bf u}^1,\phi^1)|||_{\varepsilon,s}^2).
\end{split}
\end{equation*}
For the third term $I^{\varepsilon}_{123}$, we have
\begin{equation*}
\begin{split}
I^{\varepsilon}_{123}\leq & C\varepsilon^2\|\Delta\phi^1\|_{H^s}^2 +C\|[\partial^{\gamma},n^0]\partial_t\phi^1\|_{L^2}^2\\
\leq & C\varepsilon^2\|\Delta\phi^1\|_{H^s}^2 +C(\|\nabla n^0\|^2_{L^{\infty}}\|\partial_t\phi^1\|_{H^{s-1}}^2 +\|\partial_t\phi^1\|_{L^{\infty}}^2\|n^0\|^2_{H^s})\\
\leq & C(1+|||({\bf u}^1,\phi^1)|||_{\varepsilon,s}^2+|||({\bf u}^1,\phi^1)|||_{\varepsilon,3}^2),
\end{split}
\end{equation*}
thanks to Lemma \ref{Le-inequ} in the second inequality and Lemma \ref{lem4} and Lemma \ref{lem3} in the last inequality. In summary, $I^{\varepsilon}_{12}$ can be bounded by
\begin{equation}\label{equ37}
\begin{split}
I^{\varepsilon}_{12}\leq &-\frac{\varepsilon}{2}\frac{d}{dt}\int\frac{ n^0}{n^\varepsilon}|\partial^{\gamma}\nabla\phi^1|^2\\
&+C(1+\varepsilon^2|||({\bf u}^1,\phi^1)|||_{\varepsilon,4}^2)(1+|||({\bf u}^1,\phi^1)|||_{\varepsilon,s}^2+|||({\bf u}^1,\phi^1)|||_{\varepsilon,3}^2).
\end{split}
\end{equation}

\begin{itemize}
  \item \emph{Estimate of $I^{\varepsilon}_{13}$.}
\end{itemize}
For the third term $I^{\varepsilon}_{13}$, it is straightforward that
\begin{equation}\label{equ38}
\begin{split}
I^{\varepsilon}_{13}\leq C(1+\varepsilon^2\|\Delta\phi^1\|_{H^s}^2).
\end{split}
\end{equation}

\begin{itemize}
  \item \emph{Estimate of $I^{\varepsilon}_{14}$.}
\end{itemize}
For the fourth term $I^{\varepsilon}_{14}$, we have
\begin{equation}\label{equ39}
\begin{split}
|I^{\varepsilon}_{14}|\leq & C\varepsilon^2\|\Delta\phi^1\|_{H^s}^2 +C\varepsilon\|\partial_tR^1\|_{H^s}^2\\
\leq & C\varepsilon^2\|\Delta\phi^1\|_{H^s}^2 +C_1(\|\phi^1\|^2_{H^s}+\varepsilon\|\partial_t\phi^1\|^2_{H^{s}})\\
\leq & C\varepsilon^2\|\Delta\phi^1\|_{H^s}^2 +C_1(\|\phi^1\|^2_{H^s}+\|\partial_tn^1\|_{H^{s-1}}^2)\\
\leq & C(1+|||({\bf u}^1,\phi^1)|||^2_{\varepsilon,s}),
\end{split}
\end{equation}
where we have used H\"older inequality in the first inequality, Lemma \ref{lem1} in the second inequality, Lemma \ref{lem4} in the third inequality and Lemma \ref{lem3} in the last inequality. Here, we also have used the fact that $\|\partial_t\phi^1\|_{H^s}\approx\|\partial_t\phi^1\|_{H^{s-1}} +\|\partial_t\partial^{\alpha}\nabla\phi^1\|_{L^2}$ with $|\gamma|=s-1$ and $\|\phi^1\|_{H^{s-1}}\leq \|\phi^1\|_{H^{s}}$ for all integers $s>0$.

By \eqref{equ35}, using \eqref{equ36}, \eqref{equ37}, \eqref{equ38} and \eqref{equ39}, we obtain the estimate \eqref{equ40} for $I^{\varepsilon}$.
\end{proof}

Now, we can end the proof of Lemma \ref{lem8}.

\begin{proof}[End of proof of Lemma \ref{lem8}]
By using \eqref{equ12}, the estimates of \eqref{equ41}, \eqref{equ42}, \eqref{equ43} and \eqref{equ44}, and Lemma \ref{lem9} and \ref{lem10}, we close the proof of Lemma \ref{lem8}.
\end{proof}

Summarizing these lemmas, we obtain the following
\begin{proposition}\label{prop2}
Let $s\geq0$ be a positive integer, $(n^1,{\bf u}^1,\phi^1)$ be a smooth solution for the system \eqref{rem}. There exists $\varepsilon_1>0$ and $C,C'>0$ such that for any $0<\varepsilon<\varepsilon_1$, there holds
\begin{equation}\label{equ46}
\begin{split}
\varepsilon\|\nabla&{\bf u}^1(t)\|^2_{H^s}+\varepsilon\|\nabla\phi^1(t)\|^2_{H^s} +\varepsilon^2\|\Delta\phi^1(t)\|^2_{H^s}\\
\leq & C'(\varepsilon\|\nabla{\bf u}^1(0)\|^2_{H^s} +\varepsilon\|\nabla\phi^1(0)\|^2_{H^s} +\varepsilon^2\|\Delta\phi^1(0)\|^2_{H^s})\\
&+C\int_0^t(1+\varepsilon^2|||({\bf u}^1,\phi^1)|||^2_{\varepsilon,4})(1+|||({\bf u}^1,\phi^1)|||^2_{\varepsilon,s}+|||({\bf u}^1,\phi^1)|||^2_{\varepsilon,3})d\tau.
\end{split}
\end{equation}
where $C'$ depends only on $\sigma'$ and $\sigma''$.
\end{proposition}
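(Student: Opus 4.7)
The strategy is essentially the same as in the proof of Proposition \ref{prop1}: Proposition \ref{prop2} is a direct corollary of Lemma \ref{lem8}, obtained by integrating in time and summing over multi-indices. The key observation is that the three time-derivative terms appearing in \eqref{equ45} can all be moved to the left-hand side, producing a Gronwall-type integral inequality once one accounts for the uniform positivity and boundedness of $n^0$ and $n^\varepsilon$.

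First, I would rewrite \eqref{equ45} in Lemma \ref{lem8} in the equivalent form
\begin{equation*}
\begin{split}
\frac{d}{dt}\left\{ \frac{\varepsilon}{2}\|\partial^{\gamma}\nabla{\bf u}^1\|^2 + \frac{\varepsilon}{2}\int\frac{n^0}{n^\varepsilon}|\partial^{\gamma}\nabla\phi^1|^2 + \frac{\varepsilon^2}{2}\int\frac{1}{n^\varepsilon}|\partial^{\gamma}\Delta\phi^1|^2 \right\} \\
\leq C(1+\varepsilon^2|||({\bf u}^1,\phi^1)|||_{\varepsilon,4}^2)(1+|||({\bf u}^1,\phi^1)|||_{\varepsilon,s}^2+|||({\bf u}^1,\phi^1)|||_{\varepsilon,3}^2).
\end{split}
\end{equation*}
Then I would integrate from $0$ to $t$, which produces on the left the quantity evaluated at time $t$ minus the same quantity evaluated at time $0$.

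Next I would sum the resulting inequality over all multi-indices $\gamma$ with $|\gamma|\leq s$. Using the assumption \eqref{assumption1} that $\sigma'/2 < n^\varepsilon < 2\sigma''$ on $[0,T_\varepsilon]$ for $0<\varepsilon<\varepsilon_1$, together with $\sigma' < n^0 < \sigma''$ from the remark after Theorem \ref{thm1}, the weighted integrals satisfy
\begin{equation*}
\frac{\sigma'}{4\sigma''}\|\partial^{\gamma}\nabla\phi^1\|^2 \leq \int\frac{n^0}{n^\varepsilon}|\partial^{\gamma}\nabla\phi^1|^2 \leq \frac{2\sigma''}{\sigma'}\|\partial^{\gamma}\nabla\phi^1\|^2,
\end{equation*}
and similarly for the $\varepsilon^2\Delta\phi^1$ term. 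This yields, with some constant $C'$ depending only on $\sigma',\sigma''$, upper and lower bounds relating the weighted quantity to $\varepsilon\|\nabla{\bf u}^1\|_{H^s}^2+\varepsilon\|\nabla\phi^1\|_{H^s}^2+\varepsilon^2\|\Delta\phi^1\|_{H^s}^2$. This gives precisely \eqref{equ46}.

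Since Lemma \ref{lem8} already does all the heavy lifting, there is no genuine obstacle in this step; the only care required is in keeping track that the multiplicative constant arising from the comparison between the weighted and unweighted norms depends only on $\sigma'$ and $\sigma''$, so that $C'$ is independent of $\varepsilon$ and of the bootstrap constant $\tilde C$. The Gronwall closure itself is deferred: as in Proposition \ref{prop1}, the right-hand side still involves higher-order norms through $|||\cdot|||_{\varepsilon,4}$ and $|||\cdot|||_{\varepsilon,s}$, so \eqref{equ46} is not by itself a closed estimate but is designed to be combined with \eqref{equ11} of Proposition \ref{prop1} in the subsequent subsection.
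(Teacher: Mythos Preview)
Your proposal is correct and follows essentially the same approach as the paper: integrate \eqref{equ45} over $[0,t]$, sum over $|\gamma|\leq s$, and use the bounds $\sigma'<n^0<\sigma''$ and $\sigma'/2<n^{\varepsilon}<2\sigma''$ from \eqref{assumption1} to pass between the weighted and unweighted norms. Your added detail on the explicit comparison constants is a faithful elaboration of what the paper states in one sentence.
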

\begin{proof}
This is shown by integrating \eqref{equ45} over $[0,t]$ and summing them up for $|\gamma|\leq s$, and then using $\sigma'<n^0<\sigma''$ and ${\sigma'}/{2}<n^{\varepsilon}<2\sigma''$ for any $t\in [0,T_{\varepsilon}]$ in \eqref{blowup} and \eqref{assumption1} for $0<\varepsilon<\varepsilon_1$.
\end{proof}

\subsection{End of proof of Theorem \ref{thm}}
Now, we are in a good position to end the proof of Theorem \ref{thm}.
\begin{proof}
Let $s\geq 7$ be an integer. By Proposition \ref{prop1} and Proposition \ref{prop2} and recalling the definition of the norm \eqref{trinorm}, we obtain the following Gronwall type inequality
\begin{equation}\label{equ47}
\begin{split}
|||({\bf u}^1,\phi^1)(t)|||_{\varepsilon,s}^2\leq & C'C_{\varepsilon}(0)\\
&+C\int_0^t(1+\varepsilon^3|||({\bf u}^1,\phi^1)|||^3_{\varepsilon,5})(1+|||({\bf u}^1,\phi^1)|||^2_{\varepsilon,s})d\tau,
\end{split}
\end{equation}
where $C_{\varepsilon}(0)=|||({\bf u}^1,\phi^1)(0)|||_{\varepsilon,s}^2$ and we have used the fact that $|||\cdot|||_{\varepsilon,3}\leq |||\cdot|||_{\varepsilon,s}$ for $s\geq 3$. From \eqref{assumption}, there exists $\varepsilon_1>0$ such that for any $0<\varepsilon<\varepsilon_1$, $\varepsilon^3|||({\bf u}^1,\phi^1)|||^3_{\varepsilon,5}\leq 1$, and hence \eqref{equ47} yields
\begin{equation}\label{equ48}
\begin{split}
|||({\bf u}^1,\phi^1)(t)|||_{\varepsilon,s}^2\leq C_2C_{\varepsilon}(0) +C_2\int_0^t(1+|||({\bf u}^1,\phi^1)|||^2_{\varepsilon,s})d\tau,
\end{split}
\end{equation}
where $C_2=\max\{C',2C\}$. On the other hand, from Lemma \ref{lem2}, there exists constant $C_3>1$ such that
\begin{equation}\label{equ49}
\begin{split}
\|n^1(t)\|_{H^s}^2\leq C_3(1+|||\phi^1|||^2_{\varepsilon,s}).
\end{split}
\end{equation}

Let $C_0=\sup_{0<\varepsilon<1}C_{\varepsilon}(0)$. We let $\tilde C$ in \eqref{assumption} satisfy $\tilde C\geq 2C_3(1+C_2C_0)e^{C_2T_{\varepsilon}}$, then from \eqref{equ48}
\begin{equation*}
|||({\bf u}^1,\phi^1)|||_{\varepsilon,s}^2\leq (1+C_2C_0)e^{C_2T_{\varepsilon}}\leq \tilde C,
\end{equation*}
and hence from \eqref{equ49}
\begin{equation*}
\begin{split}
\|n^1(t)\|_{H^s}^2\leq C_3(1+(1+C_2C_0)e^{C_2T_{\varepsilon}})\leq \tilde C.
\end{split}
\end{equation*}
Then by the continuity principle, it is standard to get the uniform in $\varepsilon$ estimates for $|||(n^1,{\bf u}^1,\phi^1)|||_{\varepsilon,s}$. In particular, for every $T'<T$, $\varepsilon^{-1}(n^{\varepsilon}-n^0)$ and $\varepsilon^{-1}({\bf u}^{\varepsilon}-{\bf u}^0)$ are bounded in $L^{\infty}([0,T'];H^{s})$ and $L^{\infty}([0,T'];{\bf H}^{s})$, respectively, uniformly in $\varepsilon$ for $\varepsilon$ small enough for some $s<s'$.
\end{proof}

\appendix
\renewcommand{\theequation}{\Alph{section}.\arabic{equation}}

\section{Commutator estimates}
\setcounter{equation}{0}
We give two important inequalities which are widely used throughout this paper \cite[Lemma X1 and Lemma X4]{KP88}.
\begin{lemma}\label{Le-inequ}
Let $\alpha$ be any multi-index with $|\alpha|=k$ and $p\in (1,\infty)$. Then there exists some constant $C>0$ such that
\begin{equation}\label{commutator}
\begin{split}
\|\partial_x^{\alpha}(fg)\|_{L^p}\leq & C\{\|f\|_{L^{p_1}}\|g\|_{\dot H^{k,p_2}}+\|f\|_{\dot H^{k,p_3}}\|g\|_{L^{p_4}}\},\\
\|[\partial_x^{\alpha},f]g\|_{L^p}\leq & C\{\|\nabla f\|_{L^{p_1}}\|g\|_{\dot H^{k-1,p_2}}+\|f\|_{\dot H^{k,p_3}}\|g\|_{L^{p_4}}\},
\end{split}
\end{equation}
where $f,g\in \mathcal{S}$, the Schwartz class and $p_2,p_3\in (1,+\infty)$ such that
\begin{equation*}
\begin{split}
\frac1p=\frac1{p_1}+\frac1{p_2}=\frac1{p_3}+\frac1{p_4}.
\end{split}
\end{equation*}
\end{lemma}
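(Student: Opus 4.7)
The plan is to prove both Kato--Ponce type inequalities via the Littlewood--Paley / Bony paraproduct calculus, which is the standard modern route to such estimates. Let $\{\Delta_j\}_{j\ge -1}$ denote a smooth dyadic decomposition with $\Delta_j$ localising to frequencies $|\xi|\sim 2^j$, and let $S_j=\sum_{k\le j-3}\Delta_k$ be the low-frequency truncation. Bony's decomposition writes
\begin{equation*}
fg \;=\; T_f g + T_g f + R(f,g) \;=\; \sum_j S_{j-3}f\,\Delta_j g \;+\; \sum_j S_{j-3}g\,\Delta_j f \;+\; \sum_j \Delta_j f\,\widetilde{\Delta_j g},
\end{equation*}
where $\widetilde{\Delta_j g}$ is a sum over a few frequencies comparable to $2^j$. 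Throughout I would freely use the boundedness of $\Delta_j,S_j$ on $L^p$ for $1<p<\infty$, Bernstein's inequality, and the Fefferman--Stein vector-valued maximal inequality to handle $\ell^2$-sums of frequency pieces.

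For the first inequality, I would apply $\partial^\alpha$ separately to each of the three pieces. In $T_f g$ the frequencies of $\Delta_j g$ dominate, so $\partial^\alpha$ effectively falls on $g$ and a Bernstein/maximal-function argument gives $\|\partial^\alpha T_f g\|_{L^p}\lesssim \|f\|_{L^{p_1}}\|g\|_{\dot H^{k,p_2}}$. The paraproduct $T_g f$ is handled symmetrically to yield $\|f\|_{\dot H^{k,p_3}}\|g\|_{L^{p_4}}$. For the remainder $R(f,g)$, $\Delta_j f$ and $\widetilde{\Delta_j g}$ share comparable dyadic scale, and splitting the $k$ derivatives between the two factors and applying H\"older with exponents $1/p=1/p_1+1/p_2=1/p_3+1/p_4$ recovers either one of the two terms on the right-hand side.

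For the commutator inequality, the key is the cancellation
\begin{equation*}
[\partial^\alpha,f]g \;=\; \partial^\alpha(fg)-f\,\partial^\alpha g.
\end{equation*}
In the paraproduct decomposition, the only piece that behaves like $f\,\partial^\alpha g$ at leading order is $T_f \partial^\alpha g$, which arises from $\partial^\alpha T_f g$ modulo commutators where $\partial^\alpha$ lands on the low-frequency cut-off $S_{j-3}f$. Those commutator terms are precisely where one can extract at least one derivative onto $f$, replacing $\|f\|_{L^{p_1}}$ by $\|\nabla f\|_{L^{p_1}}$ and $\|g\|_{\dot H^{k,p_2}}$ by $\|g\|_{\dot H^{k-1,p_2}}$. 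The $T_g f$ and $R(f,g)$ pieces carry derivatives on $f$ automatically and contribute $\|f\|_{\dot H^{k,p_3}}\|g\|_{L^{p_4}}$ as in the product estimate.

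The main obstacle is the bookkeeping required to make the cancellation in the commutator rigorous: one must express $\partial^\alpha T_f g - T_f \partial^\alpha g$ as an explicit sum of operators in which $\partial$ actually differentiates $S_{j-3}f$, and then show each resulting operator is a Coifman--Meyer-type bilinear multiplier whose kernel is still bounded on the relevant product of Lebesgue spaces. Once this cancellation is exhibited cleanly, the remaining estimates are routine applications of H\"older's inequality, Bernstein's inequality, and the vector-valued maximal theorem, summed in $j$ via Minkowski and the square-function characterisation of $L^p$.
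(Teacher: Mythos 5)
The paper does not prove this lemma at all: it is quoted verbatim from Kato and Ponce \cite{KP88} (their Lemmas X1 and X4), so there is no internal argument to compare against, and your task is really to reprove a cited result. Your Littlewood--Paley/Bony paraproduct plan is the standard modern route to these estimates and is essentially sound, but it is genuinely different from both the cited proof and the most economical one for the statement as written. Kato and Ponce work with the Bessel potential $J^s=(1-\Delta)^{s/2}$ and reduce matters to the Coifman--Meyer theory of multilinear singular integrals; since the lemma here involves only integer-order classical derivatives $\partial^\alpha$ with $|\alpha|=k$, one can avoid all of that machinery: expand $\partial^{\alpha}(fg)$ (resp.\ $[\partial^{\alpha},f]g$) by Leibniz, observe that in the commutator every surviving term carries at least one derivative on $f$, and estimate each term $\partial^{\beta}f\,\partial^{\alpha-\beta}g$ by H\"older followed by the Gagliardo--Nirenberg interpolation inequality $\|\partial^{\beta}f\|_{L^{q}}\leq C\|f\|_{L^{p_1}}^{1-|\beta|/k}\|f\|_{\dot H^{k,p_3}}^{|\beta|/k}$ (with $1/q$ the corresponding convex combination of $1/p_1$ and $1/p_3$) and its analogue for $g$; this is shorter and closer in spirit to what the paper actually needs. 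Your paraproduct route buys generality (it extends to fractional $s$) at the cost of exactly the bookkeeping you flag as the ``main obstacle'': the identity expressing $\partial^{\alpha}T_{f}g-T_{f}\partial^{\alpha}g$ as a sum of bounded bilinear multipliers in which a derivative genuinely lands on $S_{j-3}f$ is asserted rather than carried out, and that is the heart of the commutator estimate. A second soft spot is the remainder $R(f,g)$: since $\Delta_jf\,\widetilde{\Delta_jg}$ is supported at output frequencies $\lesssim 2^{j}$ rather than $\sim 2^{j}$, the pieces overlap in frequency and the sum over $j$ must be organized by output block (paying $2^{qk}$ and summing the geometric series in $q\lesssim j$, which uses $k\geq 1$); ``splitting the $k$ derivatives between the two factors'' gives the right dyadic bound but does not by itself justify the resummation. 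Neither issue is fatal, but as written the argument is a plan rather than a proof.
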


\end{document}